\Crefname{assumption}{Assumption}{Assumptions}
\newcommand\StartAppendixEntries{}
  \renewcommand\StartAppendixEntries{\value{tocdepth}=-10000\relax}%
  \edef\maintocdepth{\the\value{tocdepth}}%
  \renewcommand\StartAppendixEntries{\value{tocdepth}=\maintocdepth\relax}%
\newcommand*\appendixwithtoc{%
  \cleardoublepage
  \appendix
  \addtocontents{toc}{\protect\StartAppendixEntries}
  \listofatoc
}
\def\eqref#1{equation~\ref{#1}}
\def\1{\bm{1}}
\DeclareMathAlphabet{\mathsfit}{\encodingdefault}{\sfdefault}{m}{sl}
\SetMathAlphabet{\mathsfit}{bold}{\encodingdefault}{\sfdefault}{bx}{n}
\theoremstyle{plain}
\newtheorem{theorem}{Theorem}[section]
\newtheorem{lemma}[theorem]{Lemma}
\theoremstyle{definition}
\theoremstyle{remark}
\newcommand{\tool}[1]{\texttt{{CodeAgent}}\xspace}
\title{\texorpdfstring{\includegraphics[height=1em]{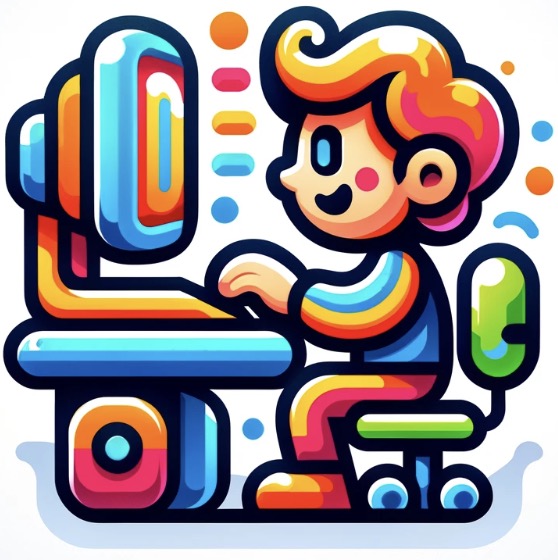}}{} \tool{}: Autonomous Communicative Agents for Code Review}
\author[1]{Xunzhu Tang}
\author[2]{Kisub Kim}
\author[1]{Yewei Song}
\author[3]{Cedric Lothritz}
\author[4]{Bei Li}
\author[5]{Saad Ezzini}
\author[6,*]{\\Haoye Tian}
\author[1]{Jacques Klein}
\author[1]{Tegawendé F. Bissyandé}
\affil[1]{University of Luxembourg}
\affil[2]{Singapore Management University}
\affil[3]{Luxembourg Institute of Science and Technology}
\affil[4]{Northeastern University}
\affil[5]{Lancaster University}
\affil[6]{The University of Melbourne}
\begin{document}
\maketitle

\renewcommand{\thefootnote}{\fnsymbol{footnote}} 
\footnotetext[1]{Corresponding author.} 
\renewcommand{\thefootnote}{\arabic{footnote}}

\begin{abstract}

Code review, which aims at ensuring the overall quality and reliability of software, is a cornerstone of software development. Unfortunately, while crucial, Code review is a labor-intensive process that the research community is looking to automate. Existing automated methods rely on single input-output generative models and thus generally struggle to emulate the collaborative nature of code review. This work introduces \tool{}, a novel multi-agent Large Language Model (LLM) system for code review automation. \tool{} incorporates a supervisory agent, QA-Checker, to ensure that all the agents' contributions address the initial review question.
We evaluated \tool{} on critical code review tasks: (1) detect inconsistencies between code changes and commit messages, (2) identify vulnerability introductions, (3) validate code style adherence, and (4) suggest code revision. The results demonstrate \tool{}'s effectiveness, contributing to a new state-of-the-art in code review automation. Our data and code are publicly available (\url{https://github.com/Code4Agent/codeagent}). 
\end{abstract}

\section{Introduction} 
% \cl{Please make sure the manuscript is compiling without errors after you're done working on it. }
% \cl{I tried shortening the text as well as I could without removing information, but as I see it we will still overshoot the page limit by a quarter to half a page, seeing as there are still two tables in the main manuscript that need to be larger. Maybe you'll need to move some more stuff to the appendix}
% Code review is a key activity in software engineering. 
Code review~\cite{bacchelli2013expectations,bosu2013impact,davila2021systematic} implements a process wherein software maintainers examine and assess code contributions to ensure quality and adherence to coding standards, and identify potential bugs or improvements.
% As it is one of the fundamental activities in software engineering, researchers~\cite{bacchelli2013expectations,bosu2013impact,davila2021systematic} have widely studied.
% Code review engages multiple team members, with diverse expertise and experience, collaborating to assess the submitted code along various dimensions.
% These dimensions encompass ensuring alignment with existing code in terms of style~\cite{han2020does}, verifying the consistency between a code change and the associated commit message~\cite{zhang2022consistent}, and detecting potential vulnerabilities~\cite{braz2022less}. 
% This review effort is critical for ensuring the code's stability, quality, and readability. 
% Historically, code review has been an intensive manual exercise. 
% Nowadays, however, software projects evolve at a rapid pace with a high frequency of commits and pull requests, making manual code reviews increasingly infeasible. 
% For example, in 2022, GitHub witnessed over 3.5 billion contributions, including commits and pull requests~\cite{commitpr202301,commitpr202302}.
% Reviewing these contributions before they are integrated into the code base requires automated tools to ease the workload of the project contributors and reviewers.
In recent literature, various approaches~\cite{tufano2021towards,tufano2022using} have been proposed to enhance the performance of code review automation.
%\tb{the next sentence is unclear. Where is the contradiction you want to highlight? }\daniel{I want highlight previous work they didn't consider the teamwork while they are doing code review. and I rewrote the following sentence to make it clearer}
Unfortunately, major approaches in the field ignore a fundamental aspect: the code review process is inherently interactive and collaborative~\cite{bacchelli2013expectations}. Instead, they primarily focus on rewriting and adapting the submitted code~\cite{watson2022systematic,thongtanunam2022autotransform,staron2020using}. 
% \cl{this sentence is difficult to understand. what do you mean by 'recognize a higher view'? 'consider a high-level view of the project', maybe?}
%\tb{this sentence is not clear. What is the dynamic nature of code review that is missed by existing approaches? }\daniel{I rewrote the following sentence}
In this respect, an effective approach should not only address how to review the submitted code for some specific needs (e.g., vulnerability detection~\cite{chakraborty2021deep,  yang2024security}). Still, other non-negligible aspects of code review should also be considered, like detecting issues in code formatting or inconsistencies in code revision~\cite{oliveira2023systematic, tian2022change, panthaplackel2021deep}. 
However, processing multiple sub-tasks requires interactions among employees in different roles in a real code review scenario, which makes it challenging to design a model that performs code review automatically.

% To address the challenges of complex software collaboration systems,
Agent-based systems are an emerging paradigm and a computational framework in which autonomous entities (aka agents) interact with each other~\cite{li2023camel,qian2023communicative,hong2023metagpt} to perform a task.
Agent-based approaches have been proposed to address a spectrum of software engineering tasks~\cite{qian2023communicative,zhang2024autocoderover,tang2023just,tian2023chatgpt}, moving beyond the conventional single input-output paradigm due to their exceptional ability to simulate and model complex interactions and behaviors in dynamic environments~\cite{xi2023rise,yang2024cref,wang2023unleashing}.
Recently, multi-agent systems have leveraged the strengths of diverse agents to simulate human-like decision-making processes~\cite{du2023improving,liang2023encouraging,park2023generative}, leading to enhanced performance across various tasks~\cite{chen2023agentverse,li2023metaagents,hong2023metagpt}. This paradigm is well-suited to the challenge of code review, where multiple reviewers, each with diverse skills and roles, collaborate to achieve a comprehensive review of the code..
%This paradigm could fit the code review domain as the interactions between multiple reviewers and developers reflect such a paradigm. 

% Notably, the advent of multi-agent collaboration stands out as a key innovation, especially in simulating human-like behaviors ~\cite{du2023improving,liang2023encouraging,park2023generative} and harnessing the strengths of various agents~\cite{chen2023agentverse,li2023metaagents,hong2023metagpt}.

% \tian{What are the gaps/limitations in the literature?}
% \cl{maybe we should move some of the text from the intro to related works. The intro is quite long, even without all of our comments...}\daniel{I agree with that}

{\bf This paper.} Drawing from the success of agent-based collaboration, we propose a {\bf multi-agent-based framework} \tool{} to simulate the dynamics of a collaborative team engaged in the code review process, incorporating diverse roles such as code change authors, reviewers, and decision makers.
In particular, A key contribution of \tool{} is that we address the challenge of prompt drifting~\cite{zheng2024neo,yang2024adapting}, a common issue in multi-agent systems and Chain-of-Thought (CoT) reasoning. This issue, characterized by conversations that stray from the main topic, highlights the need for strategies to maintain focus and coherence~\cite{humanfirst,chae2023dialogue}.
This drift, often triggered by the model-inspired tangents or the randomness of Large Language Models (LLMs), necessitates the integration of a supervisory agent. 
We employ an agent named QA-Checker  (for "Question-Answer Checker") that monitors the conversation flow, ensuring that questions and responses stay relevant and aligned with the dialogue's intended objective.
Such an agent not only refines queries but also realigns answers to match the original intent, employing a systematic approach grounded in a mathematical framework. 
% This involves leveraging a quality assessment function $\mathcal{Q}$ and the Newton-Raphson optimization method~\cite{ypma1995historical} to iteratively guide the conversation towards optimal coherence and relevance.
% \cl{don't put too many technical details in the intro. put those in the section where you describe the component. E.g. here just say at a high level what the purpose of the qa-checker is and put details about limitations of LLMs, theory stuff and implementation into Section 3.3}

To evaluate the performance of \tool{}, we first assess its effectiveness for typical review objectives such as detecting vulnerabilities~\ref{sec:va} and validating the consistency and alignment of the code format~\ref{sec:cafa}.
% \cl{what do you mean by "we measure the advanced capabilities of finding vulnerabilities"? the sentence does not make sense to me}
We then compare \tool{} with state-of-the-art generic and code-specific language models like ChatGPT~\cite{chatgpt} and CodeBERT~\cite{codebert}. Finally, we assess the performance of \tool{} compared to the state-of-the-art tools for code revision suggestions~\cite{tufano2021towards,thongtanunam2022autotransform,tufano2022using}.
% These dimensions encompass ensuring alignment with existing code in terms of style~\cite{han2020does}, verifying the consistency between a code change and the associated commit message~\cite{zhang2022consistent}, and detecting potential vulnerabilities~\cite{braz2022less}. 
Since each of these related works presents a specific dataset, we also employ them toward a fair comparison. Additionally, we also collect pull requests from GitHub, featuring an extensive array of commits, messages, and comments to evaluate advanced capabilities.% \kisub{Please check if it's consistent with the purposes}
% We employ two distinct sources of data.
% The \textbf{first} encompasses pre-existing datasets: Trans-Review$_{data}$~\cite{tufano2021towards}, AutoTransform$_{data}$~\cite{thongtanunam2022autotransform}, and T5-Review$_{data}$~\cite{tufano2022using}, which have been leveraged by leading research in the field. 
% The \textbf{second} data source comprises a \kisub{new?} collection of Pull Requests (PRs) from GitHub, featuring an extensive array of commits, messages, and comments across nine programming languages post-April 2023.
% \tian{to avoid the risk of data leakage for evaluating LLM-based approach?}.
% These languages include Python, Java, Go, C++, JavaScript, C, C\#, PHP, and Ruby. This diverse dataset is specifically curated for testing format analysis. \cl{Again, shorten this and put details in the Datasets section}
% \kisub{will continue from here.}
The experimental results reveal that \tool{} significantly outperforms the state-of-the-art, achieving a 41\% increase in hit rate for detecting vulnerabilities.
% (Rate$_{cr}$ in Table~\ref{tab:agent-va-finding}) 
% , confirming vulnerable issues within merged commits across the nine programming languages (on task~\ding{183})\tian{I don't get it. What's this? what's the task}.  
\tool{} also excels in consistency checking and format alignment, outperforming the target models.
% our backbone ChatGPT in overall metrics.
% \tian{Why this comes out suddenly}\daniel{ChatGPT is our backbone, we can also delete this sentence.}
Finally, \tool{} showcases its robustness for code revision by presenting superior average edit progress.

We summarize our contributions as follows:
% \cl{it's odd to see the phrase "in conclusion" in the introduction. Also, I feel like you can remove a lot of the text in the intro and simply replace it with this list.}\daniel{great}

\begin{itemize}
    \item To the best of our knowledge, we are the first to propose an autonomous agent-based system for practical code review in the field of software maintenance.
    \item We build a new dataset comprising 3\,545 real-world code changes and commit messages. This dataset, which includes all relevant files and details in a self-contained format, is valuable for evaluating advanced code review tasks such as vulnerability detection, code style detection, and code revision suggestions.
    % performance in various areas including consistency detection between commits and commit messages, .
    \item We demonstrate the effectiveness of the QA-Checker. This agent monitors the conversation flow to ensure alignment with the original intent, effectively addressing the common prompt drifting issues in multi-agent systems. %\kisub{Should we additionally mention the role splitting with multiple agents?}\daniel{no need, I think}
    % \item The evaluation shows that \tool{} confirms 104 (23.20\% higher) more vulnerable issues. 
    % Compared to state-of-the-art, \tool{} improves 5.62 \% and 4.00 \% of recall and F1-Score, respectively, for the consistency detection between commit and commit message. 
    % \tool{} also boosts 15.96 \% and 10.45 \% of the recall score and F1-Score, respectively, for the format consistency detection. 
    % On the code revision task, \tool{} surpasses the state-of-the-art from 29.80 \% to 31.60 \% of the Edit Progress (EP) metric~\cite{zhou2023generation} on average.
\end{itemize}

Experimental evaluation highlights the performance of \tool{}:  In vulnerability detection, \tool{} outperforms GPT-4 and CodeBERT by 3 to 7 percentage points in terms of the number of vulnerabilities detected. For format alignment, \tool{} outperforms ReAct by approximately 14\% in recall for inconsistency detection. On the code revision task, \tool{} surpasses the state of the art in software engineering literature, achieving an average performance improvement of about 30\% in the Edit Progress metric~\cite{zhou2023generation}. 
% \tool{} offers an average performance improvement of about 30\%.
%\input{sections/tasksanddefinitions}
\section{\tool{}}
% \yw{line number it}
This section details the methodology behind our \tool{} framework. We discuss tasks and definition in Sec~\ref{sec:taskdefinition}, pipeline in Section~\ref{sec:pipeline}, defined role cards in Section~\ref{sec:cardinfo}, and the design of the QA-Checker in Sec~\ref{sec:qachecker}.

\begin{figure*}[ht!]
    \centering
    \includegraphics[width=1\linewidth]{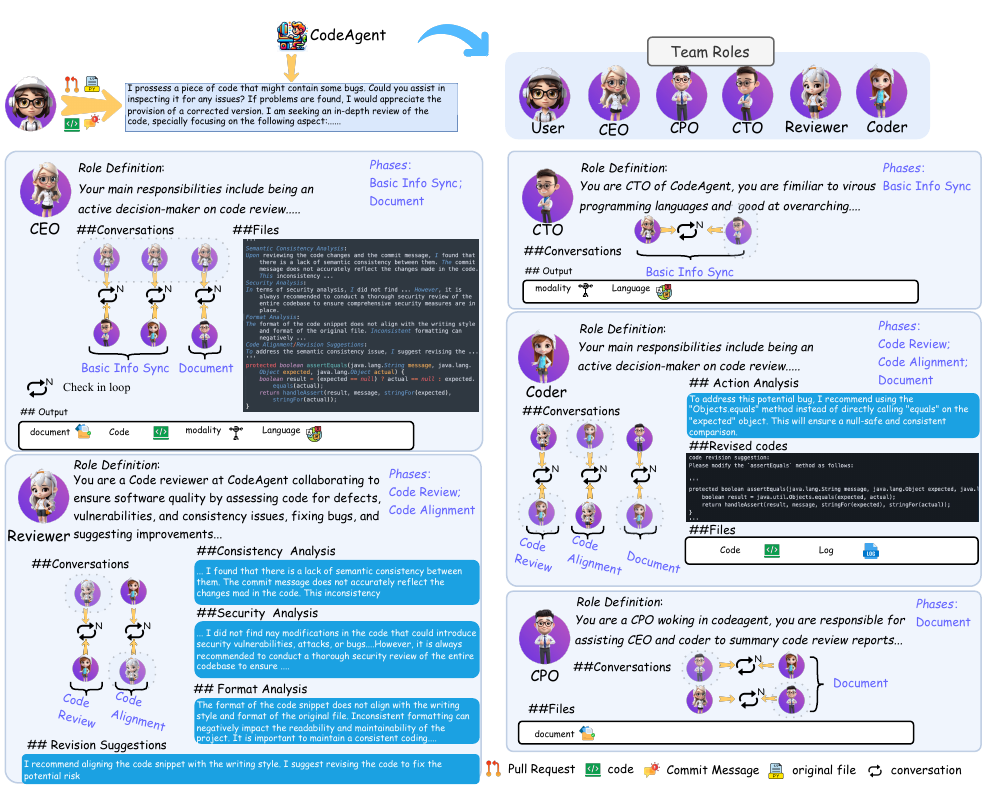}
    \caption{\textbf{A Schematic diagram of role data cards of simulated code review team and their conversations within \tool{}}. We have six characters in \tool{} across four phases, including ``Basic Info Sync", ``Code Review", ``Code Alignment", and ``Document". Code review is a kind of collaboration work, where we design conversations between every two roles for every step to complete the task.
    }
    \label{fig:userdatacard}
\end{figure*}

\begin{figure*}[ht]
    \centering
    \includegraphics[width=\linewidth]{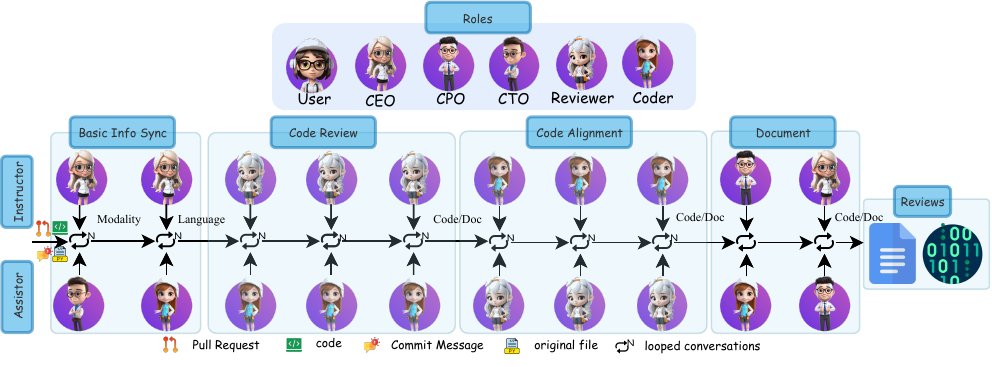}
    \caption{\tool{}'s pipeline/scenario of a full conversation during the code review process among different roles.
    ``Basic Info Sync'' demonstrates the basic information confirmation by the CEO, CTO, and Coder; ``Code Review'' shows the actual code review process; ``Code Alignment'' illustrates the potential code revision; and ``Document'' represents the summarizing and writing conclusion for all the stakeholders.
    All the conversations are being ensured by the Quality Assurance checker until they reach the maximum dialogue turns or meet all the requirements.}
    \label{fig:chatchain}
\end{figure*}
\subsection{Tasks} \label{sec:taskdefinition}
% \tb{is this section finished? can we merge the numbered items with the CA, VA, FA? }%\cl{unusual to have such a short section. Wouldn't this be better as a subsection of Section 3?}
% \daniel{it's finished}
We define \textit{CA}, \textit{VA}, \textit{FA}, and \textit{CR} in as following: %Table~\ref{tab:tasks}. %\daniel{done}

\noindent
\textbf{\textit{CA}}~\cite{zhang2022consistent}: Consistency analysis between code change and commit message; the task is to detect cases where the commit message accurate describes (in natural language) the intent of code changes (in programming language). 

\noindent
\textbf{\textit{VA}}~\cite{braz2022less}: Vulnerability analysis; the task is to identify cases where the code change introduces a vulnerability in the code. 

\noindent
\textbf{\textit{FA}}~\cite{han2020does}: Format consistency analysis between commit and original files; the task is to validate that the code change formatting style is not aligned with the target code. 

\noindent
\textbf{\textit{CR}}~\cite{zhou2023generation}: Code revisions; this task attempts to automatically suggest rewrites of the code change to address any issue discovered. 
    
% \begin{table}[htb]
% \caption{Tasks and Definitions \cl{this table is very hard to read in A4 format}}
% \label{tab:tasks}
% % \renewcommand\tabcolsep{2.8pt}
% % \renewcommand\arraystretch{1.2}
% \resizebox{0.5\textwidth}{!}{%
% \begin{tabular}{c|c}
% \hline

% \hline

% \hline

% \hline
% Index &  Tasks\\ 
% \hline

% \hline

% \hline

% \hline

% \hline

% \hline
% Short Term & Definition \\
% \hline

% \hline

% \hline

% \hline

% \hline
% \end{tabular}%
%  }
% \end{table}

\subsection{Pipeline} 
\label{sec:pipeline}
% \cl{it would be better to describe the pipeline before the role card definition to make it less confusing. For instance, you're mentioning phases in the role card before describing what a phase is in the pipeline section.}
% \kisub{I agree with Cedric. 
% To do that, we will have to change the structure. 
% I would suggest omitting the Section 2.1. 
% First, the defined tasks are just for proof of concept, which has to be in the Experimental Design Section (\tool{} is not just for those but we evaluate our tool with a few tasks to provide evidence). 
% Second, we will need to provide the same order, e.g., 4.1 is about `VA' while index 1 in Table 1 is Semantic consistency detection. 
% }

We defined six characters and four phases for the framework. 
The roles of the characters are illustrated in Figure~\ref{fig:userdatacard}.
Each phase contains multiple conversations, and each conversation happens between agents.
The four phases consist of  
% As shown in Fig.~\ref{fig:chatchain}, the \tool{} framework has four phases: 
% \ding{172}
1) Basic Info Sync, containing the roles of chief executive officer ({\it CEO}), chief technology officer ({\it CTO}), and Coder to conduct modality and language analysis; 
% \ding{173}
2) Code Review, asking the {\it Coder} and {\it Reviewer} for actual code review (i.e., target sub-tasks); 
% \ding{174}
3) Code Alignment, supporting the {\it Coder} and {\it Reviewer} to correct the commit through code revision and suggestions to the author; and 
% \ding{175}
4) Document, finalizing by synthesizing the opinions of the {\it CEO}, {\it CPO} (Chief Product Officer), {\it Coder}, and {\it Reviewer} to provide the final comments. 
% \cl{maybe, instead of using the numbers in circles to enumerate the phases, use normal numbers or letters instead. I fear this could be confusing seeing as you already use black circles with numbers for the tasks... Also, put the numbers for each phase in Figure 2, for clarity.}
% As shown in Figure~\ref{fig:chatchain} and user data cards in Figure~\ref{fig:userdatacard}. 
In addition to six defined roles, the proposed architecture of \tool{} consists of phase-level and conversation-level components. The waterfall model breaks the code review process at the phase level into four sequential phases. At the conversation level, each phase is divided into atomic conversations. These atomic conversations involve task-oriented role-playing between two agents, promoting collaborative communication. One agent works as an instructor and the other as an assistant. Communication follows an instruction-following style, where agents interact to accomplish a specific subtask within each conversation, and each conversation is supervised by QA-Checker. 
QA-Checker is used to align the consistency of questions and answers between the instructor and the assistant in a conversation to avoid digression. QA-Checker will be introduced in Section~\ref{sec:qachecker}.

Figure~\ref{fig:chatchain} shows an illustrative example of the \tool{} pipeline. \tool{} receives the request to do the code review with the submitted commit, commit message, and original files. 
In the first phase, \textit{CEO}, \textit{CTO}, and \textit{Coder} will cooperate to recognize the modality of input (e.g., document, code) and language (e.g., Python, Java and Go). 
In the second phase, with the help of \textit{Coder}, \textit{Reviewer} will write an analysis report on consistency analysis, vulnerability analysis, format analysis and suggestions for code revision. 
In the third phase, based on analysis reports, \textit{Coder} will align or revise the code if any incorrect snippets are identified with assistance from \textit{Reviewer}. 
$Coder$ cooperates with \textit{CPO} and \textit{CEO} to summarize the document and codes about the whole code review in the final phase.

\subsection{Role Card Definition} \label{sec:cardinfo}
As shown in Figure~\ref{fig:userdatacard}, we define six characters in our simulation system (\tool{}), including {\it User}, {\it CEO},  {\it CPO}, {\it CTO}, {\it Reviewer}, {\it Coder}, and they are defined for different specific tasks. 

% For each, we define a role card, which contains: 1) The role name is put on the left-upper corner of each card; 2) The phases of the role involved are put on the right-upper corner of each card; 3) On each role card, we show the role-involved conversation and collaborative roles; 4) We show the intermediate output of the role on the right-hand side of the card; and 5) Finally, we put the corresponding files or content out of conversations on the bottom of the card. \cl{Rather than having a descriptive text with locations, it would be more clear and less space-consuming to frame and enumerate the different parts directly in the figure}

All tasks are processed by the collaborative work of two agents in their multi-round conversations. For example, as a role {\it Reviewer}, her responsibility is to do the code review for given codes and files in three aspects (tasks \textit{CA}, \textit{VA}, and \textit{FA} in Sec~\ref{sec:taskdefinition}) and provide a detailed description of observation. {\it Reviewer}'s code review activity is under the assistance with {\it Coder} as shown in Figure~\ref{fig:chatchain}. Meanwhile, with the Reviewer's assistance, {\it Coder} can process the code revision as shown in the `Revised codes' part in the {\it Coder} card in Figure~\ref{fig:userdatacard}. Apart from {\it Reviewer}, {\it Coder} also cooperates with {\it CTO} and {\it CEO} in the simulated team. 

Each role and conversation, input and output of each conversation is designed in Figure~\ref{fig:userdatacard}. Further information about role definition details is provided in our Appendix-Section~\ref{sec:roledefine}. 

\subsection{Self-Improving CoT with QA Checker}\label{sec:qachecker}
\begin{figure}[ht]
    \centering
    \includegraphics[width=1\linewidth]{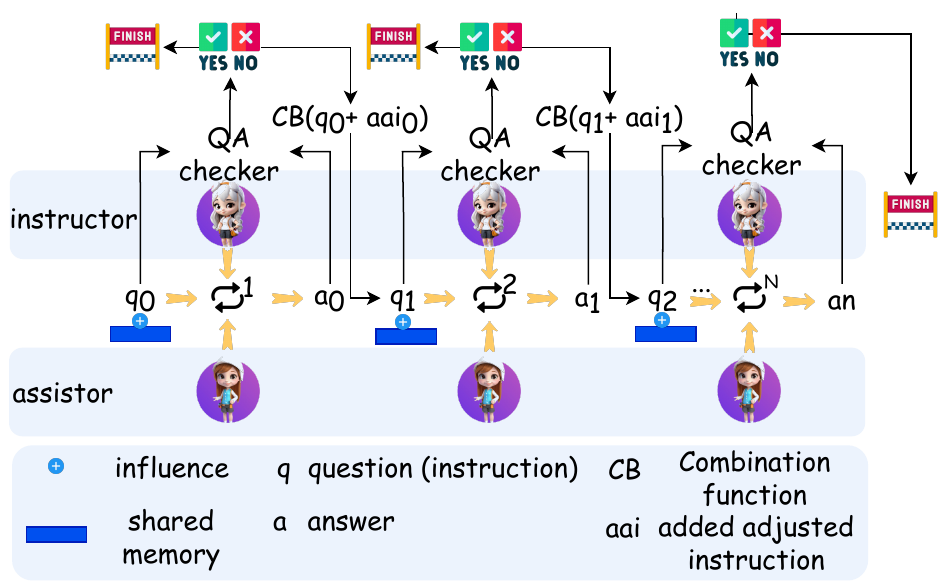}
    \caption{This diagram shows the architecture of our designed Chain-of-Thought (CoT): Question-Answer Checker (QA-Checker). 
    }
    \label{fig:qachecker}
\end{figure}
QA-Checker is an instruct-driven agent, designed to fine-tune the question inside a conversation to drive the generated answer related to the question. 
As shown in Figure~\ref{fig:qachecker}, the initial question (task instruction) is represented as $q_0$, and the first answer of the conversation between $Reviewer$ and $Coder$ is represented as $a_0$. 
If QA-Checker identifies that $a_0$ is inappropriate for $q_0$, it generates additional instructions attached to the original question (task instruction) and combines them to ask agents to further generate a different answer. 
The combination in Figure~\ref{fig:qachecker} is defined as $q_1 = CB(q_0+aai_0)$, where $aai_0$ is the additional instruction attached. 
The conversation between two agents is held until the generated answer is judged as appropriate by QA-Checker or it reaches the maximum number of dialogue turns.

\paragraph{Theoretical Analysis of QA-Checker in Dialogue Refinement}

The QA-Checker is an instruction-driven agent, crucial in refining questions and answers within a conversation to ensure relevance and precision. Its operation can be understood through the following lemma and proof in Appendix \ref{sec:qaalgorithm}.

\section{Experimental Setup}
We evaluate the performance of \tool{} through various qualitative and quantitative experiments across nine programming languages, using four distinct metrics. In this section, we will discuss experimental settings, including datasets, metrics, and baselines. For more information, please see Appendix~\ref{sec:expsetdetail}.

\subsection{Datasets}
To conduct a fair and reliable comparison for the code revision task, we employ the same datasets (i.e., Trans-Review$_{data}$, AutoTransform$_{data}$, and T5-Review$_{data}$) as the state-of-the-art study~\cite{zhou2023generation}.
Furthermore, we collect and curate an additional dataset targeting the advanced tasks. 
% For our research, we leverage datasets from prior studies as referenced in the state-of-the-art~\cite{zhou2023generation}. 
% Specifically, we utilized Trans-Review$_{data}$, AutoTransform$_{data}$, and T5-Review$_{data}$ to assess the edit progress (EP) metric which is also designed by them~\cite{zhou2023generation}. 
% As shown in Table~\ref{tab:ca_fa_comparison}, in terms of new data collection, which is called {\bf codeData}, is collected using the GitHub REST API, encompasses over 3,545 commits and 2,933 pull requests from more than 180 projects in nine programming languages (Python, Java, Go, C++, JavaScript, C, C\#, PHP, and Ruby).\cl{Rephrase this sentence, it's very convoluted} 
Table~\ref{tab:ca_fa_comparison} shows our new dataset which includes over 3,545 commits and 2,933 pull requests from more than 180 projects, spanning nine programming languages: Python, Java, Go, C++, JavaScript, C, C\#, PHP, and Ruby.
It focuses on consistency and format detection, featuring both positive and negative samples segmented by the merged and closed status of pull requests across various languages. 
%\kisub{The purpose of our new dataset is only for consistency checking and format alignment?}\daniel{exactly}
The detailed information about the dataset can be seen in Appendix-Section~\ref{sec:dataset}.

\begin{table}[htbp]
\centering
\caption{Comparison of Positive and Negative Samples in CA and FA (CA and FA are defined in Section~\ref{sec:taskdefinition}).}
\label{tab:ca_fa_comparison}
\renewcommand\tabcolsep{2.8pt}
\renewcommand\arraystretch{1.2}
\centering
% \begin{tabular}{l|p{6cm}|p{6cm}}
\resizebox{0.49\textwidth}{!}{%
\begin{tabular}{l|cc|cc}
\toprule
        Samples       & \multicolumn{2}{c|}{CA} & \multicolumn{2}{c}{FA} \\ \cline{2-5} 
                 & Merged    & Closed     & Merged    & Closed     \\ \midrule
Positive (consistency)       & 2,089      & 820        & 2,238      & 861        \\
Negative (inconsistency)       & 501       & 135        & 352       & 94         \\ 
\bottomrule
\end{tabular}
}

\end{table}

\subsection{Metrics}
\begin{itemize}
    \item \textbf{F1-Score and Recall.} We utilized the F1-Score and recall to evaluate our method's effectiveness on tasks~\textit{CA} and \textit{FA}. The F1-Score, a balance between precision and recall, is crucial for distinguishing between false positives and negatives. Recall measures the proportion of actual positives correctly identified~\cite{hossin2015review}.
    \item \textbf{Edit Progress (EP).} EP evaluates the improvement in code transitioning from erroneous to correct by measuring the reduction in edit distance between the original code and the prediction on task~\textit{CR}. A higher EP indicates better efficiency in code generation~\cite{dibia2022aligning,elgohary-etal-2021-nl,zhou2023generation}.
    \item \textbf{Hit Rate (Rate)} We also use hit rate to evaluate the rate of confirmed vulnerability issues out of the found issues by approaches on task~\textit{VA}.
\end{itemize}

\begin{table*}[!t]
% \bgroup
\caption{The number of vulnerabilities found by \tool{} and other approaches. As described in Appendix-Section~\ref{sec:dataset}, we have 3,545 items to evaluate. 
Rate$_{cr}$ represents the confirmed number divided by the number of findings while 
Rate$_{ca}$ is the confirmed number divided by the total evaluated number. 
\tool{}$_{w/o}$ indicates the version without QA-Checker. }%^\kisub{Let's see how we can make this table more visible.}}
\label{tab:agent-va-finding}
\centering
\resizebox{0.9\linewidth}{!}{
\begin{threeparttable}
\begin{tabular}{l|cccllcc}
\toprule
 &  CodeBERT & GPT-3.5 & GPT-4.0 & COT & ReAct & \tool{} &\tool{}$_{w/o}$  \\ 
\midrule
Find        & 1,063    & 864         & 671         & 752     & 693     & 483                  & 564                              \\
Confirm     & 212      & 317         & 345         & 371     & 359     & 449                  & 413                              \\
Rate$_{cr}$ & 19.94\%  & 36.69\%     & 51.42\%     & 49.34\% & 51.80\% & \colorbox{gray!30}{92.96\%}            & 73.23\%                          \\
Rate$_{ca}$ & 5.98\%   & 8.94\%      & 9.73\%      & 10.46\% & 10.13\% & \colorbox{gray!30}{12.67\%}            & 11.65\%                          \\ 
\bottomrule
\end{tabular}%
{The values in gray (\colorbox{gray!30}{nn.nn}) denote the greatest values for the confirmed number of vulnerabilities and the rates.}
\end{threeparttable}
}
\end{table*}

\subsection{State-of-the-Art Tools and Models}
Our study evaluates various tools and models for code revision and modeling. \textbf{Trans-Review}~\cite{tufano2021towards} employs src2abs for code abstraction, effectively reducing vocabulary size. \textbf{AutoTransform}~\cite{thongtanunam2022autotransform} uses Byte-Pair Encoding for efficient vocabulary management in pre-review code revision. \textbf{T5-Review}~\cite{tufano2022using} leverages the T5 architecture, emphasizing improvement in code review through pre-training on code and text data. In handling both natural and programming languages, \textbf{CodeBERT}~\cite{codebert} adopts a bimodal approach, while \textbf{GraphCodeBERT}~\cite{GraphCodeBERT} incorporates code structure into its modeling. \textbf{CodeT5}~\cite{codeT5}, based on the T5 framework, is optimized for identifier type awareness, aiding in generation-based tasks. Additionally, we compare these tools with \textbf{GPT}~\cite{chatgpt} by OpenAI, notable for its human-like text generation capabilities in natural language processing. Finally, we involve \textbf{COT}~\cite{wei2022chain} and \textbf{ReAct}~\cite{yao2022react}, of which \textbf{COT} is a method where language models are guided to solve complex problems by generating and following a series of intermediate reasoning steps and \textbf{ReAct} synergistically enhances language models by interleaving reasoning and action generation, improving task performance and interpretability across various decision-making and language tasks.

\section{Experimental Result Analysis}
This section discusses the performance of \tool{} in the four tasks considered for our experiments. 
In Appendix~Section~\ref{sec:capana}, we provide further analyses: we discuss the difference in the execution time of \tool{} in different languages and perform a capability analysis between \tool{} and recent approaches.

\subsection{Vulnerability Analysis} 
\label{sec:va}

%As shown in Sec~\ref{sec:taskdefinition}, vulnerability analysis (\textit{VA}) is one subtask of code reviews. 
Compared to \textit{CA} and \textit{FA}, \textit{VA} is a more complex code review subtask, covering more than 25 different aspects (please see the Appendix-Section~\ref{sec:vulreason}), including buffer overflows, sensitive data exposure, configuration errors, data leakage, etc. 
Vulnerability analysis being a costly, time-consuming, resource-intensive and sensitive activity, only a low proportion of commits are labeled. We therefore propose a proactive method for data annotion: we execute \tool{} on the 3,545 samples (covering nine languages) and manual verify the identified cases to build a ground truth. Then, we applied CodeBERT~\cite{codebert} and GPT on the dataset with the task of vulnerability binary prediction.

\paragraph{Comparison} 
As shown in Table~\ref{tab:agent-va-finding},  \tool{} successfully identified 483 potential vulnerabilities within a data set of 3,545 samples, with an impressive 449 of these finally confirmed as high-risk vulnerabilities\footnote{The verification process involved a rigorous manual examination, extending beyond 120 working hours. Each sample being validated by at least 2 people: a researcher and an engineer}. CodeBERT, a key pre-trained model for code-related tasks, with its parameters frozen for this experiment, initially identified 1,063 items as vulnerable, yet only 212 passed the stringent verification criteria. Similar trends were observed with GPT-3.5 and GPT-4.0, which confirmed 317 and 345 vulnerabilities out of 864 and 671 identified items, respectively. These outcomes are further quantified by the confirmation rates (Rate$_{cr}$) of 19.94\% for CodeBERT, 36.69\% for GPT-3.5, and 51.42\% for GPT-4.0, while \tool{} demonstrated a remarkable Rate$_{cr}$ of 92.96\%. Additionally, the analysis of confirmed vulnerabilities against all analyzed items (Rate$_{ca}$) yielded 5.98\%, 8.94\%, 9.73\%, and 12.67\% for CodeBERT, GPT-3.5, GPT-4.0, and \tool{}, respectively. 
Evidently, Table~\ref{tab:agent-va-finding} not only highlights \tool{}'s high precision in identifying vulnerable commits but also reveals the progressive improvement from GPT-3.5 to GPT-4.0, likely due to the latter's capacity to handle longer input sequences, with token limits of 4,096 and 32,768, respectively. The integration of sophisticated algorithms like CoT and QA-Checker in \tool{} has significantly enhanced its capabilities in vulnerability detection, surpassing the individual input-output efficiencies of GPT and CodeBERT. Appendix-Sections~\ref{sec:diff} and ~\ref{sec:abs} highlight further details regarding the importance of the QA-checker.  Moreover, more experimental results in 9 languages are accessible in Appendix-Section~\ref{sec:vadetail}.

In addition, the analysis of vulnerabilities identified by various models reveals interesting overlaps
% in their findings.\cl{whose findings?} 
among the models.
CodeBERT confirmed 212 vulnerabilities, whereas GPT-3.5, GPT-4.0, and \tool{} confirmed 317, 345, and 449 vulnerabilities, respectively. 
Notably, the intersection of vulnerabilities confirmed by CodeBERT and GPT-3.5 is 169, indicating a substantial overlap in their findings. 
Similarly, the intersection between CodeBERT and GPT-4.0 is 170, while a larger overlap of 212 vulnerabilities is observed between GPT-3.5 and GPT-4.0. 
The combined intersection among CodeBERT, GPT-3.5, and GPT-4.0 is 137, underscoring the commonalities in vulnerabilities detected across these models. 
Further, the intersections of vulnerabilities confirmed by CodeBERT, GPT-3.5, and GPT-4.0 with \tool{} are 212, 317, and 334, respectively, highlighting the comprehensive coverage and detection capabilities of \tool{}.

\begin{figure}[H]
    \centering
    \includegraphics[width=0.65\linewidth]{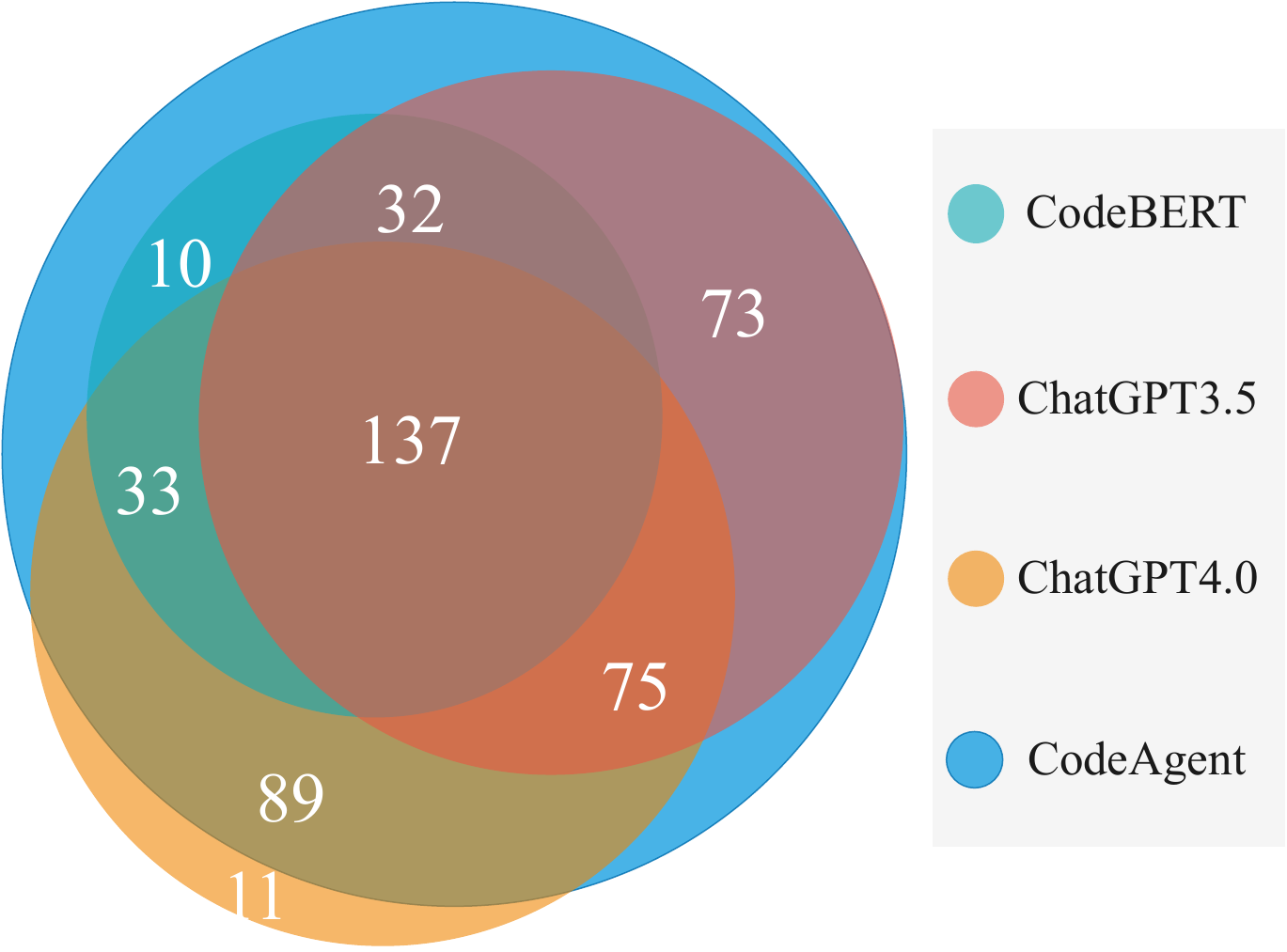}
    \caption{Overlap of vulnerability detection by CodeBERT, GPT-3.5, GPT-4.0, and \tool{}.}%\kisub{Shall we name GPT3.5 to GPT-3.5 inside the legend?}}
    \label{fig:venn}
\end{figure}

\paragraph{Ablation Study.}
% We evaluate the performance of different parts in \tool{} in vulnerability analysis. As shown in Table~\ref{tab:agent-va-finding}, without QA-Checker, \tool{}$_{w/o}$ is less effective in finding vulnerable issues and reduces the hit rate (Rate$_{cr}$ and Rate$_{ca}$) compared with the full version, indicating the importance of QA-Checker in our \tool{}. The performance mainly comes from the combination of COT and QA-Checker. Thus, we design an additional version called \tool{}${w/o}$, which means \tool{} without QA-Checker. Then, we use \tool{}${w/o}$ to do vulnerability analysis and compare with \tool{}. We first discuss about the result of \tool{}${w/o}$ and then discuss about comparison between \tool{} and \tool{}${w/o}$.

As shown in Table~\ref{tab:agent-va-finding}, we conducted an ablation study to evaluate the effectiveness of the QA-Checker in \tool{}. 
Specifically, we created a version of our tool without the QA-Checker, referred to as \tool{}$_{w/o}$. 
We then compared this version to the full version of \tool{} that includes the QA-Checker.
The results demonstrate that \tool{}$_{w/o}$ is substantially less effective in identifying vulnerable issues, yielding lower hit rates (Rate$_{cr}$ and Rate$_{ca}$). 
This reduction in performance highlights the critical role of the QA-Checker in enhancing \tool{}'s overall effectiveness. More detailed information about the ablation study can be found in Appendix-Section~\ref{sec:abs}.

\subsection{Consistency and Format Detection} 
\label{sec:cafa}
In this section, we will discuss the performance of \tool{} and baselines on metrics like the F1-Score and recall score of task \textit{CA} and \textit{FA}. For \textit{CA} and \textit{FA}, the dataset we have is shown in Table~\ref{tab:ca_fa_comparison} and more detailed data information is shown in Figure~\ref{fig:datadetail} in Appendix.

%Task \ding{182} focus on detecting the consistency between commit and commit message in the semantic and task \ding{184} aims to detect the consistency of format between commit and original files. Tasks \ding{182} and \ding{184} are relatively straightforward and do not require extensive professional knowledge for manual annotation.
%Typically, consistency detection between commit and commit message, and format detection, these two tasks, involve comparing the commit against its message or the source file to ensure format and semantic consistency. Such tasks are relatively straightforward and do not require extensive professional knowledge for manual annotation.

% \paragraph{Consistency Detection Between Commit and Commit Message}
% % Appendix-Sect.~\ref{sec:cafadetailedexp}
% In our study, depicted in Table~\ref{tab:ca_mergeclose_exp}, we evaluated the performance of \tool{} compared to other methods in detecting consistency between commit and commit message for both merged and closed commits. This evaluation focused on key metrics such as Recall and F1-Score. %Figures.~\ref{fig:ca_merge_exp} and ~\ref{fig:ca_close_exp} present a comprehensive comparison of these methods across average languages, highlighting the distinction in performance between merged and closed commits in terms of Recall and F1 metrics.

\paragraph{Code Change and Commit Message Consistency Detection.}
As illustrated in Table~\ref{tab:ca_mergeclose_exp}, we assess the efficacy of \tool{} in detecting the consistency between code changes and commit messages, contrasting its performance with other prevalent methods like CodeBERT, GPT-3.5, and GPT-4.0. This evaluation specifically focuses on merged and closed commits in nine languages. In particular, \tool{} exhibits remarkable performance, outperforming other methods in both merged and closed scenarios. In terms of Recall, \tool{} achieved an impressive 90.11\% for merged commits and 87.15\% for closed ones, marking a considerable average improvement of 5.62\% over the other models. Similarly, the F1-Score of \tool{} stands at 93.89\% for merged and 92.40\% for closed commits, surpassing its counterparts with an average improvement of 3.79\%. More comparable details in different languages are shown in Appendix-Section.~\ref{sec:cafadetailedexp}.

\begin{table}[ht]
\centering
\caption{Comparison of \tool{} with other methods on merged and closed commits across 9 languages on \textbf{CA task}. `Imp' represents the improvement.}
\resizebox{0.49\textwidth}{!}{%
\begin{tabular}{c|c|c|c|c|c|c|c}
\toprule
     \cellcolor{gray!50} \textbf{Merged }         & CodeBERT & GPT-3.5 & GPT-4.0 & COT & ReAct & CodeAgent & Imp (pp) \\ \midrule
Recall       & 63.64     & 80.08        & 84.27        & 80.73 & 82.04 & \cellcolor{gray!30}90.11 & 5.84        \\ \hline
F1           & 75.00     & 87.20        & 90.12        & 87.62 & 88.93 & \cellcolor{gray!30}93.89 & 3.77        \\ 
\bottomrule
\end{tabular}}

\resizebox{0.49\textwidth}{!}{%
\begin{tabular}{c|c|c|c|c|c|c|c}
\toprule
     \cellcolor{gray!50} \textbf{Closed }         & CodeBERT & GPT-3.5 & GPT-4.0 & COT & ReAct & CodeAgent & Imp (pp) \\ \midrule
Recall       & 64.80     & 79.05        & 81.75        & 81.77 &83.42 & \cellcolor{gray!30}87.15 & 5.21        \\ \hline
F1           & 77.20     & 87.35        & 89.61        & 89.30 & 89.81 & \cellcolor{gray!30}92.40 & 3.35        \\ 
\bottomrule
\end{tabular}}

\resizebox{0.49\textwidth}{!}{%
\begin{tabular}{c|c|c|c|c|c|c|c}
\toprule
     \cellcolor{gray!50} \textbf{Average }         & CodeBERT & GPT-3.5 & GPT-4.0 & COT & ReAct & CodeAgent & Imp (pp) \\ \midrule
Recall       & 64.22     & 79.57        & 83.01        & 81.25 & 82.73 & \cellcolor{gray!30}88.63 & 5.62        \\ \hline
F1           & 76.01     & 87.28        & 89.61       & 88.46 & 89.37 & \cellcolor{gray!30}93.16 & 3.79        \\ 
\bottomrule
\end{tabular}}
\label{tab:ca_mergeclose_exp}
\end{table}

\paragraph{Format Consistency Detection.}
In our detailed evaluation of format consistency between commits and original files, \tool{}'s performance was benchmarked against established models like CodeBERT and GPT variants across nine different languages. This comparative analysis, presented in Table~\ref{tab:fa_mergeclose_exp}, was centered around pivotal metrics such as Recall and F1-Score. \tool{} demonstrated a significant edge over the state-of-the-art, particularly in the merged category, with an impressive Recall of 89.34\% and an F1-Score of 94.01\%. These figures represent an average improvement of 10.81\% in Recall and 6.94\% in F1-Score over other models. 
In the closed category, \tool{} continued to outperform, achieving a Recall of 89.57\% and an F1-Score of 94.13\%, surpassing its counterparts with an improvement of 15.56\% in Recall and 9.94\% in F1-Score. 
The overall average performance of \tool{} further accentuates its superiority, with a Recall of 89.46\% and an F1-Score of 94.07\%, marking an average improvement of 13.39\% in Recall and 10.45\% in F1-Score. These results underscore \tool{}'s exceptional capability in accurately detecting format consistency between commits and their original files.

\begin{table}[ht]
\centering
\caption{Comparison of \tool{} with other methods on merged and closed commits across the 9 languages on \textbf{FA task}. `Imp' represents the improvement.}
\resizebox{0.49\textwidth}{!}{%
\begin{tabular}{c|c|c|c|c|c|c|c}
\toprule
\cellcolor{gray!50} \textbf{Merged} & CodeBERT & GPT-3.5 & GPT-4.0 & COT & ReAct &  \tool{} & Imp (pp) \\ \midrule
Recall & 60.59 & 60.72 & 78.53 & 70.39 & 71.21 & \cellcolor{gray!30}89.34 & 10.81 \\ \hline
F1 & 74.14 & 74.88 & 87.07 & 80.69 & 82.18 & \cellcolor{gray!30}94.01 & 6.94 \\ 
\bottomrule
\end{tabular}}

\resizebox{0.49\textwidth}{!}{%
\begin{tabular}{c|c|c|c|c|c|c|c}
\toprule
\cellcolor{gray!50} \textbf{Closed} & CodeBERT & GPT-3.5 & GPT-4.0 & COT & ReAct &  \tool{} & Imp (pp) \\ \midrule
Recall & 69.95 & 73.61 & 68.46  & 73.39 & 74.01 & \cellcolor{gray!30}89.57 & 15.56 \\ \hline
F1 & 80.49 & 84.19 & 80.16 & 83.65 & 83.90 & \cellcolor{gray!30}94.13 & 9.94 \\ 
\bottomrule
\end{tabular}}

\resizebox{0.49\textwidth}{!}{%
\begin{tabular}{c|c|c|c|c|c|c|c}
\toprule
\cellcolor{gray!50} \textbf{Average} & CodeBERT & GPT-3.5 & GPT-4.0 & COT & ReAct & \tool{} & Imp (pp) \\ \midrule
Recall & 65.27 & 67.17 & 73.50 & 71.89 & 72.61 & \cellcolor{gray!30}89.46 & 15.96 \\ \hline
F1 & 77.32 & 79.54 & 83.62 & 82.17 & 83.04 & \cellcolor{gray!30}94.07 & 10.45 \\ 
\bottomrule
\end{tabular}}
\label{tab:fa_mergeclose_exp}
\vspace{-0.5cm}
\end{table}

\subsection{Code Revision} \label{sec:cr}
We evaluate the effectiveness of \tool{} in revision suggestion (i.e., bug fixing) based on Edit Progress (EP) metric. We consider Trans-Review, AutoTransform, T5-Review, CodeBERT, GraphCodeBERT, CodeT5 as comparable state of the art. As detailed in Table~\ref{tab:coderevision}, these approaches exhibit a varied performance across different datasets. In particular, \tool{} shows remarkable performance in the T5-Review dataset, achieving the highest EP of 37.6\%. This is a significant improvement over other methods, which underlines the effectiveness of \tool{} in handling complex code revision tasks. Furthermore, with an average EP of 31.6\%, \tool{} consistently outperforms its counterparts, positioning itself as a leading solution in automated code revision. Its ability to excel in the T5-Review, a challenging benchmark data, indicates a strong capability to address complex bugs. In addition, its overall average performance surpasses other state-of-the-art models, highlighting its robustness and reliability. %These results suggest that \tool{} outperforms the state-of-the-art approaches in fixing buggy codes on $EP$ metric on average.

\begin{table}[htbp]
\caption{Experimental Results for the Code Revision (\textbf{CR task}) of \tool{} and the state-of-the-art works. Bold indicates the best performers.}
\label{tab:coderevision}
\centering
\resizebox{0.49\textwidth}{!}{%
\begin{tabular}{@{}l|ccc|cc@{}}
\hline

\hline

\hline

\hline

\multirow{2}{*}{\textbf{\begin{tabular}[c]{@{}l@{}} Approach\\ \end{tabular}}} & \multicolumn{1}{c}{\textbf{$\text{Trans-Review}_{\text{data}}$}} & \multicolumn{1}{c}{\textbf{$\text{AutoTransform}_{\text{data}}$}} & \multicolumn{1}{c|}{\textbf{$\text{T5-Review}_{\text{data}}$}} & \multicolumn{1}{c}{\textbf{Average}} \\ \cmidrule(l){2-2} \cmidrule(l){3-3}  \cmidrule(l){4-4} \cmidrule(l){5-5}  
                                                                                         & \textbf{EP}         & \textbf{EP}          & \textbf{EP}         & \textbf{EP}       \\ \midrule
 \textbf{Trans-Review}                                        & -1.1\%              & -16.6\%             & -151.2\%           & -56.3\%           \\
 \textbf{AutoTransform}                                       & 49.7\%              & \cellcolor{gray!30}\textbf{29.9\%}     & 9.7\%              & 29.8\%   \\
 \textbf{T5-Review}                                           & -14.9\%             & -71.5\%             & 13.8\%             & -24.2\%           \\
\textbf{CodeBERT}                                                                    & 49.8\%              & -75.3\%             & 22.3\%             & -1.1\%            \\
\textbf{GraphCodeBERT}                                                               & \cellcolor{gray!30}\textbf{50.6\%}     & -80.9\%             & 22.6\%             & -2.6\%            \\
\textbf{CodeT5}                                                                      & 41.8\%              & -67.8\%             & 25.6\%    & -0.1\%            \\ 
\textbf{\tool{}}                                                                      & 42.7\%              & 14.4\%             & \cellcolor{gray!30}\textbf{37.6\%}    & \cellcolor{gray!30}\textbf{31.6\%}            \\ 
\hline

\hline

\hline

\hline
\end{tabular}}
\end{table}

% \subsection{Main Results}

% % \subsection{Role-Playing}
% \paragraph{RQ1: Qualitative analysis: Quality of Revised Codes}

% \paragraph{RQ2: Consistency, Vulnerability, and Format Analysis}

%对比传统的模型,比如codebert, codetrans, codeT5+, 

% \subsection{Comparison with GPT}
%     Comparison with codeagent-gpt3.5, codeagent-gpt4.0, single GPT-3.5, single GPT-4.0.

% \subsection{More Findings}
% Which commit or pr will easily be rejected?
% Observation: Compared to merged commits, commits in closed pull requests have more inconsistency problems between the commit and the commit message.
\section{Related Work}
\noindent\textbf{Automating Code Review Activities.} 
Our work contributes to automating code review activities, focusing on detecting source code vulnerabilities and maintaining code consistency. 
Related studies include Hellendoorn et al.~\cite{hellendoorn2021towards}, who addressed code change anticipation, and Siow et al.~\cite{siow2020core}, who introduced CORE for code modification semantics. 
Hong et al.~\cite{hong2022commentfinder} proposed COMMENTFINDER for comment suggestions, while Tufano et al.~\cite{tufano2021towards} and Li et al.~\cite{li2022codereviewer} developed tools for code review automation using models like T5CR and CodeReviewer, respectively. 
Recently, Lu et al.~\cite{lu2023llama} incorporated large language models for code review, enhancing fine-tuning techniques.

\noindent\textbf{Collaborative AI.} 
Collaborative AI, involving AI systems working towards shared goals, has seen advancements in multi-agent LLMs~\cite{talebirad2023multiagent, qian2023communicative}, focusing on collective thinking, conversation dataset curation~\cite{wei2023multi, li2023camel}, and sociological phenomenon exploration~\cite{park2023generative}. 
Research by Akata et al.~\cite{akata2023playing} and Cai et al.~\cite{cai2023large} further explores LLM cooperation and efficiency. 
However, there remains a gap in integrating these advancements with structured software engineering practices~\cite{li2023camel, qian2023communicative}, a challenge our approach addresses by incorporating advanced human processes in multi-agent systems.
For a complete overview of related work, please refer to our Appendix-Section~\ref{sec:completerw}.

\section{Conclusion}
% This paper introduces \tool{}, a novel multi-agent-based framework designed to automate code reviews, leveraging an innovative QA-Checker system to maintain focus and alignment with the review's objectives. 
% \tool{} demonstrates promising capabilities in detecting vulnerabilities, ensuring consistency between code changes and commit messages, and improving uniformity of code style. Our \tool{} outperforms existing state-of-the-art solutions. 
% By considering the specific characteristics of the code review process and incorporating the human-like conversational framework, \tool{} significantly enhances efficiency and accuracy. 
% Finally, we believe that our paper opens a new avenue for future software development collaboration practice and research.

In this paper, we introduced \tool{}, a novel multi-agent framework that automates code reviews.  \tool{} leverages its novel QA-Checker system to maintain focus on the review's objectives and ensure alignment.  Our experiments demonstrate \tool{}'s effectiveness in detecting vulnerabilities, enforcing code-message consistency, and promoting uniform code style.  Furthermore, \tool{} outperforms existing state-of-the-art solutions in code revision suggestions.  By incorporating human-like conversational elements and considering the specific characteristics of code review, \tool{} significantly improves both efficiency and accuracy.  We believe this work opens exciting new avenues for research and collaboration practices in software development.
\newpage

% setting new benchmarks in automated code reviews and enhancing the precision and efficiency of the review process, \tool{} illustrates the significant role of advanced LLMs collaboration in refining software maintenance and development practices, opening new avenues for future research and innovation.

\section*{Limitations}
Firstly, the generalizability of the system across different software development environments or industries may require further validation and testing. While the system has shown promising results in the provided datasets, its applicability to other contexts remains uncertain without additional empirical evidence. This limitation suggests that the findings may not be fully transferable to all settings within the software development domain. Secondly, the baseline test used in the study might be insufficient. The current testing approach may not fully capture the system's performance, particularly in edge cases or more complex scenarios. This could result in an overestimation of the system's capabilities and an underestimation of its limitations. Further, more comprehensive testing is needed to establish a more robust baseline and to ensure that the system performs reliably across a wider range of conditions.
\section*{Ethics Statements}
This study was conducted in compliance with ethical guidelines and standards for research. The research did not involve human participants, and therefore, did not require informed consent or ethical review from an institutional review board. All data used in this study were publicly available, and no personal or sensitive information was accessed or processed. The development and evaluation of the \tool{} system were performed with a focus on transparency, reproducibility, and the potential positive impact on the software development community. 
% Bibliography entries for the entire Anthology, followed by custom entries
%\bibliography{anthology,custom}
% Custom bibliography entries only
\bibliography{custom}

\begin{thebibliography}{54}
\providecommand{\natexlab}[1]{#1}

\bibitem[{Akata et~al.(2023)Akata, Schulz, Coda-Forno, Oh, Bethge, and Schulz}]{akata2023playing}
Elif Akata, Lion Schulz, Julian Coda-Forno, Seong~Joon Oh, Matthias Bethge, and Eric Schulz. 2023.
\newblock Playing repeated games with large language models.
\newblock \emph{arXiv preprint}.

\bibitem[{Bacchelli and Bird(2013)}]{bacchelli2013expectations}
Alberto Bacchelli and Christian Bird. 2013.
\newblock Expectations, outcomes, and challenges of modern code review.
\newblock In \emph{2013 35th International Conference on Software Engineering (ICSE)}, pages 712--721. IEEE.

\bibitem[{Bosu and Carver(2013)}]{bosu2013impact}
Amiangshu Bosu and Jeffrey~C Carver. 2013.
\newblock Impact of peer code review on peer impression formation: A survey.
\newblock In \emph{2013 ACM/IEEE International Symposium on Empirical Software Engineering and Measurement}, pages 133--142. IEEE.

\bibitem[{Braz et~al.(2022)Braz, Aeberhard, {\c{C}}alikli, and Bacchelli}]{braz2022less}
Larissa Braz, Christian Aeberhard, G{\"u}l {\c{C}}alikli, and Alberto Bacchelli. 2022.
\newblock Less is more: supporting developers in vulnerability detection during code review.
\newblock In \emph{Proceedings of the 44th International Conference on Software Engineering}, pages 1317--1329.

\bibitem[{Cai et~al.(2023)Cai, Wang, Ma, Chen, and Zhou}]{cai2023large}
Tianle Cai, Xuezhi Wang, Tengyu Ma, Xinyun Chen, and Denny Zhou. 2023.
\newblock Large language models as tool makers.
\newblock \emph{arXiv preprint}.

\bibitem[{Chae et~al.(2023)Chae, Song, Ong, Kwon, Kim, Yu, Lee, Kang, and Yeo}]{chae2023dialogue}
Hyungjoo Chae, Yongho Song, Kai Tzu-iunn Ong, Taeyoon Kwon, Minjin Kim, Youngjae Yu, Dongha Lee, Dongyeop Kang, and Jinyoung Yeo. 2023.
\newblock Dialogue chain-of-thought distillation for commonsense-aware conversational agents.
\newblock \emph{arXiv preprint arXiv:2310.09343}.

\bibitem[{Chakraborty et~al.(2021)Chakraborty, Krishna, Ding, and Ray}]{chakraborty2021deep}
Saikat Chakraborty, Rahul Krishna, Yangruibo Ding, and Baishakhi Ray. 2021.
\newblock Deep learning based vulnerability detection: Are we there yet?
\newblock \emph{IEEE Transactions on Software Engineering}, 48(9):3280--3296.

\bibitem[{Chen et~al.(2023)Chen, Su, Zuo, Yang, Yuan, Qian, Chan, Qin, Lu, Xie et~al.}]{chen2023agentverse}
Weize Chen, Yusheng Su, Jingwei Zuo, Cheng Yang, Chenfei Yuan, Chen Qian, Chi-Min Chan, Yujia Qin, Yaxi Lu, Ruobing Xie, et~al. 2023.
\newblock Agentverse: Facilitating multi-agent collaboration and exploring emergent behaviors in agents.
\newblock \emph{arXiv preprint arXiv:2308.10848}.

\bibitem[{Davila and Nunes(2021)}]{davila2021systematic}
Nicole Davila and Ingrid Nunes. 2021.
\newblock A systematic literature review and taxonomy of modern code review.
\newblock \emph{Journal of Systems and Software}, 177:110951.

\bibitem[{Dibia et~al.(2022)Dibia, Fourney, Bansal, Poursabzi-Sangdeh, Liu, and Amershi}]{dibia2022aligning}
Victor Dibia, Adam Fourney, Gagan Bansal, Forough Poursabzi-Sangdeh, Han Liu, and Saleema Amershi. 2022.
\newblock Aligning offline metrics and human judgments of value of ai-pair programmers.
\newblock \emph{arXiv preprint arXiv:2210.16494}.

\bibitem[{Du et~al.(2023)Du, Li, Torralba, Tenenbaum, and Mordatch}]{du2023improving}
Yilun Du, Shuang Li, Antonio Torralba, Joshua~B Tenenbaum, and Igor Mordatch. 2023.
\newblock Improving factuality and reasoning in language models through multiagent debate.
\newblock \emph{arXiv preprint arXiv:2305.14325}.

\bibitem[{Elgohary et~al.(2021)Elgohary, Meek, Richardson, Fourney, Ramos, and Awadallah}]{elgohary-etal-2021-nl}
Ahmed Elgohary, Christopher Meek, Matthew Richardson, Adam Fourney, Gonzalo Ramos, and Ahmed~Hassan Awadallah. 2021.
\newblock \href {https://doi.org/10.18653/v1/2021.naacl-main.444} {{NL}-{EDIT}: Correcting semantic parse errors through natural language interaction}.
\newblock In \emph{Proceedings of the 2021 Conference of the North American Chapter of the Association for Computational Linguistics: Human Language Technologies}, pages 5599--5610, Online. Association for Computational Linguistics.

\bibitem[{Feng et~al.(2020)Feng, Guo, Tang, Duan, Feng, Gong, Shou, Qin, Liu, Jiang, and Zhou}]{codebert}
Zhangyin Feng, Daya Guo, Duyu Tang, Nan Duan, Xiaocheng Feng, Ming Gong, Linjun Shou, Bing Qin, Ting Liu, Daxin Jiang, and Ming Zhou. 2020.
\newblock \href {https://doi.org/10.18653/v1/2020.findings-emnlp.139} {Codebert: {A} pre-trained model for programming and natural languages}.
\newblock In \emph{Findings of the Association for Computational Linguistics: {EMNLP} 2020, Online Event, 16-20 November 2020}, volume {EMNLP} 2020 of \emph{Findings of {ACL}}, pages 1536--1547. Association for Computational Linguistics.

\bibitem[{Greyling(2023)}]{humanfirst}
Cobus Greyling. 2023.
\newblock \href {https://blog.humanfirst.ai/prompt-drift/} {Prompt drift and chaining}.

\bibitem[{Guo et~al.(2021)Guo, Ren, Lu, Feng, Tang, Liu, Zhou, Duan, Svyatkovskiy, Fu, Tufano, Deng, Clement, Drain, Sundaresan, Yin, Jiang, and Zhou}]{GraphCodeBERT}
Daya Guo, Shuo Ren, Shuai Lu, Zhangyin Feng, Duyu Tang, Shujie Liu, Long Zhou, Nan Duan, Alexey Svyatkovskiy, Shengyu Fu, Michele Tufano, Shao~Kun Deng, Colin~B. Clement, Dawn Drain, Neel Sundaresan, Jian Yin, Daxin Jiang, and Ming Zhou. 2021.
\newblock \href {https://openreview.net/forum?id=jLoC4ez43PZ} {Graphcodebert: Pre-training code representations with data flow}.
\newblock In \emph{9th International Conference on Learning Representations, {ICLR} 2021, Virtual Event, Austria, May 3-7, 2021}. OpenReview.net.

\bibitem[{Han et~al.(2020)Han, Ragkhitwetsagul, Krinke, Paixao, and Rosa}]{han2020does}
DongGyun Han, Chaiyong Ragkhitwetsagul, Jens Krinke, Matheus Paixao, and Giovanni Rosa. 2020.
\newblock Does code review really remove coding convention violations?
\newblock In \emph{2020 IEEE 20th International Working Conference on Source Code Analysis and Manipulation (SCAM)}, pages 43--53. IEEE.

\bibitem[{Hellendoorn et~al.(2021)Hellendoorn, Tsay, Mukherjee, and Hirzel}]{hellendoorn2021towards}
Vincent~J Hellendoorn, Jason Tsay, Manisha Mukherjee, and Martin Hirzel. 2021.
\newblock Towards automating code review at scale.
\newblock In \emph{Proceedings of the 29th ACM Joint Meeting on European Software Engineering Conference and Symposium on the Foundations of Software Engineering}, pages 1479--1482.

\bibitem[{Hong et~al.(2023)Hong, Zheng, Chen, Cheng, Wang, Zhang, Wang, Yau, Lin, Zhou et~al.}]{hong2023metagpt}
Sirui Hong, Xiawu Zheng, Jonathan Chen, Yuheng Cheng, Jinlin Wang, Ceyao Zhang, Zili Wang, Steven Ka~Shing Yau, Zijuan Lin, Liyang Zhou, et~al. 2023.
\newblock Metagpt: Meta programming for multi-agent collaborative framework.
\newblock \emph{arXiv preprint arXiv:2308.00352}.

\bibitem[{Hong et~al.(2022)Hong, Tantithamthavorn, Thongtanunam, and Aleti}]{hong2022commentfinder}
Yang Hong, Chakkrit Tantithamthavorn, Patanamon Thongtanunam, and Aldeida Aleti. 2022.
\newblock Commentfinder: a simpler, faster, more accurate code review comments recommendation.
\newblock In \emph{Proceedings of the 30th ACM Joint European Software Engineering Conference and Symposium on the Foundations of Software Engineering}, pages 507--519.

\bibitem[{Hossin and Sulaiman(2015)}]{hossin2015review}
Mohammad Hossin and Md~Nasir Sulaiman. 2015.
\newblock A review on evaluation metrics for data classification evaluations.
\newblock \emph{International journal of data mining \& knowledge management process}, 5(2):1.

\bibitem[{Li et~al.(2023{\natexlab{a}})Li, Hammoud, Itani, Khizbullin, and Ghanem}]{li2023camel}
Guohao Li, Hasan Abed Al~Kader Hammoud, Hani Itani, Dmitrii Khizbullin, and Bernard Ghanem. 2023{\natexlab{a}}.
\newblock Camel: Communicative agents for" mind" exploration of large scale language model society.
\newblock \emph{arXiv preprint arXiv:2303.17760}.

\bibitem[{Li et~al.(2023{\natexlab{b}})Li, Zhang, and Sun}]{li2023metaagents}
Yuan Li, Yixuan Zhang, and Lichao Sun. 2023{\natexlab{b}}.
\newblock Metaagents: Simulating interactions of human behaviors for llm-based task-oriented coordination via collaborative generative agents.
\newblock \emph{arXiv preprint arXiv:2310.06500}.

\bibitem[{Li et~al.(2022)Li, Lu, Guo, Duan, Jannu, Jenks, Majumder, Green, Svyatkovskiy, Fu et~al.}]{li2022codereviewer}
Zhiyu Li, Shuai Lu, Daya Guo, Nan Duan, Shailesh Jannu, Grant Jenks, Deep Majumder, Jared Green, Alexey Svyatkovskiy, Shengyu Fu, et~al. 2022.
\newblock Codereviewer: Pre-training for automating code review activities.
\newblock \emph{arXiv e-prints}, pages arXiv--2203.

\bibitem[{Liang et~al.(2023)Liang, He, Jiao, Wang, Wang, Wang, Yang, Tu, and Shi}]{liang2023encouraging}
Tian Liang, Zhiwei He, Wenxiang Jiao, Xing Wang, Yan Wang, Rui Wang, Yujiu Yang, Zhaopeng Tu, and Shuming Shi. 2023.
\newblock Encouraging divergent thinking in large language models through multi-agent debate.
\newblock \emph{arXiv preprint arXiv:2305.19118}.

\bibitem[{Lu et~al.(2023)Lu, Yu, Li, Yang, and Zuo}]{lu2023llama}
Junyi Lu, Lei Yu, Xiaojia Li, Li~Yang, and Chun Zuo. 2023.
\newblock Llama-reviewer: Advancing code review automation with large language models through parameter-efficient fine-tuning.
\newblock In \emph{2023 IEEE 34th International Symposium on Software Reliability Engineering (ISSRE)}, pages 647--658. IEEE.

\bibitem[{Oliveira et~al.(2023)Oliveira, Santos, Madeiral, Masuhara, and Castor}]{oliveira2023systematic}
Delano Oliveira, Reydne Santos, Fernanda Madeiral, Hidehiko Masuhara, and Fernando Castor. 2023.
\newblock A systematic literature review on the impact of formatting elements on code legibility.
\newblock \emph{Journal of Systems and Software}, 203:111728.

\bibitem[{OPENAI(2022)}]{chatgpt}
OPENAI. 2022.
\newblock \href {https://chat.openai.com/} {Chatgpt}.

\bibitem[{Panthaplackel et~al.(2021)Panthaplackel, Li, Gligoric, and Mooney}]{panthaplackel2021deep}
Sheena Panthaplackel, Junyi~Jessy Li, Milos Gligoric, and Raymond~J Mooney. 2021.
\newblock Deep just-in-time inconsistency detection between comments and source code.
\newblock In \emph{Proceedings of the AAAI Conference on Artificial Intelligence}, volume~35, pages 427--435.

\bibitem[{Park et~al.(2023)Park, O'Brien, Cai, Morris, Liang, and Bernstein}]{park2023generative}
Joon~Sung Park, Joseph O'Brien, Carrie~Jun Cai, Meredith~Ringel Morris, Percy Liang, and Michael~S Bernstein. 2023.
\newblock Generative agents: Interactive simulacra of human behavior.
\newblock In \emph{Proceedings of the 36th Annual ACM Symposium on User Interface Software and Technology}, pages 1--22.

\bibitem[{Qian et~al.(2023)Qian, Cong, Yang, Chen, Su, Xu, Liu, and Sun}]{qian2023communicative}
Chen Qian, Xin Cong, Cheng Yang, Weize Chen, Yusheng Su, Juyuan Xu, Zhiyuan Liu, and Maosong Sun. 2023.
\newblock Communicative agents for software development.
\newblock \emph{arXiv preprint arXiv:2307.07924}.

\bibitem[{Siow et~al.(2020)Siow, Gao, Fan, Chen, and Liu}]{siow2020core}
Jing~Kai Siow, Cuiyun Gao, Lingling Fan, Sen Chen, and Yang Liu. 2020.
\newblock Core: Automating review recommendation for code changes.
\newblock In \emph{2020 IEEE 27th International Conference on Software Analysis, Evolution and Reengineering (SANER)}, pages 284--295. IEEE.

\bibitem[{Staron et~al.(2020)Staron, Ochodek, Meding, and S{\"o}der}]{staron2020using}
Miroslaw Staron, Miros{\l}aw Ochodek, Wilhelm Meding, and Ola S{\"o}der. 2020.
\newblock Using machine learning to identify code fragments for manual review.
\newblock In \emph{2020 46th Euromicro Conference on Software Engineering and Advanced Applications (SEAA)}, pages 513--516. IEEE.

\bibitem[{Talebirad and Nadiri(2023)}]{talebirad2023multiagent}
Yashar Talebirad and Amirhossein Nadiri. 2023.
\newblock Multi-agent collaboration: Harnessing the power of intelligent llm agents.

\bibitem[{Tang et~al.(2023)Tang, Chen, Kim, Tian, Ezzini, and Klein}]{tang2023just}
Xunzhu Tang, Zhenghan Chen, Kisub Kim, Haoye Tian, Saad Ezzini, and Jacques Klein. 2023.
\newblock Just-in-time security patch detection--llm at the rescue for data augmentation.
\newblock \emph{arXiv preprint arXiv:2312.01241}.

\bibitem[{Thongtanunam et~al.(2022)Thongtanunam, Pornprasit, and Tantithamthavorn}]{thongtanunam2022autotransform}
Patanamon Thongtanunam, Chanathip Pornprasit, and Chakkrit Tantithamthavorn. 2022.
\newblock Autotransform: Automated code transformation to support modern code review process.
\newblock In \emph{Proceedings of the 44th international conference on software engineering}, pages 237--248.

\bibitem[{Tian et~al.(2023)Tian, Lu, Li, Tang, Cheung, Klein, and Bissyand{\'e}}]{tian2023chatgpt}
Haoye Tian, Weiqi Lu, Tsz~On Li, Xunzhu Tang, Shing-Chi Cheung, Jacques Klein, and Tegawend{\'e}~F Bissyand{\'e}. 2023.
\newblock Is chatgpt the ultimate programming assistant--how far is it?
\newblock \emph{arXiv preprint arXiv:2304.11938}.

\bibitem[{Tian et~al.(2022)Tian, Tang, Habib, Wang, Liu, Xia, Klein, and Bissyand{\'e}}]{tian2022change}
Haoye Tian, Xunzhu Tang, Andrew Habib, Shangwen Wang, Kui Liu, Xin Xia, Jacques Klein, and Tegawend{\'e}~F Bissyand{\'e}. 2022.
\newblock Is this change the answer to that problem? correlating descriptions of bug and code changes for evaluating patch correctness.
\newblock \emph{arXiv preprint arXiv:2208.04125}.

\bibitem[{Tufano et~al.(2022)Tufano, Masiero, Mastropaolo, Pascarella, Poshyvanyk, and Bavota}]{tufano2022using}
Rosalia Tufano, Simone Masiero, Antonio Mastropaolo, Luca Pascarella, Denys Poshyvanyk, and Gabriele Bavota. 2022.
\newblock Using pre-trained models to boost code review automation.
\newblock In \emph{Proceedings of the 44th International Conference on Software Engineering}, pages 2291--2302.

\bibitem[{Tufano et~al.(2021)Tufano, Pascarella, Tufano, Poshyvanyk, and Bavota}]{tufano2021towards}
Rosalia Tufano, Luca Pascarella, Michele Tufano, Denys Poshyvanyk, and Gabriele Bavota. 2021.
\newblock Towards automating code review activities.
\newblock In \emph{2021 IEEE/ACM 43rd International Conference on Software Engineering (ICSE)}, pages 163--174. IEEE.

\bibitem[{Wang et~al.(2021)Wang, Wang, Joty, and Hoi}]{codeT5}
Yue Wang, Weishi Wang, Shafiq~R. Joty, and Steven C.~H. Hoi. 2021.
\newblock \href {https://doi.org/10.18653/v1/2021.emnlp-main.685} {Codet5: Identifier-aware unified pre-trained encoder-decoder models for code understanding and generation}.
\newblock In \emph{Proceedings of the 2021 Conference on Empirical Methods in Natural Language Processing, {EMNLP} 2021, Virtual Event / Punta Cana, Dominican Republic, 7-11 November, 2021}, pages 8696--8708. Association for Computational Linguistics.

\bibitem[{Wang et~al.(2023)Wang, Mao, Wu, Ge, Wei, and Ji}]{wang2023unleashing}
Zhenhailong Wang, Shaoguang Mao, Wenshan Wu, Tao Ge, Furu Wei, and Heng Ji. 2023.
\newblock Unleashing cognitive synergy in large language models: A task-solving agent through multi-persona self-collaboration.
\newblock \emph{arXiv preprint arXiv:2307.05300}.

\bibitem[{Watson et~al.(2022)Watson, Cooper, Palacio, Moran, and Poshyvanyk}]{watson2022systematic}
Cody Watson, Nathan Cooper, David~Nader Palacio, Kevin Moran, and Denys Poshyvanyk. 2022.
\newblock A systematic literature review on the use of deep learning in software engineering research.
\newblock \emph{ACM Transactions on Software Engineering and Methodology (TOSEM)}, 31(2):1--58.

\bibitem[{Wei et~al.(2022)Wei, Wang, Schuurmans, Bosma, Xia, Chi, Le, Zhou et~al.}]{wei2022chain}
Jason Wei, Xuezhi Wang, Dale Schuurmans, Maarten Bosma, Fei Xia, Ed~Chi, Quoc~V Le, Denny Zhou, et~al. 2022.
\newblock Chain-of-thought prompting elicits reasoning in large language models.
\newblock \emph{Advances in Neural Information Processing Systems}, 35:24824--24837.

\bibitem[{Wei et~al.(2023)Wei, Shuster, Szlam, Weston, Urbanek, and Komeili}]{wei2023multi}
Jimmy Wei, Kurt Shuster, Arthur Szlam, Jason Weston, Jack Urbanek, and Mojtaba Komeili. 2023.
\newblock Multi-party chat: Conversational agents in group settings with humans and models.
\newblock \emph{arXiv preprint}.

\bibitem[{Xi et~al.(2023)Xi, Chen, Guo, He, Ding, Hong, Zhang, Wang, Jin, Zhou et~al.}]{xi2023rise}
Zhiheng Xi, Wenxiang Chen, Xin Guo, Wei He, Yiwen Ding, Boyang Hong, Ming Zhang, Junzhe Wang, Senjie Jin, Enyu Zhou, et~al. 2023.
\newblock The rise and potential of large language model based agents: A survey.
\newblock \emph{arXiv preprint arXiv:2309.07864}.

\bibitem[{Yang et~al.(2024{\natexlab{a}})Yang, Tian, Ye, Martins, and Goues}]{yang2024security}
Aidan~ZH Yang, Haoye Tian, He~Ye, Ruben Martins, and Claire~Le Goues. 2024{\natexlab{a}}.
\newblock Security vulnerability detection with multitask self-instructed fine-tuning of large language models.
\newblock \emph{arXiv preprint arXiv:2406.05892}.

\bibitem[{Yang et~al.(2024{\natexlab{b}})Yang, Tian, Pian, Yu, Wang, Klein, Bissyand{\'e}, and Jin}]{yang2024cref}
Boyang Yang, Haoye Tian, Weiguo Pian, Haoran Yu, Haitao Wang, Jacques Klein, Tegawend{\'e}~F Bissyand{\'e}, and Shunfu Jin. 2024{\natexlab{b}}.
\newblock Cref: an llm-based conversational software repair framework for programming tutors.
\newblock In \emph{Proceedings of the 33rd ACM SIGSOFT International Symposium on Software Testing and Analysis}, pages 882--894.

\bibitem[{Yang et~al.(2024{\natexlab{c}})Yang, Lu, and Yu}]{yang2024adapting}
Xiaoyu Yang, Jie Lu, and En~Yu. 2024{\natexlab{c}}.
\newblock Adapting multi-modal large language model to concept drift in the long-tailed open world.
\newblock \emph{arXiv preprint arXiv:2405.13459}.

\bibitem[{Yao et~al.(2022)Yao, Zhao, Yu, Du, Shafran, Narasimhan, and Cao}]{yao2022react}
Shunyu Yao, Jeffrey Zhao, Dian Yu, Nan Du, Izhak Shafran, Karthik Narasimhan, and Yuan Cao. 2022.
\newblock React: Synergizing reasoning and acting in language models.
\newblock \emph{arXiv preprint}.

\bibitem[{Zhang et~al.(2023)Zhang, Du, Shan, Zhou, Du, Tenenbaum, Shu, and Gan}]{zhang2023building}
Hongxin Zhang, Weihua Du, Jiaming Shan, Qinhong Zhou, Yilun Du, Joshua~B Tenenbaum, Tianmin Shu, and Chuang Gan. 2023.
\newblock Building cooperative embodied agents modularly with large language models.
\newblock \emph{arXiv preprint}.

\bibitem[{Zhang et~al.(2022)Zhang, Liu, Chen, Liu, and Bai}]{zhang2022consistent}
Mengxi Zhang, Huaxiao Liu, Chunyang Chen, Yuzhou Liu, and Shuotong Bai. 2022.
\newblock Consistent or not? an investigation of using pull request template in github.
\newblock \emph{Information and Software Technology}, 144:106797.

\bibitem[{Zhang et~al.(2024)Zhang, Ruan, Fan, and Roychoudhury}]{zhang2024autocoderover}
Yuntong Zhang, Haifeng Ruan, Zhiyu Fan, and Abhik Roychoudhury. 2024.
\newblock Autocoderover: Autonomous program improvement.
\newblock \emph{arXiv preprint arXiv:2404.05427}.

\bibitem[{Zheng et~al.(2024)Zheng, Ritter, and Xu}]{zheng2024neo}
Jonathan Zheng, Alan Ritter, and Wei Xu. 2024.
\newblock Neo-bench: Evaluating robustness of large language models with neologisms.
\newblock \emph{arXiv preprint arXiv:2402.12261}.

\bibitem[{Zhou et~al.(2023)Zhou, Kim, Xu, Han, He, and Lo}]{zhou2023generation}
Xin Zhou, Kisub Kim, Bowen Xu, DongGyun Han, Junda He, and David Lo. 2023.
\newblock Generation-based code review automation: How far are we?
\newblock \emph{arXiv preprint arXiv:2303.07221}.

\end{thebibliography}

\appendix

\appendixwithtoc
\newpage

\section{Details of QA-Checker Algorithm}\label{sec:qaalgorithm}

\begin{lemma}
    Let \( \mathcal{Q}(Q_i, A_i) \) denote the quality assessment function of the QA-Checker for the question-answer pair \( (Q_i, A_i) \) in a conversation at the \( i \)-th iteration. Assume \( \mathcal{Q} \) is twice differentiable and its Hessian matrix \( H(\mathcal{Q}) \) is positive definite. If the QA-Checker modifies the question \( Q_i \) to \( Q_{i+1} \) by attaching an additional instruction \( aai_i \), and this leads to a refined answer \( A_{i+1} \), then the sequence \( \{(Q_i, A_i)\} \) converges to an optimal question-answer pair \( (Q^*, A^*) \), under specific regularity conditions.
\end{lemma}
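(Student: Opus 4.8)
The plan is to reduce the convergence claim to a classical statement about descent methods on a strongly convex objective. First I would fix an ambient representation: identify each question--answer pair $(Q_i,A_i)$ with a point $x_i$ in a Euclidean embedding space $\gX \subseteq \R^n$ (the latent space in which the QA-Checker scores a dialogue turn), so that $\mathcal{Q}$ becomes a map $\gX \to \R$ to be minimized (lower score $=$ better alignment between the question and its answer). The key modeling step is to formalize the operation ``attach the additional instruction $aai_i$ and regenerate the answer'' as a single update $x_{i+1} = x_i - \eta_i\, d_i$, where $d_i$ is a descent direction for $\mathcal{Q}$ at $x_i$ (for instance $d_i = \nabla \mathcal{Q}(x_i)$, or $d_i = H(\mathcal{Q})(x_i)^{-1}\nabla \mathcal{Q}(x_i)$ if one reads the checker as a Newton-type corrector) and $\eta_i$ is a step size bounded away from $0$ and from the instability threshold. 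This is precisely the ``specific regularity condition'' the statement alludes to, and it is where the informal-to-formal gap is concentrated.

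Next I would exploit the hypothesis on the Hessian. Assuming $\mathcal{Q}$ is $C^2$ and $H(\mathcal{Q})$ is positive definite on the sublevel set $S_0 = \{x \in \gX : \mathcal{Q}(x) \le \mathcal{Q}(x_0)\}$ --- which I would additionally take to be compact --- continuity of $x \mapsto H(\mathcal{Q})(x)$ on $S_0$ yields constants $0 < m \le M < \infty$ with $mI \preceq H(\mathcal{Q})(x) \preceq MI$ for all $x \in S_0$. This gives two facts: (i) $\mathcal{Q}$ is $m$-strongly convex on $S_0$, hence has a unique minimizer $x^* = (Q^*,A^*)$; and (ii) $\nabla \mathcal{Q}$ is $M$-Lipschitz on $S_0$, so the standard descent lemma applies.

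The convergence then follows the usual template. From the descent lemma, for $\eta_i \in (0, 1/M]$ one gets $\mathcal{Q}(x_{i+1}) \le \mathcal{Q}(x_i) - \tfrac{\eta_i}{2}\|\nabla \mathcal{Q}(x_i)\|^2$, so the iterates remain in $S_0$ and $\mathcal{Q}(x_i)$ is nonincreasing. Strong convexity gives the Polyak--{\L}ojasiewicz inequality $\|\nabla \mathcal{Q}(x_i)\|^2 \ge 2m\,(\mathcal{Q}(x_i) - \mathcal{Q}(x^*))$, and combining the two with $\eta_i \ge \eta_{\min} > 0$ yields $\mathcal{Q}(x_{i+1}) - \mathcal{Q}(x^*) \le (1 - m\eta_{\min})\,(\mathcal{Q}(x_i) - \mathcal{Q}(x^*))$, i.e.\ linear decay of the optimality gap. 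Finally, strong convexity again gives $\tfrac{m}{2}\|x_i - x^*\|^2 \le \mathcal{Q}(x_i) - \mathcal{Q}(x^*)$, so $\|x_i - x^*\| \to 0$, i.e.\ $(Q_i, A_i) \to (Q^*, A^*)$; a QA-Checker acceptance decision corresponds to stopping once the gap falls below a preset tolerance.

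The main obstacle is not the analysis but the first paragraph: justifying that the discrete, language-level act of appending $aai_i$ and resampling a response genuinely realizes a descent step --- correct direction, admissible step size --- in the continuous score landscape, and that the iteration cannot escape the region where the Hessian bounds hold. I would discharge this by bundling these requirements explicitly into the ``regularity conditions'': a well-behaved embedding, a quality function with compact sublevel sets, and a checker whose additional instructions are aligned with $-\nabla \mathcal{Q}$ up to a bounded angle. Relaxing the exact-descent assumption to such an angle condition only changes the contraction factor $1 - m\eta_{\min}$, not the conclusion, so the argument is robust to this weakening.
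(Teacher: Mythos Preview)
Your argument is correct and is in fact more careful than the paper's own proof, but the route differs. The paper commits to a specific Newton--Raphson update, $A_{i+1} = A_i - \alpha\, H(\mathcal{Q}(Q_i,A_i))^{-1}\nabla\mathcal{Q}(Q_i,A_i)$, then writes the second-order Taylor expansion of $\mathcal{Q}$ at $(Q_i,A_i)$, substitutes the update, and observes that for small $\alpha$ the quadratic form $\alpha\,\nabla\mathcal{Q}^{\!\top} H^{-1}\nabla\mathcal{Q} - \tfrac{\alpha^2}{2}\nabla\mathcal{Q}^{\!\top} H^{-1}\nabla\mathcal{Q}$ has the right sign, so $\mathcal{Q}$ improves monotonically; convergence to $(Q^*,A^*)$ is then asserted from this monotonicity plus positive definiteness. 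Your version instead abstracts the update to a generic descent step, extracts uniform eigenvalue bounds $mI \preceq H \preceq MI$ on a compact sublevel set, and chains the descent lemma with the Polyak--{\L}ojasiewicz inequality to obtain an explicit linear rate and hence iterate convergence. What the paper's approach buys is simplicity and a direct link to the Newton picture it wants to advertise; what yours buys is an actual proof (the paper stops at ``monotone improvement implies convergence'' without ruling out stalling, and never derives iterate convergence from function-value convergence), robustness to the choice of descent direction, and an honest accounting of the modeling assumptions needed to bridge the language-level operation to the continuous update. One cosmetic mismatch: the paper treats $\mathcal{Q}$ as a score to \emph{maximize}, whereas you minimize; this is only a sign flip and does not affect the argument.
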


\begin{proof}
    The QA-Checker refines the question and answers using the rule:
    \begin{align*}
        Q_{i+1} &= Q_i + aai_i, \\
        A_{i+1} &= A_i - \alpha H(\mathcal{Q}(Q_i, A_i))^{-1} \nabla \mathcal{Q}(Q_i, A_i),
    \end{align*}
    where \( \alpha \) is the learning rate. To analyze convergence, we consider the Taylor expansion of \( \mathcal{Q} \) around \( (Q_i, A_i) \):
    \begin{small}
        \begin{align*}
        \mathcal{Q}(Q_{i+1}, A_{i+1}) &\approx \mathcal{Q}(Q_i, A_i) + \nabla \mathcal{Q}(Q_i, A_i)\\
        &\cdot (Q_{i+1} - Q_i, A_{i+1} - A_i) \\
        &\quad + \frac{1}{2} (Q_{i+1} - Q_i, A_{i+1} - A_i)^T \\
        &H(\mathcal{Q}(Q_i, A_i)) (Q_{i+1} - Q_i, A_{i+1} - A_i).
    \end{align*}
    \end{small}
    Substituting the update rule and rearranging, we get:
    \begin{small}
        \begin{align*}
        \mathcal{Q}(Q_{i+1}, A_{i+1}) &\approx \mathcal{Q}(Q_i, A_i) \\
        &- \alpha \nabla \mathcal{Q}(Q_i, A_i)^T H(\mathcal{Q}(Q_i, A_i))^{-1} \\
        &\nabla \mathcal{Q}(Q_i, A_i) \\
        & + \frac{\alpha^2}{2} \nabla \mathcal{Q}(Q_i, A_i)^T H(\mathcal{Q}(Q_i, A_i))^{-1} \\
        &\nabla \mathcal{Q}(Q_i, A_i).
    \end{align*}
    \end{small}
    For sufficiently small \( \alpha \), this model suggests an increase in \( \mathcal{Q} \), implying convergence to an optimal question-answer pair \( (Q^*, A^*) \) as \( i \to \infty \). The convergence relies on the positive definiteness of \( H(\mathcal{Q}) \) and the appropriate choice of \( \alpha \), ensuring each iteration moves towards an improved quality of the question-answer pair.
\end{proof}

In practical terms, this lemma and its proof underpin the QA-Checker's ability to refine answers iteratively. The QA-Checker assesses the quality of each answer concerning the posed question, employing advanced optimization techniques that are modeled by the modified Newton-Raphson method to enhance answer quality. This framework ensures that, with each iteration, the system moves closer to the optimal answer, leveraging both first and second-order derivatives for efficient and effective learning.

\paragraph{Further Discussion}
The QA-Checker computes \( \mathcal{Q}(Q_i, A_i) \) at each iteration \( i \) and compares it to a predefined quality threshold \( \tau \). If \( \mathcal{Q}(Q_i, A_i) < \tau \), the QA-Checker generates an additional instruction \( aai_i \) to refine the question to \( Q_{i+1} = Q_i + aai_i \), prompting the agents to generate an improved answer \( A_{i+1} \).
\vspace{+0.5em}

First, we assume that the quality assessment function \( \mathcal{Q}(Q_i, A_i) \) is twice differentiable with respect to the question \( Q_i \). This assumption is reasonable given the smooth nature of the component functions (relevance, specificity, and coherence) and the use of continuous word embeddings. Next, we apply the second-order Taylor approximation to \( \mathcal{Q}(Q_{i+1}, A_{i+1}) \) around the point \( (Q_i, A_i) \):
\begin{small}
   \begin{align}
    \mathcal{Q}(Q_{i+1}, A_{i+1}) &\approx \mathcal{Q}(Q_i, A_i) + \nabla \mathcal{Q}(Q_i, A_i)^T \Delta Q_i \nonumber \\ &+ \frac{1}{2} \Delta Q_i^T H(\mathcal{Q}(Q_i, A_i)) \Delta Q_i + R_2(\Delta Q_i)\nonumber
\end{align} 
\end{small}

where \( \Delta Q_i = Q_{i+1} - Q_i \), \( H(\mathcal{Q}(Q_i, A_i)) \) is the Hessian matrix of \( \mathcal{Q} \) evaluated at \( (Q_i, A_i) \), and \( R_2(\Delta Q_i) \) is the remainder term.

Assuming that the remainder term \( R_2(\Delta Q_i) \) is negligible and that the Hessian matrix is positive definite, we can approximate the optimal step \( \Delta Q_i^* \) as:

\[
\Delta Q_i^* \approx -H(\mathcal{Q}(Q_i, A_i))^{-1} \nabla \mathcal{Q}(Q_i, A_i).
\]

Substituting this approximation into the Taylor expansion and using the fact that \( Q_{i+1} = Q_i + \alpha \Delta Q_i^* \) (where \( \alpha \) is the learning rate), we obtain:
\begin{small}
    \begin{align}
    \mathcal{Q}(Q_{i+1}, A_{i+1}) \approx& \mathcal{Q}(Q_i, A_i) - \alpha \nabla \mathcal{Q}(Q_i, A_i)^T \nonumber\\& \cdot H(\mathcal{Q}(Q_i, A_i))^{-1} \nabla \mathcal{Q}(Q_i, A_i) \nonumber\\ & + \frac{\alpha^2}{2} \nabla \mathcal{Q}(Q_i, A_i)^T H(\mathcal{Q}(Q_i, A_i))^{-1}\nonumber\\& \cdot \nabla \mathcal{Q}(Q_i, A_i).\nonumber
\end{align}
\end{small}

The assumptions of twice differentiability, negligible remainder term, and positive definite Hessian matrix provide a more solid foundation for the approximation in Lemma 3.1. For sufficiently small \( \alpha \), this approximation suggests an increase in \( \mathcal{Q} \), implying convergence to an optimal question-answer pair \( (Q^*, A^*) \) as \( i \to \infty \). The convergence relies on the positive definiteness of \( H(\mathcal{Q}) \) and the appropriate choice of \( \alpha \), ensuring each iteration moves towards an improved quality of the question-answer pair.

The quality assessment function \( \mathcal{Q} \) used by the QA-Checker is defined as:

\begin{align}
    \mathcal{Q}(Q_i, A_i) &= \alpha \cdot \text{Relevance}(Q_i, A_i)\nonumber\\ &\quad+ \beta \cdot \text{Specificity}(A_i)\nonumber\\ &\quad+ \gamma \cdot \text{Coherence}(A_i)\nonumber
\end{align}

where:

\begin{itemize}
    \item \( Q_i \) and \( A_i \) represent the question and answer at the \( i \)-th iteration of the conversation.
    \item \( \text{Relevance}(Q_i, A_i) \) measures how well the answer \( A_i \) addresses the key points and intent of the question \( Q_i \), computed as:
    \[
    \text{Relevance}(Q_i, A_i) = \frac{\vec{Q_i} \cdot \vec{A_i}}{|\vec{Q_i}| |\vec{A_i}|}
    \]
    where \( \vec{Q_i} \) and \( \vec{A_i} \) are vector representations of \( Q_i \) and \( A_i \).
    \item \( \text{Specificity}(A_i) \) assesses how specific and detailed the answer \( A_i \) is, calculated as:
    \begin{small}
        \begin{equation}
        A_i = \frac{\sum_{t \in \text{ContentWords}(A_i)} \text{TechnicalityScore}(t)}{\text{Length}(A_i)}\nonumber
        \end{equation}
    \end{small}

    where \( \text{ContentWords}(A_i) \) is the set of substantive content words in \( A_i \), \( \text{TechnicalityScore}(t) \) is a measure of how technical or domain-specific the term \( t \) is, and \( \text{Length}(A_i) \) is the total number of words in \( A_i \).
    \item \( \text{Coherence}(A_i) \) evaluates the logical flow and structural coherence of the answer \( A_i \), computed as:
    \begin{small}
        \begin{align}
        \text{Coherence}(A_i) =& \alpha \cdot \text{DiscourseConnectives}(A_i)\nonumber \\ &  + \beta \cdot \text{CoreferenceConsistency}(A_i)\nonumber \\ & + \gamma \cdot \text{AnswerPatternAdherence}(A_i)\nonumber
    \end{align}
    \end{small}

    where \( \text{DiscourseConnectives}(A_i) \) is the density of discourse connectives in \( A_i \), \( \text{CoreferenceConsistency}(A_i) \) measures the consistency of coreference chains in \( A_i \), and \( \text{AnswerPatternAdherence}(A_i) \) assesses how well \( A_i \) follows the expected structural patterns for the given question type.
\end{itemize}

\(\alpha\), \(\beta\), and \(\gamma\) are non-negative weights that sum to 1, with \(\alpha = \beta = \gamma\).

\section{Complete Related Work}\label{sec:completerw}

\noindent\textbf{Automating Code Review Activities} 
Our focus included detecting source code vulnerabilities, ensuring style alignment, and maintaining commit message and code consistency. 
Other studies explore various aspects of code review. 
Hellendoorn et al.~\cite{hellendoorn2021towards} addressed the challenge of anticipating code change positions. 
Siow et al.~\cite{siow2020core} introduced CORE, employing multi-level embeddings for code modification semantics and retrieval-based review suggestions. 
Hong et al.~\cite{hong2022commentfinder} proposed COMMENTFINDER, a retrieval-based method for suggesting comments during code reviews. 
Tufano et al.~\cite{tufano2021towards} designed T5CR with SentencePiece, enabling work with raw source code without abstraction. 
Li et al.~\cite{li2022codereviewer} developed CodeReviewer, focusing on code diff quality, review comment generation, and code refinement using the T5 model. 
Recently, large language models have been incorporated; Lu et al.~\cite{lu2023llama} fine-tuned LLama with prefix tuning for LLaMA-Reviewer, using parameter-efficient fine-tuning and instruction tuning in a code-centric domain.

\noindent\textbf{Collaborative AI} 
Collaborative AI refers to artificial intelligent systems designed to achieve shared goals with humans or other AI systems. 
Previous research extensively explores the use of multiple LLMs in collaborative settings, as demonstrated by Talebirad et al.~\cite{talebirad2023multiagent} and Qian et al.~\cite{qian2023communicative}. 
These approaches rely on the idea that inter-agent interactions enable LLMs to collectively enhance their capabilities, leading to improved overall performance. 
The research covers various aspects of multi-agent scenarios, including collective thinking, conversation dataset curation, sociological phenomenon exploration, and collaboration for efficiency. 
Collective thinking aims to boost problem-solving abilities by orchestrating discussions among multiple agents. 
Researchers like Wei et al.~\cite{wei2023multi} and Li et al.~\cite{li2023camel} have created conversational datasets through role-playing methodologies. 
Sociological phenomenon investigations, such as Park et al.~\cite{park2023generative}'s work, involve creating virtual communities with rudimentary language interactions and limited cooperative endeavors. 
In contrast, Akata et al.~\cite{akata2023playing} scrutinized LLM cooperation through orchestrated repeated games.
Collaboration for efficiency, proposed by Cai et al.~\cite{cai2023large}, introduces a model for cost reduction through large models as tool-makers and small models as tool-users.
Zhang et al.~\cite{zhang2023building} established a framework for verbal communication and collaboration, enhancing overall efficiency.
However, Li et al.~\cite{li2023camel} and Qian et al.~\cite{qian2023communicative}, presenting a multi-agent framework for software development, primarily relied on natural language conversations, not standardized software engineering documentation, and lacked advanced human process management expertise.
Challenges in multi-agent cooperation include maintaining coherence, avoiding unproductive loops, and fostering beneficial interactions. 
Our approach emphasizes integrating advanced human processes, like code review in software maintenance, within multi-agent systems.

\section{Experimental Details} \label{sec:expsetdetail}
In our work, the maximum number of conversation rounds is set as 10.
\subsection{Role Definition} \label{sec:roledefine}
Six roles are defined as shown in Figure~\ref{fig:roleplay}. 

\begin{figure*}[ht]
    \centering
    \includegraphics[width=\linewidth]{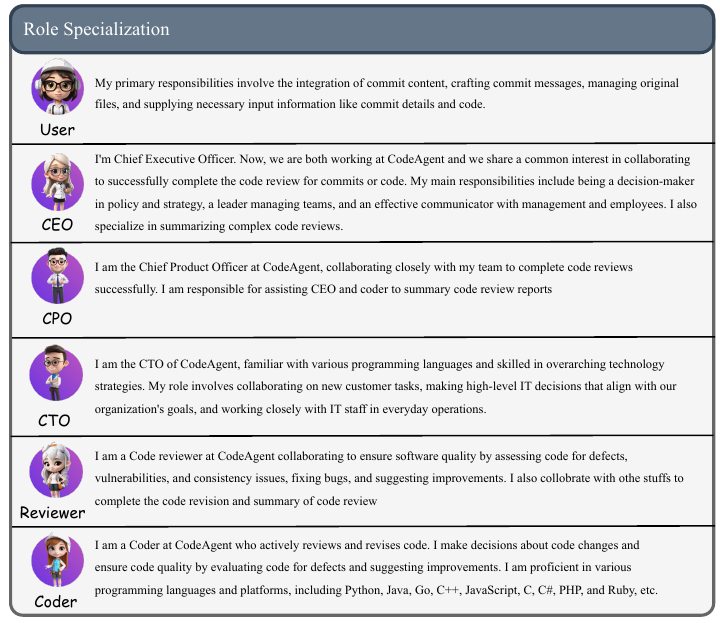}
    \caption{Specialization of six main characters in \tool{}.}
    \label{fig:roleplay}
\end{figure*}

Apart from that, for the QA-checker in \tool{}, we define an initial prompt for it, which is shown as follows:

% \begin{mdframed}
\begin{tcolorbox}[colback=gray!20]
\includegraphics[height=2em]{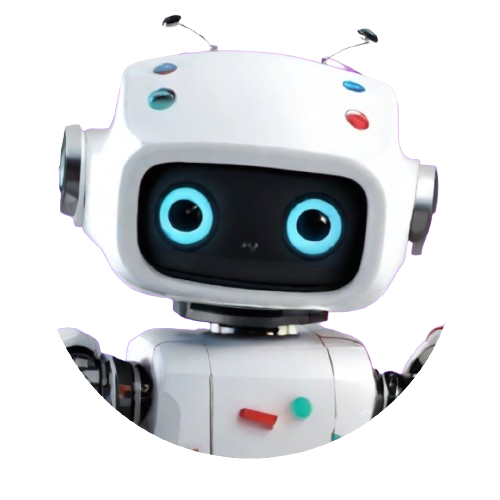} I'm the QA-Checker, an AI-driven agent specializing in ensuring quality and coherence in conversational dynamics, particularly in code review discussions at CodeAgent. My primary role involves analyzing and aligning conversations to maintain topic relevance, ensuring that all discussions about code commits and reviews stay focused and on track. As a sophisticated component of the AI system, I apply advanced algorithms, including Chain-of-Thought reasoning and optimization techniques, to evaluate and guide conversational flow. I am adept at identifying and correcting topic drifts, ensuring that every conversation adheres to its intended purpose. My capabilities extend to facilitating clear and effective communication between team members, making me an essential asset in streamlining code review processes and enhancing overall team collaboration and decision-making.
\end{tcolorbox}
% \end{mdframed}

% \subsection{Task Instruct} \label{sec:taskinstruct}
% \subsection{Implementation}
% \subsection{Fee Cost} \label{feecost}
\subsection{Execute Time Across Languages}

As depicted in the data, we observe a significant trend in the average execution time for code reviews in \tool{} across various programming languages. The analysis includes nine languages: Python, Java, Go, C++, JavaScript, C, C\#, PHP, and Ruby. For each language, the average execution time of code reviews for both merged and closed pull requests (PRs) is measured. The results, presented in Figure~\ref{fig:executetime}, indicate that, on average, the execution time for merged PRs is longer than that for closed PRs by approximately 44.92 seconds. This considerable time difference can be attributed to several potential reasons. One primary explanation is that merged PRs likely undergo a more rigorous and detailed review process. They are intended to be integrated into the main codebase, and as such, contributors might be requested to update their commits in the PRs more frequently to adhere to the project's high-quality standards. On the other hand, closed PRs, which are not meant for merging, might not require such extensive review processes, leading to shorter review times on average, which may also be the reason they are not merged into main projects. %This disparity emphasizes the importance of thorough review practices in maintaining the quality of code that is ultimately incorporated into a project's repository.
\begin{figure}[H]
    \centering
    \includegraphics[width=\linewidth]{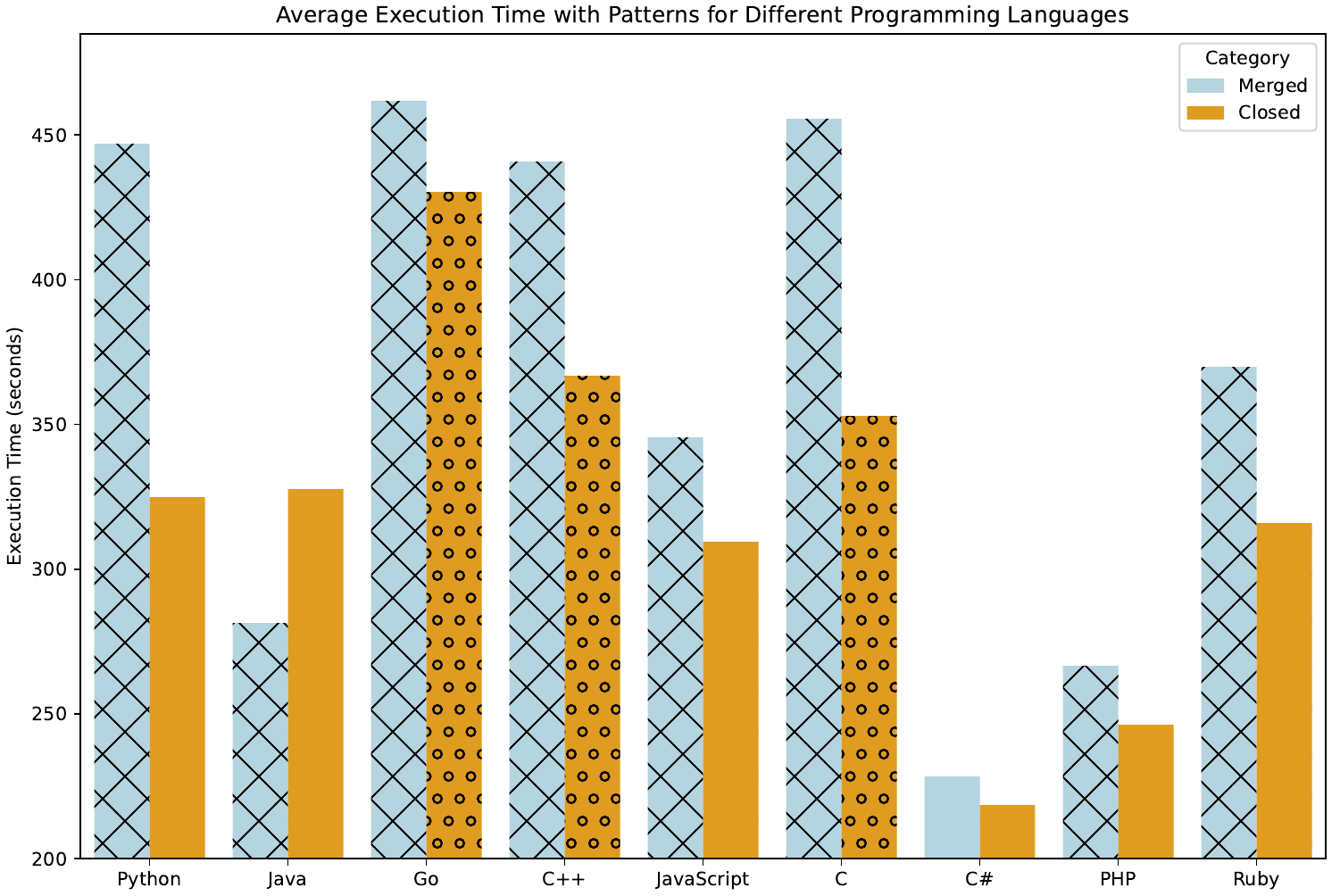}
    \caption{Execution time with \tool{} across different language (count unit: second).}
    \label{fig:executetime}
\end{figure}

\section{Comparative Analysis of QA-Checker AI System and Recursive Self-Improvement Systems} \label{sec:diff}
In this section, we will delve into the differences between QA-Checker and self-improvement systems~\cite{hong2023metagpt}, and underscore the importance of the QA-Checker in role conversations.

\subsection{Comparison Table}
We begin with a comparative overview presented in Table \ref{tab:comparison}.

% \begin{table*}[htbp]
% \centering
% \caption{Comparative Overview of QA-Checker AI System and Recursive Self-Improvement Systems}
% \label{tab:comparison}
% \renewcommand\tabcolsep{2.8pt}
% \renewcommand\arraystretch{1.2}
% % \begin{tabular}{l|p{6cm}|p{6cm}}
% \begin{tabular}
% \hline

% \hline

% \hline

% \hline
% \textbf{Feature/System} & \textbf{QA-Checker AI System} & \textbf{Recursive Self-Improvement System} \\ \hline
% Application Focus       & Specialized for QA tasks with precise task execution & Broad scope, covering various dimensions like software development and learning algorithms \\ \hline
% Learning Mechanism      & Advanced optimization techniques for iterative improvement in QA & Multi-level learning: learning, meta-learning, and recursive self-improvement \\ \hline
% Scope of Improvement    & Focused on individual capability in specific QA tasks & Enhances the entire system, including multi-agent interactions and communication protocols \\ \hline
% Experience Integration  & Based on mathematical models to optimize answer quality & Utilizes experiences from past projects to improve overall performance \\ 
% \hline

% \hline

% \hline

% \hline

% \end{tabular}
% \end{table*}

\begin{table*}[]
\caption{Comparative Overview of QA-Checker AI System and Recursive Self-Improvement Systems}
\label{tab:comparison}
\resizebox{2\columnwidth}{!}{%
\begin{tabular}{l|l|l}
\hline
Feature/System &
  QA-Checker AI System &
  Recursive Self-Improvement System \\ \hline
Application Focus &
  \begin{tabular}[c]{@{}l@{}}Specialized for QA tasks with \\ precise task execution\end{tabular} &
  \begin{tabular}[c]{@{}l@{}}Broad scope, covering various dimensions like\\  software development and learning algorithms\end{tabular} \\ \hline
Learning Mechanism &
  \begin{tabular}[c]{@{}l@{}}Advanced optimization techniques \\ for iterative improvement in QA\end{tabular} &
  \begin{tabular}[c]{@{}l@{}}Multi-level learning: learning, meta-learning, \\ and recursive self-improvement\end{tabular} \\ \hline
Scope of Improvement &
  \begin{tabular}[c]{@{}l@{}}Focused on individual capability \\ in specific QA tasks\end{tabular} &
  \begin{tabular}[c]{@{}l@{}}Enhances the entire system, including multi-agent \\ interactions and communication protocols\end{tabular} \\ \hline
Experience Integration &
  \begin{tabular}[c]{@{}l@{}}Based on mathematical models \\ to optimize answer quality\end{tabular} &
  \begin{tabular}[c]{@{}l@{}}Utilizes experiences from past projects to improve \\ overall performance\end{tabular} \\ \hline
\end{tabular}%
}
\end{table*}

\subsection{Differences and Implications}
The key differences between these systems lie in their application scope, learning mechanisms, and improvement scopes. The QA-Checker is highly specialized, focusing on QA tasks with efficiency and precision. In contrast, recursive self-improvement systems boast a broader application range and adaptability, integrating experiences from diverse projects for systemic improvements.

\subsection{Importance of QA-Checker in Role Conversations}
In the context of role conversations, the QA-Checker plays a pivotal role. Its specialized nature makes it exceptionally adept at handling specific conversational aspects, such as accuracy, relevance, and clarity in responses. This specialization is crucial in domains where the quality of information is paramount, ensuring that responses are not only correct but also contextually appropriate and informative.

Furthermore, the efficiency of the QA-Checker in refining responses based on advanced optimization techniques makes it an invaluable tool in dynamic conversational environments. It can quickly adapt to the nuances of a conversation, providing high-quality responses that are aligned with the evolving nature of dialogue.

\subsection{Conclusion}
While recursive self-improvement systems offer broad adaptability and systemic learning, the QA-Checker stands out in its specialized role in QA tasks, particularly in role conversations. Its focused approach to improving answer quality and its efficiency in handling conversational nuances make it an essential component in AI-driven communication systems.

\section{Capabilities Analysis between \tool{} and Other Methods} \label{sec:capana}
Compared to open-source baseline methods such as AutoGPT and autonomous agents such as ChatDev and MetaGPT, \tool{} offers functions for code review tasks: consistency analysis, vulnerability analysis, and format analysis. As shown in Table~\ref{tab:cap}, our \tool{} encompasses a wide range of abilities to handle complex code review tasks efficiently. Incorporating the QA-Checker self-improved module can significantly improve the conversation generation between agents and contribute to the improvement of code review. Compared to COT, the difference and the advantages of \tool{} with QA-Checker are shown in Section~\ref{sec:diff}. %Moreover, compared with current autonomous agents, \tool{} can also process a complete code review on a whole pull request with multiple commits and corresponding files, which is detailed discussed in Section~\ref{sec:prtest}.

\begin{table*}[htb]
\caption{Comparison of capabilities for \tool{} and other approaches. `\checkmark' indicates the presence of a specific feature in the corresponding framework, `\ding{55} is absence. ChatDev and MetaGPT are two representative multi-agent frameworks, GPT is a kind of single-agent framework, and CodeBert is a representative pre-trained model. }
\label{tab:cap}
\renewcommand\tabcolsep{2.8pt}
\renewcommand\arraystretch{1.2}
\resizebox{\textwidth}{!}{%
\begin{tabular}{l|cccccccc}
\hline

\hline

\hline

\hline
Approaches&  Consistency Analysis & Vulnerability Analysis & Format Analysis & Code Revision & COT & QA-Checker \\ 
\hline

\hline
ChatDev~\cite{qian2023communicative}& \ding{55} & \ding{55} & \ding{55} & \ding{55} & \checkmark & \ding{55}  \\ 
MetaGPT~\cite{hong2023metagpt}& \ding{55} & \ding{55}  & \ding{55} & \ding{55} & \checkmark & \ding{55} \\ 
GPT~\cite{chatgpt}& \checkmark & \checkmark  & \checkmark & \checkmark & \ding{55}& \ding{55} \\  
CodeBert~\cite{codebert}& \checkmark & \checkmark & \checkmark & \checkmark & \ding{55} & \ding{55} \\ 
\rowcolor[gray]{.9} 
\tool{}& \checkmark & \checkmark  & \checkmark & \checkmark & \checkmark & \checkmark \\  
\hline

\hline

\hline

\hline
\end{tabular}%
 }
\end{table*}

\section{Dataset} \label{sec:dataset}

\paragraph{Previous Dataset}
As shown in ~\citet{zhou2023generation}, our study incorporates three distinct datasets for evaluating the performance of \tool{}: $\text{Trans-Review}_\text{data}$, $\text{AutoTransform}_\text{data}$, and $\text{T5-Review}_\text{data}$. $\text{Trans-Review}_\text{data}$, compiled by Tufano et al.~\cite{tufano2021towards}, derives from Gerrit and GitHub projects, excluding noisy or overly lengthy comments and review data with new tokens in revised code not present in the initial submission. $\text{AutoTransform}_\text{data}$, collected by Thongtanunam et al.~\cite{thongtanunam2022autotransform} from three Gerrit repositories, comprises only submitted and revised codes without review comments. Lastly, $\text{T5-Review}_\text{data}$, gathered by Tufano et al.~\cite{tufano2022using} from Java projects on GitHub, filters out noisy, non-English, and duplicate comments. These datasets are employed for Code Revision Before Review (CRB) and Code Revision After Review (CRA) tasks, with the exception of $\text{AutoTransform}_\text{data}$ for CRA and Review Comment Generation (RCG) due to its lack of review comments.

\paragraph{New Dataset Design and Collection}
To enhance our model evaluation and avoid data leakage, we curated a new dataset, exclusively collecting data from repositories created after April 2023. This approach ensures the evaluation of our CodeAgent model on contemporary and relevant data, free from historical biases. The new dataset is extensive, covering a broad spectrum of software projects across nine programming languages. 
% %\paragraph{Crawler Design and Data Retrieval Strategy}
% Our data collection for this new dataset was enabled by a specially designed crawler, which interfaced with the GitHub API. The crawler was adeptly configured to efficiently navigate API rate limits, using Personal Access Tokens for authentication. It targeted repositories initiated post-April 2023, ensuring the relevancy of the data. The crawler was programmed to selectively retrieve the first 200 pull requests from each repository, encompassing both merged and closed PRs, thereby achieving a balance between comprehensiveness and manageability.
% In our work, as shown in Fig.~\ref{fig:newdata}, the deployment of this crawler led to the successful aggregation of a vast dataset from over 180 projects, resulting in more than 3,545 commits and nearly 2,933 pull requests.

\begin{figure*}[h]
    \centering
    \subfigure[Positive and negative data of both merged and closed commits across 9 languages on \textit{CA} task (Sec~\ref{sec:taskdefinition}).]{
        \includegraphics[width=.48\linewidth]{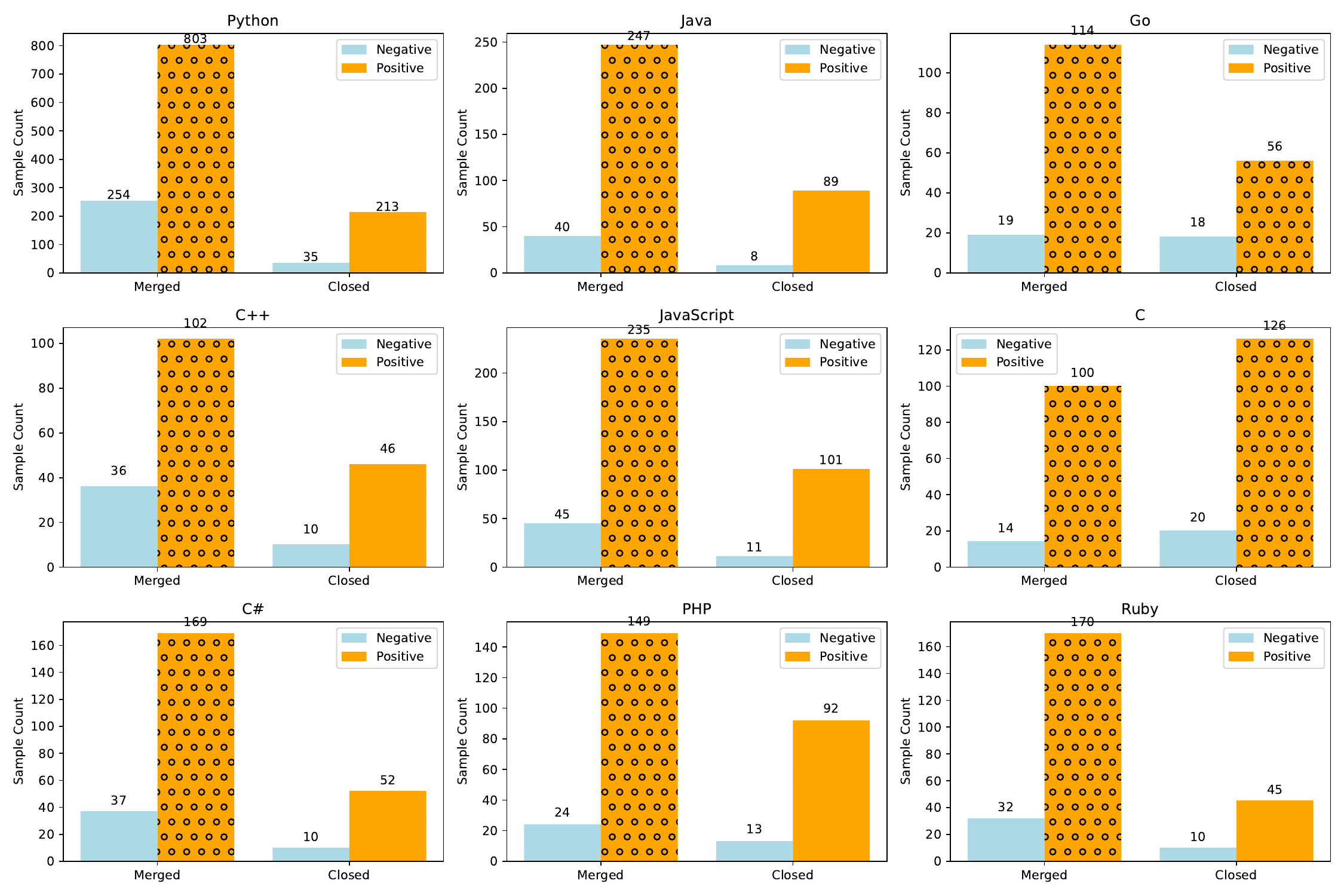}
        \label{fig:dataca}
    }% 
    \subfigure[Positive and negative data of both merged and closed commits across 9 languages on \textit{FA} task (Sec~\ref{sec:taskdefinition}).]{
        \includegraphics[width=.48\linewidth]{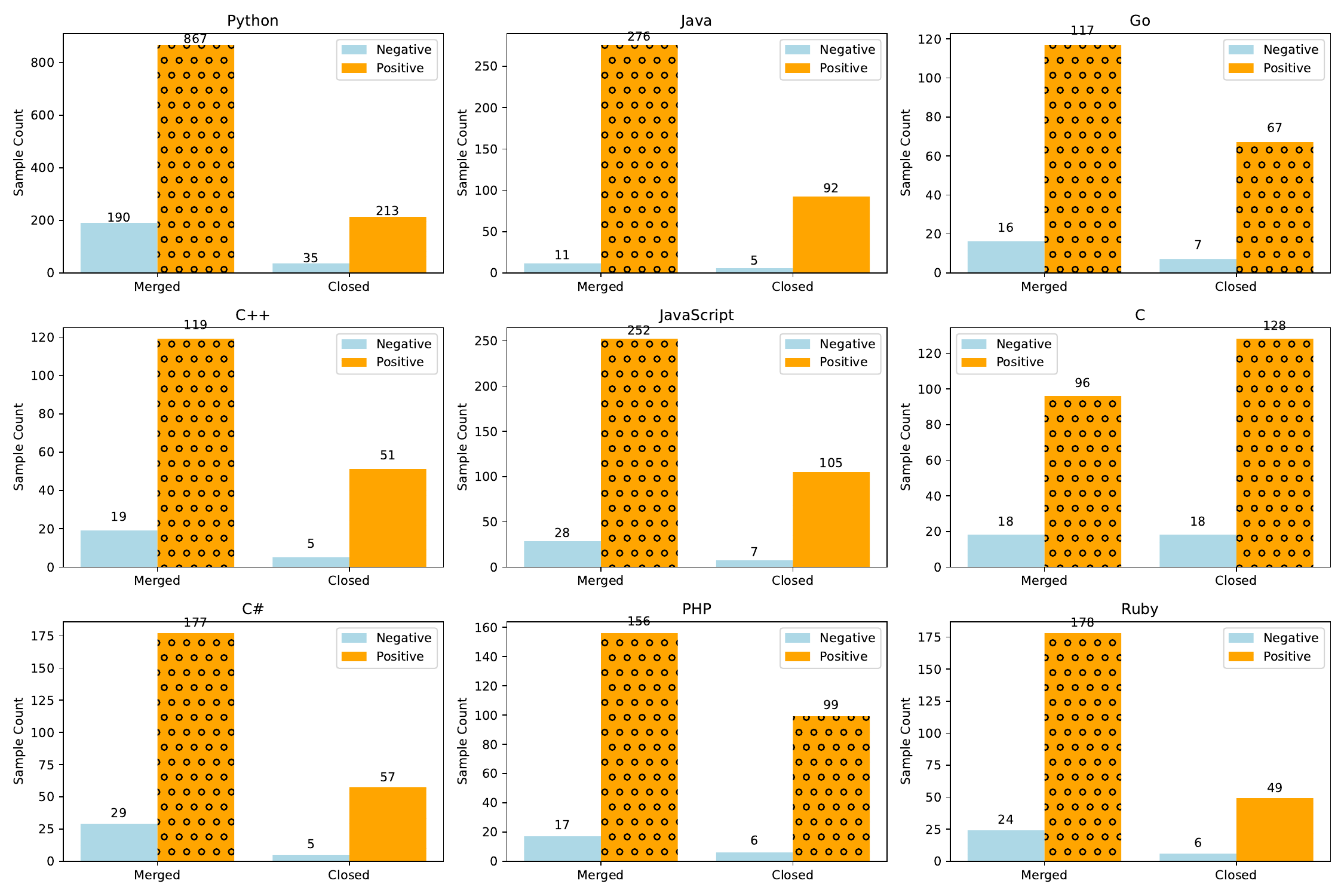}
        \label{fig:datafa}
    }
    \caption{Distribution of positive, negative of both merged and closed data across 9 languages, including `python', `java', `go', `c++', `javascript', `c', `c\#', `php', `ruby'.}
    \label{fig:datadetail}
\end{figure*}

% \begin{figure}[h]
%     \centering
    
%     \begin{subfigure}
%         \centering
%         \includegraphics[width=\linewidth]{figures/CA.pdf}
%         \caption{Positive and negative data of both merged and closed commits across 9 languages on CA task (see~\ref{sec:concept} and Table~\ref{tab:tasks}).}
%         \label{fig:dataca}
%     \end{subfigure}%
%     \quad % Optional spacing between subfigures
%     \begin{subfigure}
%         \centering
%         \includegraphics[width=\linewidth]{figures/FA.pdf}
%         \caption{}
%         \label{fig:datafa}
%     \end{subfigure}
    
%     \caption{Distribution of positive, negative of both merged and closed data across 9 languages, including `python', `java', `go', `c++', `javascript', `c', `c\#', `php', `ruby'.}
%     \label{fig:datadetail}
% \end{figure}

\paragraph{Dataset Description}
% \begin{wrapfigure}{r}{0.5\textwidth} % 'r' for right side (you can use 'l' for left), and 0.5\textwidth for the width of the figure
%     \centering
%     \includegraphics[width=0.45\textwidth]{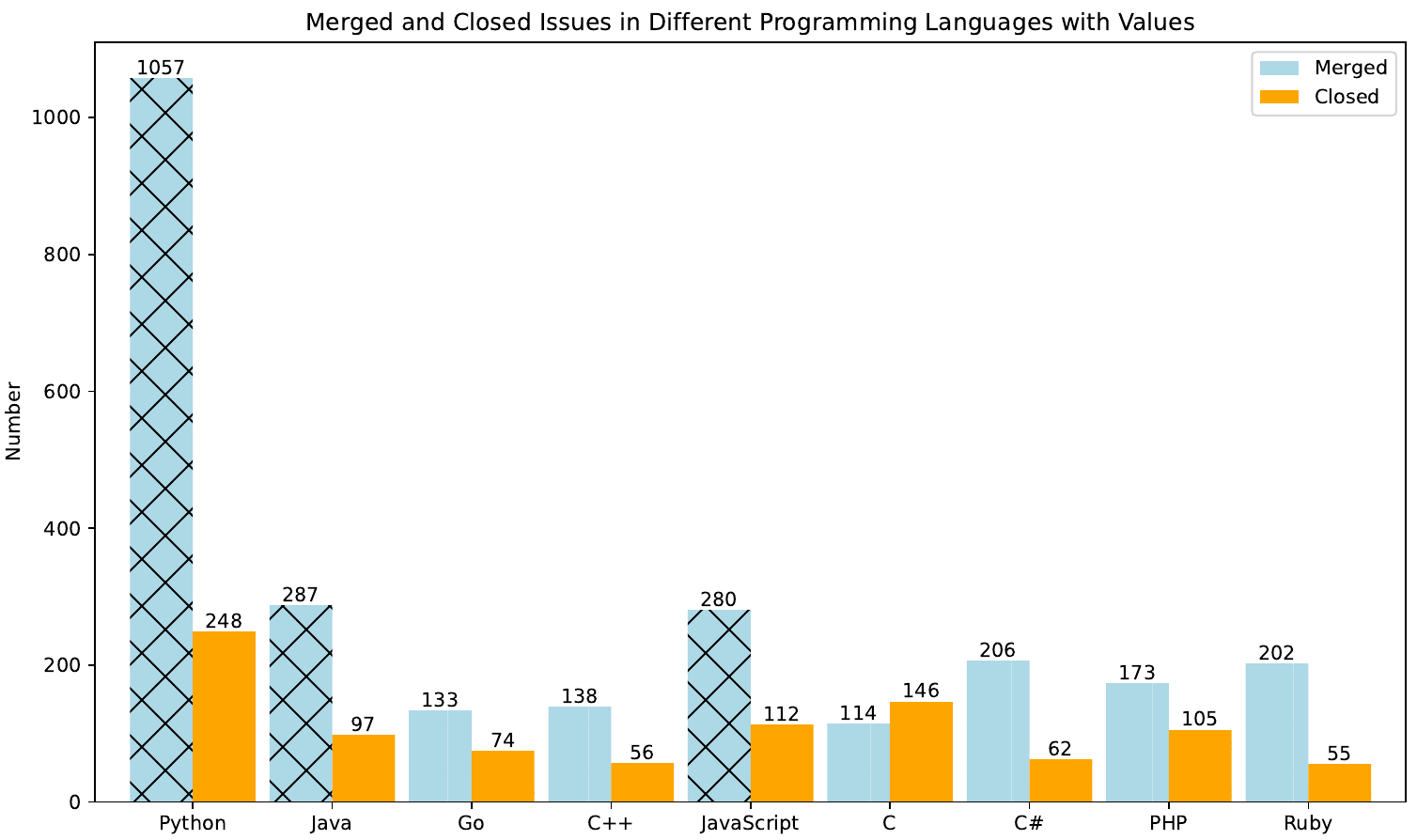} % Adjust the width as necessary
%     \caption{Comparative Visualization of Merged and Closed Commit Counts Across Various Programming Languages}
%     \label{fig:newdata}
% \end{wrapfigure}

\begin{figure}[htbp]
    \centering
    \includegraphics[width=0.45\textwidth]{figures/newdatainfo.pdf} % Adjust the width as necessary
    \caption{Comparative Visualization of Merged and Closed Commit Counts Across Various Programming Languages}
    \label{fig:newdata}
\end{figure}
Our dataset, illustrated in Fig.~\ref{fig:newdata}, encapsulates a detailed analysis of consistency and format detection in software development, spanning various programming languages. It includes CA (consistency between commit and commit message (See Sec~\ref{sec:taskdefinition})) and FA (format consistency between commit and original (See Sec~\ref{sec:taskdefinition})) data, segmented into positive and negative samples based on the merged and closed status of pull requests. For example, in Python, the dataset comprises 254 merged and 35 closed negative CA samples, alongside 803 merged and 213 closed positive CA samples, with corresponding distributions for other languages like Java, Go, C++, and more. Similarly, the FA data follows this pattern of positive and negative samples across languages. Figure~\ref{fig:datadetail} graphically represents this data, highlighting the distribution and comparison of merged versus closed samples in both CA and FA categories for each language. This comprehensive dataset, covering over 3,545 commits and nearly 2,933 pull requests from more than 180 projects, was meticulously compiled using a custom crawler designed for GitHub API interactions, targeting post-April 2023 repositories to ensure up-to-date and diverse data for an in-depth analysis of current software development trends.

% \begin{figure}
%     \centering
%     \includegraphics[width=\linewidth]{figures/newdatainfo.pdf}
%     \caption{Comparative Visualization of Merged and Closed Commit Counts Across Various Programming Languages}
%     \label{fig:newdata}
% \end{figure}

\begin{table}[ht!]
\centering
\caption{Statistics of Studied Datasets.}
\label{tab:datasets}
\begin{tabular}{@{}lrrr@{}}
\hline

\hline

\hline

\hline
\textbf{Dataset Statistics}                 & \multicolumn{1}{c}{\textbf{\#Train}} & \multicolumn{1}{c}{\textbf{\#Valid}} & \multicolumn{1}{c}{\textbf{\#Test}} \\ 
\midrule
\multicolumn{1}{l}{\textbf{Trans-Review}}  & \multicolumn{1}{r}{13,756}          & \multicolumn{1}{r}{1,719}           & \multicolumn{1}{r}{1,719}         \\ 
\midrule
\multicolumn{1}{l}{\textbf{AutoTransform}} & \multicolumn{1}{r}{118,039}         & \multicolumn{1}{r}{14,750}          & \multicolumn{1}{r}{14,750}        \\ 
\midrule
\multicolumn{1}{l}{\textbf{T5-Review}}     & \multicolumn{1}{r}{134,239}         & \multicolumn{1}{r}{16,780}          & \multicolumn{1}{r}{16,780}        \\ 
\hline

\hline

\hline

\hline
\end{tabular}
\end{table}

\section{Key Factors Leading to Vulnerabilities}\label{sec:vulreason}

The following table outlines various key factors that can lead to vulnerabilities in software systems, along with their descriptions. These factors should be carefully considered and addressed to enhance the security of the system.
\begin{table*}[]
    \centering
\begin{supertabular}{p{0.6cm}|p{5cm}|p{10cm}}
\hline

\hline

\hline

\hline
\textbf{No.} & \textbf{Vulnerability Factor} & \textbf{Description} \\
\hline

1 & Insufficient Input Validation & Check for vulnerabilities like SQL injection, Cross-Site Scripting (XSS), and command injection in new or modified code, especially where user input is processed. \\ \hline
2 & Buffer Overflows & Particularly in lower-level languages, ensure that memory management is handled securely to prevent overflows. \\ \hline
3 & Authentication and Authorization Flaws & Evaluate any changes in authentication and authorization logic for potential weaknesses that could allow unauthorized access or privilege escalation. \\ \hline
4 & Sensitive Data Exposure & Assess handling and storage of sensitive information like passwords, private keys, or personal data to prevent exposure. \\ \hline
5 & Improper Error and Exception Handling & Ensure that errors and exceptions are handled appropriately without revealing sensitive information or causing service disruption. \\ \hline
6 & Vulnerabilities in Dependency Libraries or Components & Review updates or changes in third-party libraries or components for known vulnerabilities. \\ \hline
7 & Cross-Site Request Forgery (CSRF) & Verify that adequate protection mechanisms are in place against CSRF attacks. \\ \hline
8 & Unsafe Use of APIs & Check for the use of insecure encryption algorithms or other risky API practices. \\ \hline
9 & Code Injection & Look for vulnerabilities related to dynamic code execution. \\ \hline
10 & Configuration Errors & Ensure that no insecure configurations or settings like open debug ports or default passwords have been introduced. \\ \hline
11 & Race Conditions & Analyze for potential data corruption or security issues arising from race conditions. \\ \hline
12 & Memory Leaks & Identify any changes that could potentially lead to memory leaks and resource exhaustion. \\ \hline
13 & Improper Resource Management & Check resource management, such as proper closure of file handles or database connections. \\ \hline
14 & Inadequate Security Configurations & Assess for any insecure default settings or unencrypted communications. \\ \hline
15 & Path Traversal and File Inclusion Vulnerabilities & Examine for risks that could allow unauthorized file access or execution. \\ \hline
16 & Unsafe Deserialization & Look for issues that could allow the execution of malicious code or tampering with application logic. \\ \hline
17 & XML External Entity (XXE) Attacks & Check if XML processing is secure against XXE attacks. \\ \hline
18 & Inconsistent Error Handling & Review error messages to ensure they do not leak sensitive system details. \\ \hline
19 & Server-Side Request Forgery (SSRF) & Analyze for vulnerabilities that could be exploited to attack internal systems. \\ \hline
20 & Unsafe Redirects and Forwards & Check for vulnerabilities leading to phishing or redirection attacks. \\ \hline
21 & Use of Deprecated or Unsafe Functions and Commands & Identify usage of any such functions and commands in the code. \\ \hline
22 & Code Leakages and Hardcoded Sensitive Information & Look for hardcoded passwords, keys, or other sensitive data in the code. \\ \hline
23 & Unencrypted Communications & Verify that data transmissions are securely encrypted to prevent interception and tampering. \\ \hline
24 & Mobile Code Security Issues & For mobile applications, ensure proper handling of permission requests and secure data storage. \\ \hline
25 & Cloud Service Configuration Errors & Review any cloud-based configurations for potential data leaks or unauthorized access. \\
\hline

\hline

\hline

\hline
\end{supertabular}
\end{table*}

% \input{appendix/5.prtest/pullrequest}
%\section{Code Review for Complete Pull Request}\label{sec:prtest}
%In practice, one pull request may consist of multiple commits, and in one commit, multiple files are changed. Here, we explore the performance \tool{} in complex practical pull requests. 

% \input{appendix/6.dataleakage/dataleakge}
\section{Data Leakage Statement}
As the new dataset introduced in Section~\ref{sec:dataset}, the time of the collected dataset is after April 2023, avoiding data leakage while we evaluate \tool{} on \textbf{codeData} dataset.

\section{Algorithmic Description of \tool{} Pipeline with QA-Checker}

This algorithm demonstrates the integration of QA-Checker within the \tool{} pipeline, employing mathematical equations to describe the QA-Checker's iterative refinement process.

\begin{algorithm}[H]
\caption{Integrated Workflow of \tool{} with QA-Checker}
\begin{algorithmic}
    \State \textbf{Input:} Code submission, commit message, original files
    \State \textbf{Output:} Refined code review document

    \State Initialize phase $p = 1$
    \While{$p \leq 4$}
        \State \textbf{Switch:} Phase $p$
        \State \textbf{Case 1: Basic Info Sync}
            \State Conduct initial information analysis
            \State Update: $p = 2$
            
        \State \textbf{Case 2: Code Review}
            \State Perform code review with Coder and Reviewer
            \State Update: $p = 3$
            
        \State \textbf{Case 3: Code Alignment}
            \State Apply code revisions based on feedback
            \State Update: $p = 4$
            
        \State \textbf{Case 4: Document}
            \State Finalize review document
            \State Update: $p = 5$ (End)
            
        \State \textbf{QA-Checker Refinement} (Applies in Cases 2 and 3)
            \State Let $Q_i$ be the current question and $A_i$ the current answer
            \State Evaluate response quality: $qScore = \mathcal{Q}(Q_i, A_i)$
            \If{$qScore$ below threshold}
                \State Generate additional instruction $aai$
                \State Update question: $Q_{i+1} = Q_i + aai$
                \State Request new response: $A_{i+1}$
            \EndIf
    \EndWhile

    \State \textbf{Return:} Refined code review document
\end{algorithmic}
\end{algorithm}

In this algorithm, $\mathcal{Q}(Q_i, A_i)$ represents the quality assessment function of the QA-Checker, which evaluates the relevance and accuracy of the answer $A_i$ to the question $Q_i$. If the quality score $qScore$ is below a predefined threshold, the QA-Checker intervenes by generating an additional instruction $aai$ to refine the question, prompting a more accurate response in the next iteration.

\section{Detailed Performance of \tool{} in Various Languages on \textit{VA} task} \label{sec:vadetail}
In our comprehensive analysis using \tool{}, as detailed in Table~\ref{tab:languagediff}, we observe a diverse landscape of confirmed vulnerabilities across different programming languages. The table categorizes these vulnerabilities into `merged' and `closed' statuses for languages such as Python, Java, Go, C++, JavaScript, C, C\#, PHP, and Ruby. A significant finding is a markedly high number of `merged' vulnerabilities in Python, potentially reflective of its extensive application or intrinsic complexities leading to security gaps. Conversely, languages like Go, Ruby, and C exhibit notably lower counts in both categories, perhaps indicating lesser engagement in complex applications or more robust security protocols. Table~\ref{tab:languagediff} that the `closed' category consistently presents lower vulnerabilities than `merged' across most languages, signifying effective resolution mechanisms. However, an exception is noted in C, where `closed' counts surpass those of `merged', possibly indicating either delayed vulnerability identification or efficient mitigation strategies. Remarkably, the Rate$_{close}$ is generally observed to be higher than Rate$_{merge}$ across the languages, exemplifying a significant reduction in vulnerabilities post-resolution. For example, Python demonstrates a Rate$_{merge}$ of 14.00\% against a higher Rate$_{close}$ of 18.16\%. This trend is consistent in most languages, emphasizing the importance of proactive vulnerability management. The Rate$_{avg}$, representing the proportion of confirmed vulnerabilities against the total of both merged and closed items, further elucidates this point, with C++ showing the highest Rate$_{avg}$ at 16.49\%. These insights not only underline the diverse vulnerability landscape across programming languages but also highlight the adeptness of \tool{} in pinpointing and verifying vulnerabilities in these varied contexts.

\begin{table*}[htbp]
\caption{Vulnerable problems (\#) found by \tool{}. Rate$_{merge}$ means the value of confirmed divided by the total number in the merged and  Rate$_{close}$ is the value of confirmed divided by the total number in the closed. Rate$_{avg}$ is the value of the confirmed number divided by the total number of the merged and closed.}
\label{tab:languagediff}
\resizebox{1\textwidth}{!}{%
\begin{tabular}{l|ccccccccc}
\hline

\hline

\hline

\hline
\tool{}&  Python & Java & Go & C++ & JavaScript & C & C\# & PHP & Ruby\\ 
\hline

\hline
% \rowcolor{blue!20} 
merged (total\#) &  1,057 & 287 & 133 & 138 & 280 & 114 & 206 & 173 & 202\\
% \rowcolor{blue!20}
merged (confirmed\#) &  148 & 17 & 11 & 19 & 34 & 9 & 21 & 28 & 20\\
% \rowcolor[gray]{.9} 
Rate$_{merge}$ &  14.00\% & 5.92\% & 8.27\% & 13.77\% & 12.14\% & 7.89\% & 10.19\% & \cellcolor{gray!30}16.18\% & 9.90\%\\
% \rowcolor{green!20} 
closed (total\#)&  248 & 97 & 74 & 56 & 112 & 146 &62 & 105 & 55\\ 
% \rowcolor{green!20} 
closed (confirmed\#) &  45 & 10 & 5 & 13 & 16 & 26 &7 & 15 & 5\\ 
% \rowcolor[gray]{.9} 
Rate$_{close}$  &  18.16\% & 10.31\% & 6.76\% & \cellcolor{gray!30}23.2\% & 14.29\% & 17.81\% &11.29\% & 14.29\% & 9.09\% \\ 

% \rowcolor{yellow!20} 
Total number (\#) &  1,305 & 384 & 207 & 194 & 392 & 260 &268 & 278 & 257\\ 
% \rowcolor{yellow!20} 
Total confirmed (\#) & 193  & 27 & 16 & 32 & 50 & 35 &28 & 43 & 25\\ 
% \rowcolor[gray]{.9} 
Rate$_{avg}$ & 14.79\%  & 7.03\% & 7.73\% & \cellcolor{gray!30}16.49\% & 12.76\% & 13.46\% &10.45\% & 14.47\% & 9.73\%\\

\hline

\hline

\hline

\hline
\end{tabular}%
 }
\end{table*}

\section{More detailed experimental results on CA and FA tasks}\label{sec:cafadetailedexp}

Detailed experimental results of CA are shown in Figure~\ref{fig:mergedmetricdetail} and Figure~\ref{fig:closedmetricdetail}. Detailed experimental results of FA are shown in Figure \ref{famergedmetricdetail} and Figure \ref{faclosedmetricdetail}.

\begin{figure*}[htbp]
    \centering
    \subfigure{
        \includegraphics[width=.3\linewidth]{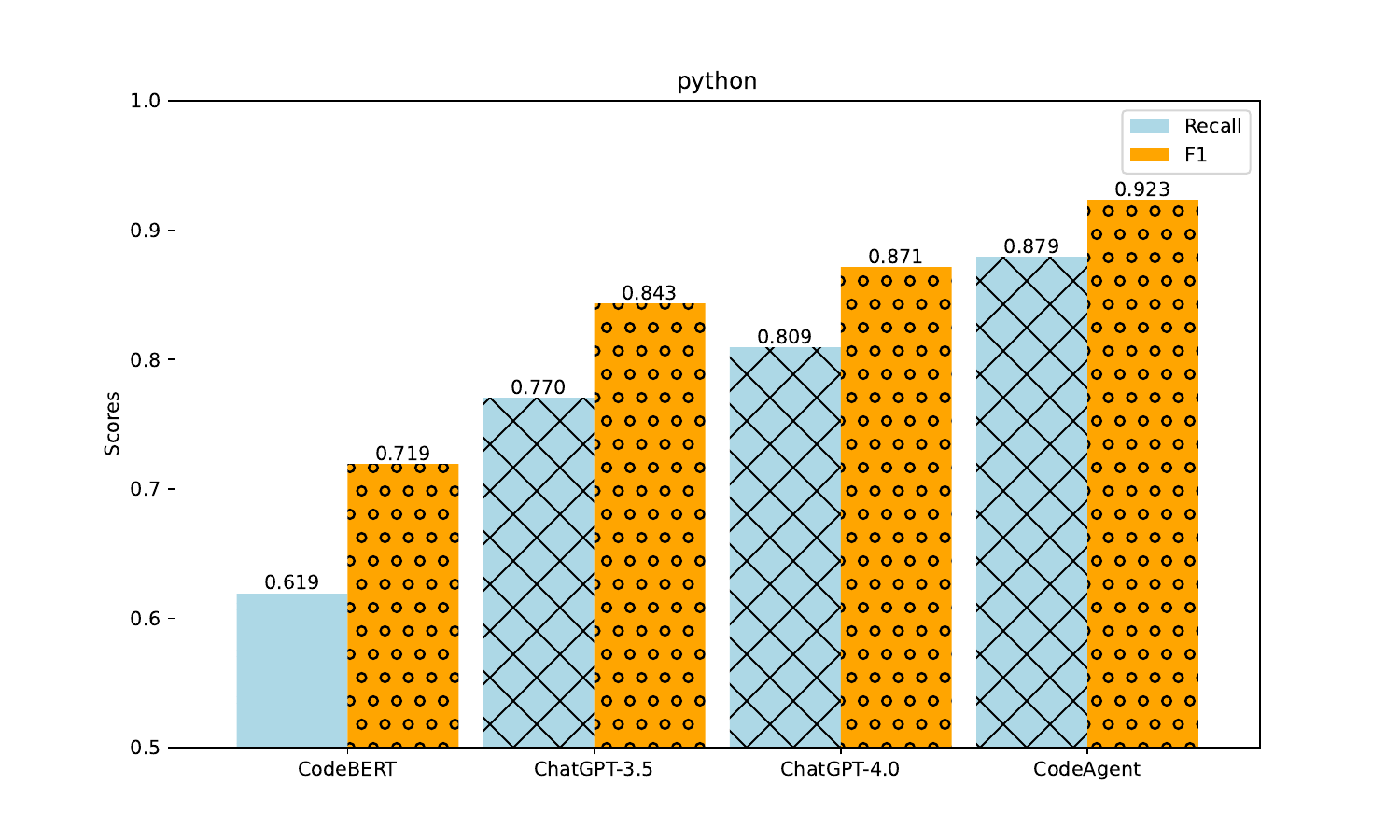}
        \label{fig:mergedpython}
    }% 
    \subfigure{
        \includegraphics[width=.3\linewidth]{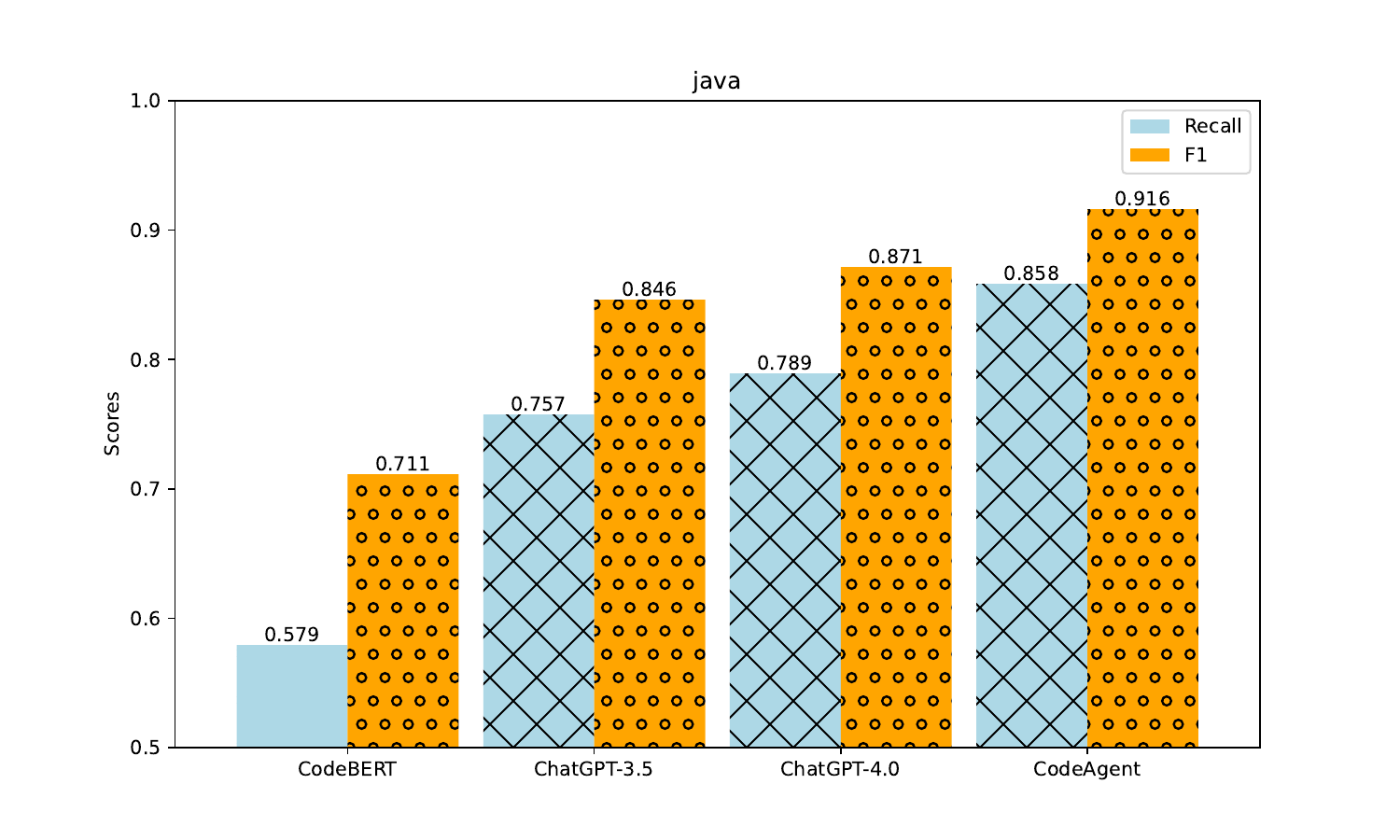}
        \label{fig:mergedjava}
    }
    \subfigure{
        \includegraphics[width=.3\linewidth]{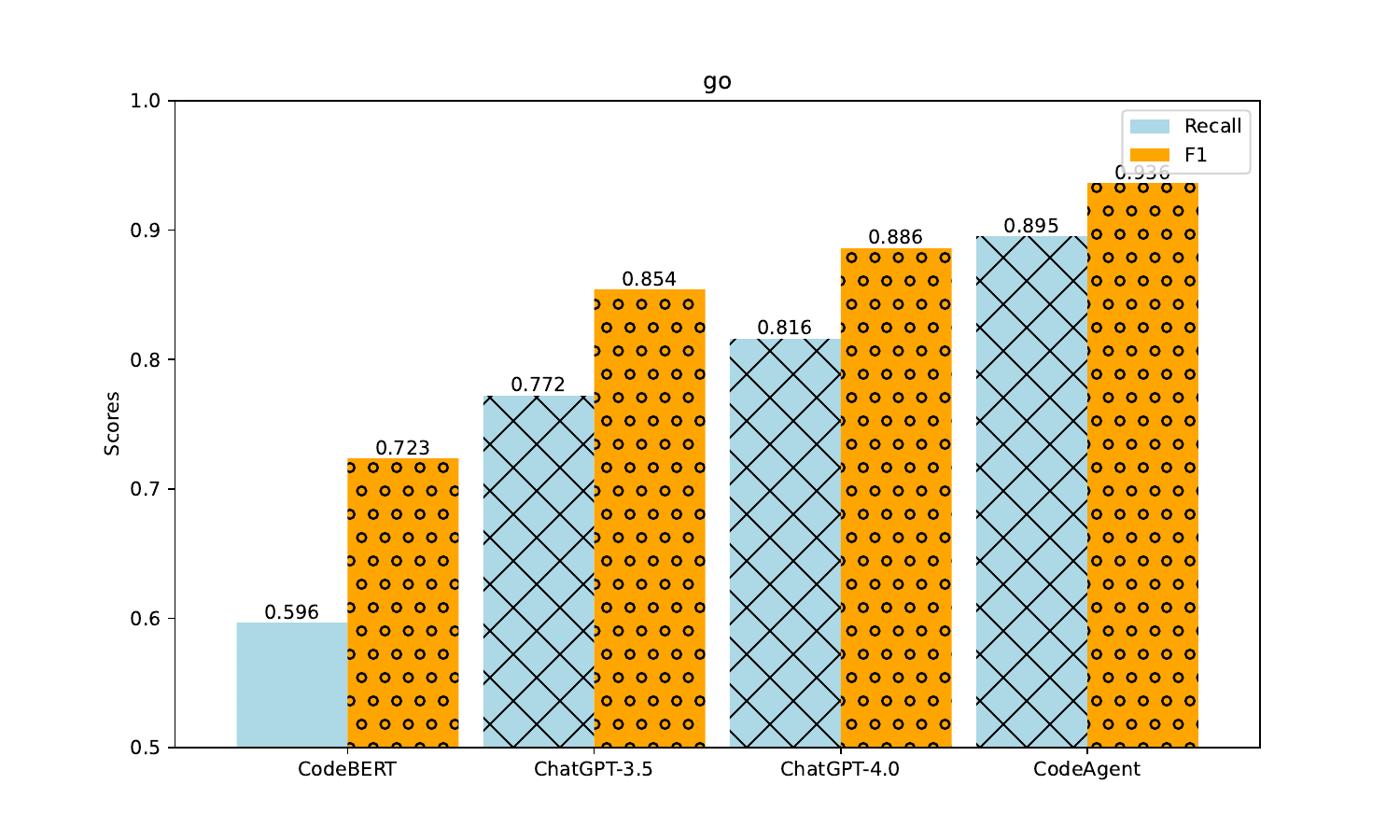}
        \label{fig:mergedgo}
    }
    \subfigure{
        \includegraphics[width=.3\linewidth]{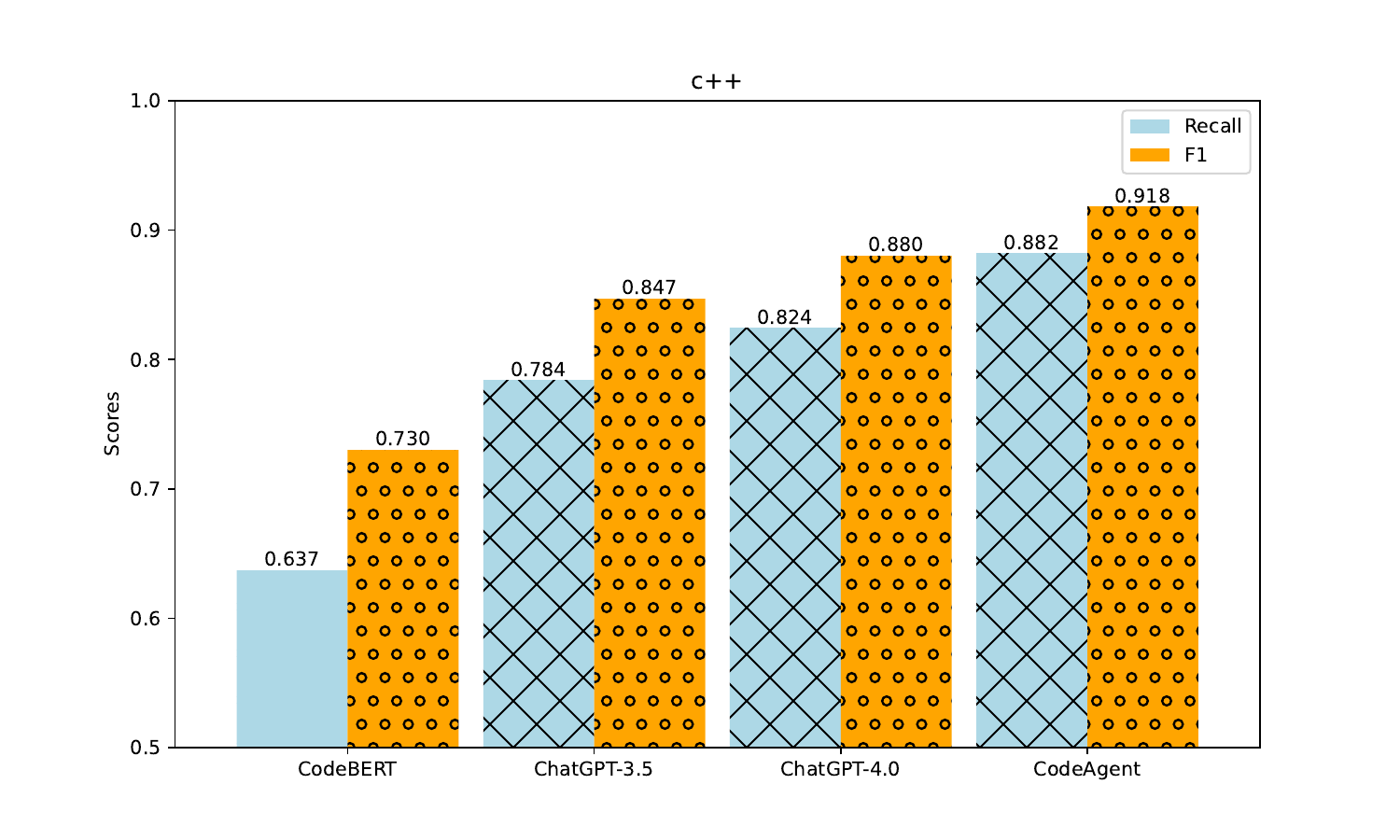}
        \label{fig:mergedc++}
    }
    \subfigure{
        \includegraphics[width=.3\linewidth]{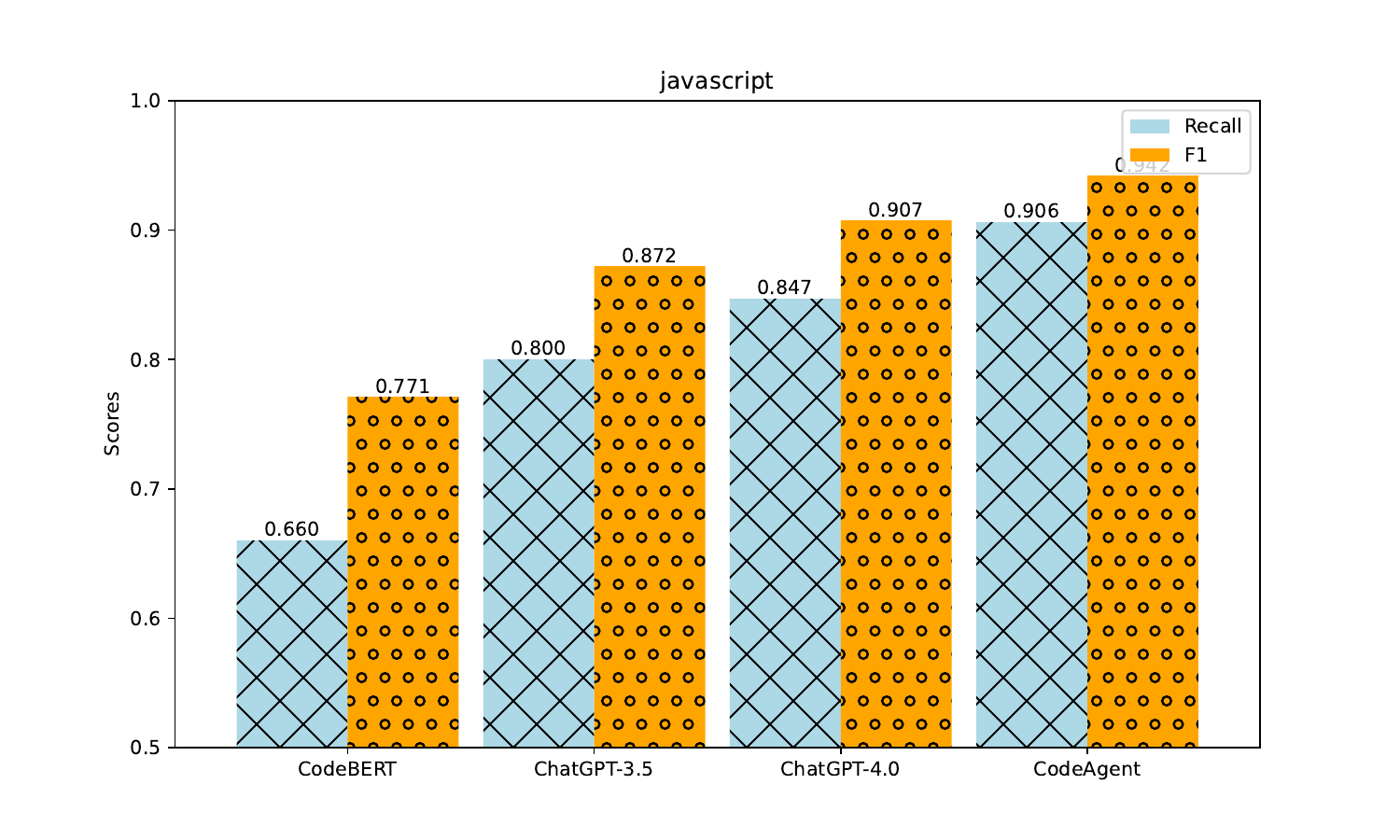}
        \label{fig:mergedjavascript}
    }
    \subfigure{
        \includegraphics[width=.3\linewidth]{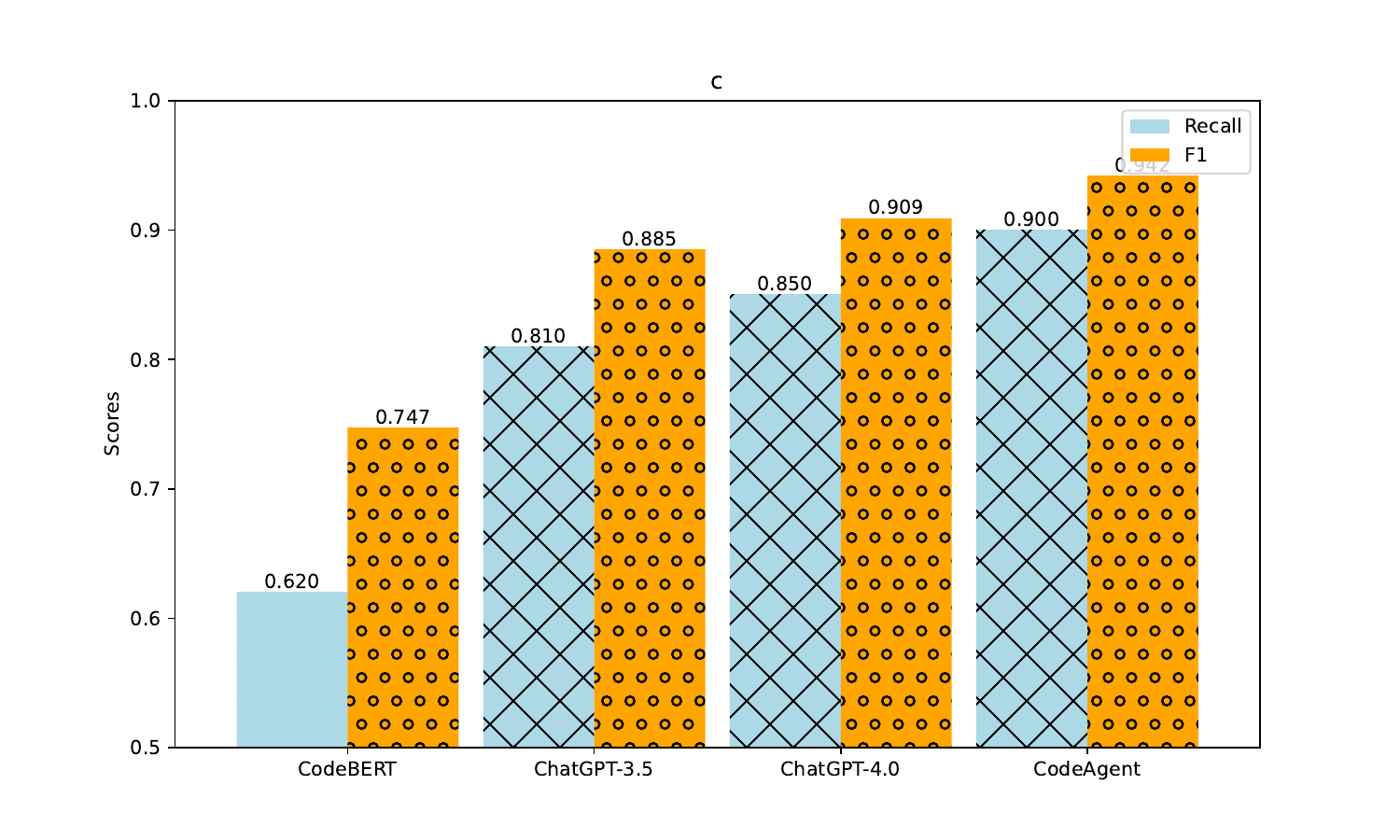}
        \label{fig:mergedc}
    }
    \subfigure{
        \includegraphics[width=.3\linewidth]{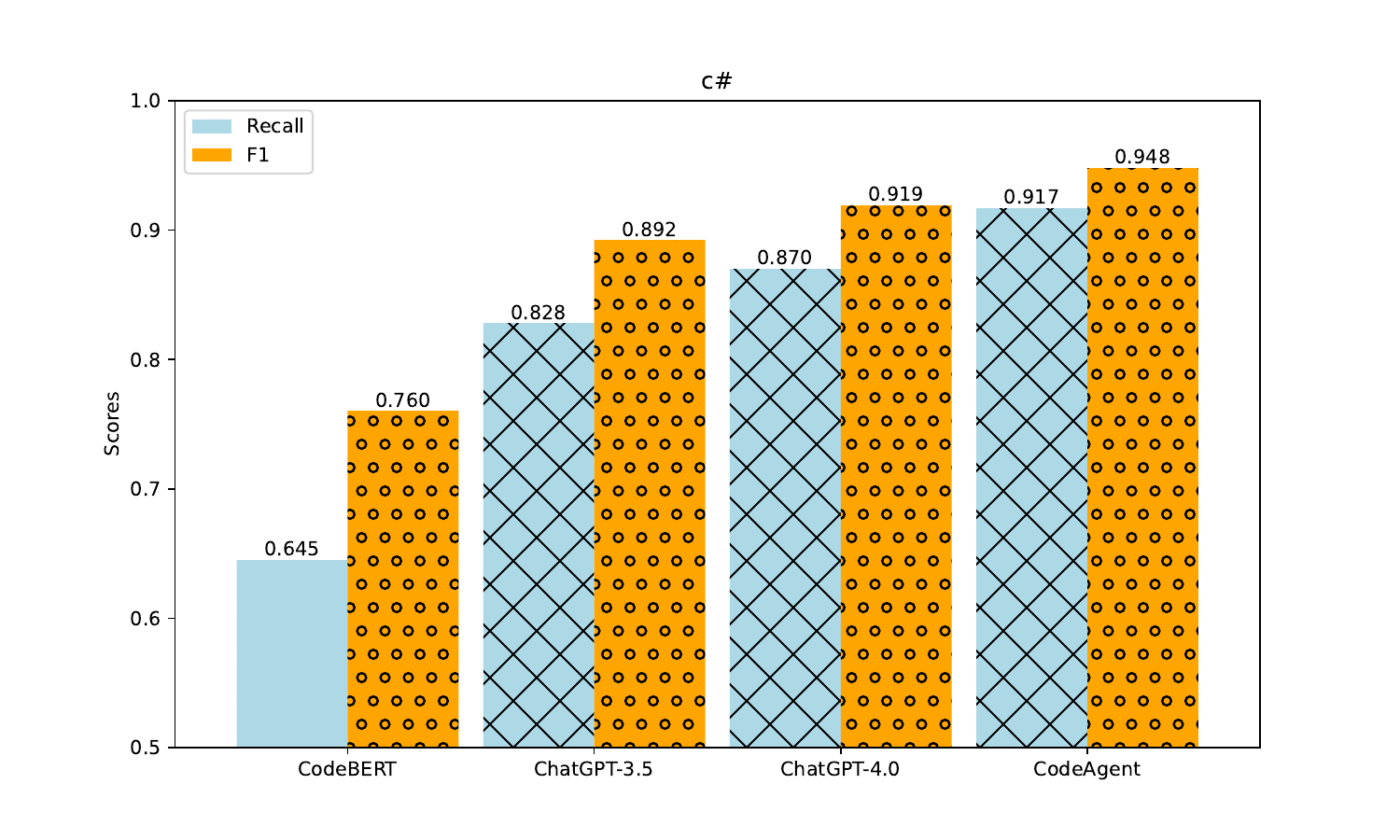}
        \label{fig:mergedcsharp}
    }
    \subfigure{
        \includegraphics[width=.3\linewidth]{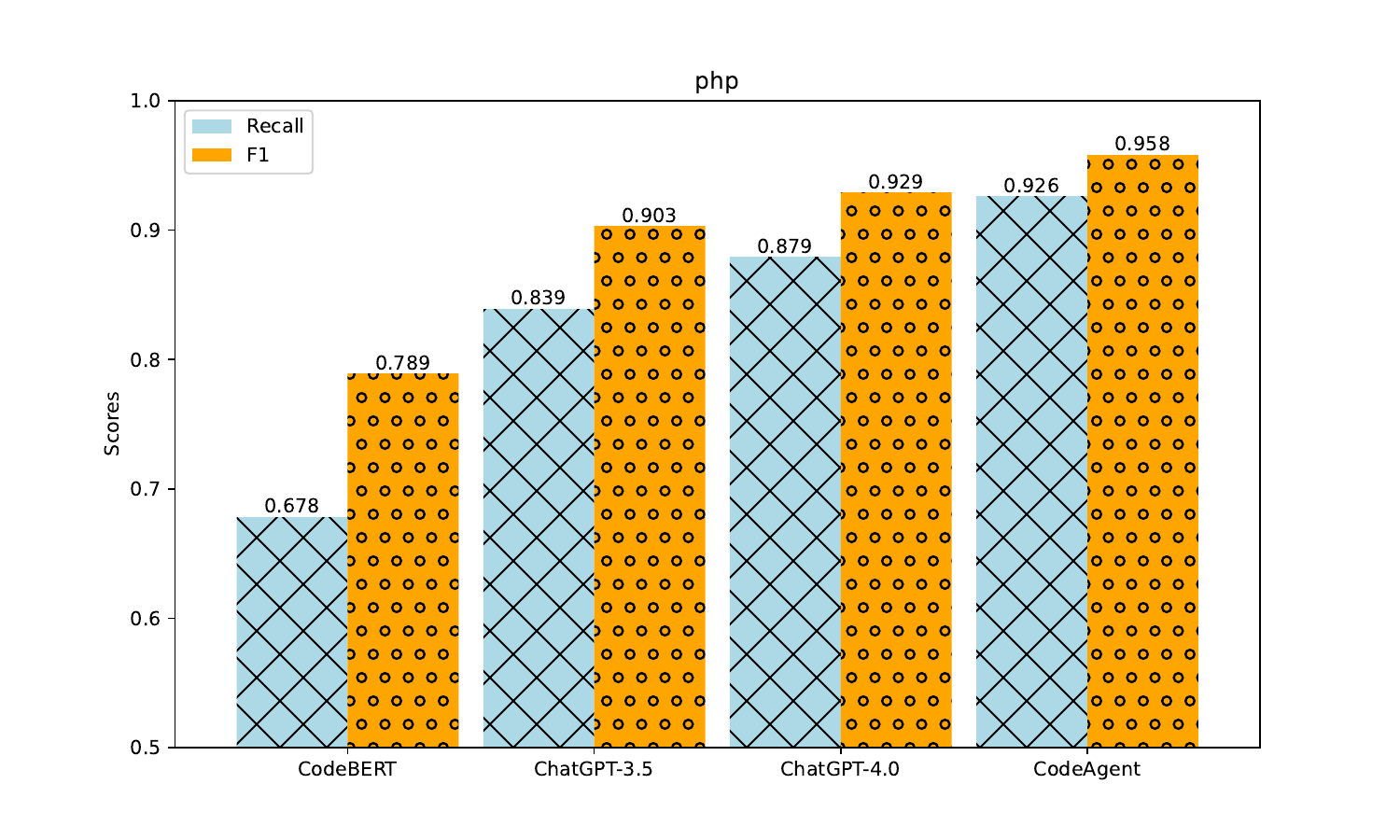}
        \label{fig:mergedphp}
    }
    \subfigure{
        \includegraphics[width=.3\linewidth]{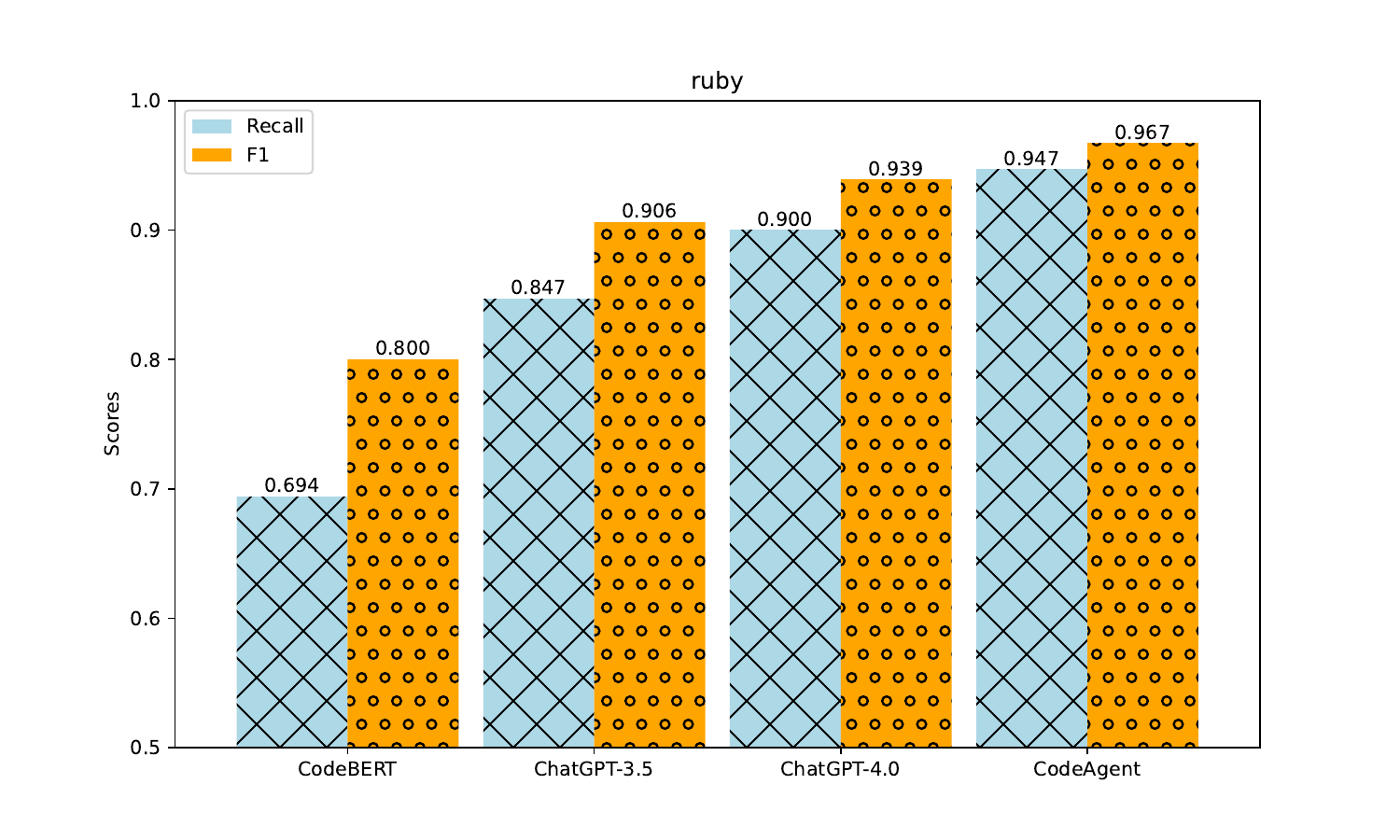}
        \label{fig:mergedruby}
    }
    \caption{Comparison of models on the \textbf{merged} data across 9 languages on \textbf{CA task}.}
    \label{fig:mergedmetricdetail}
\end{figure*}

\begin{figure*}[htbp]
    \centering
    \subfigure{
        \includegraphics[width=.3\linewidth]{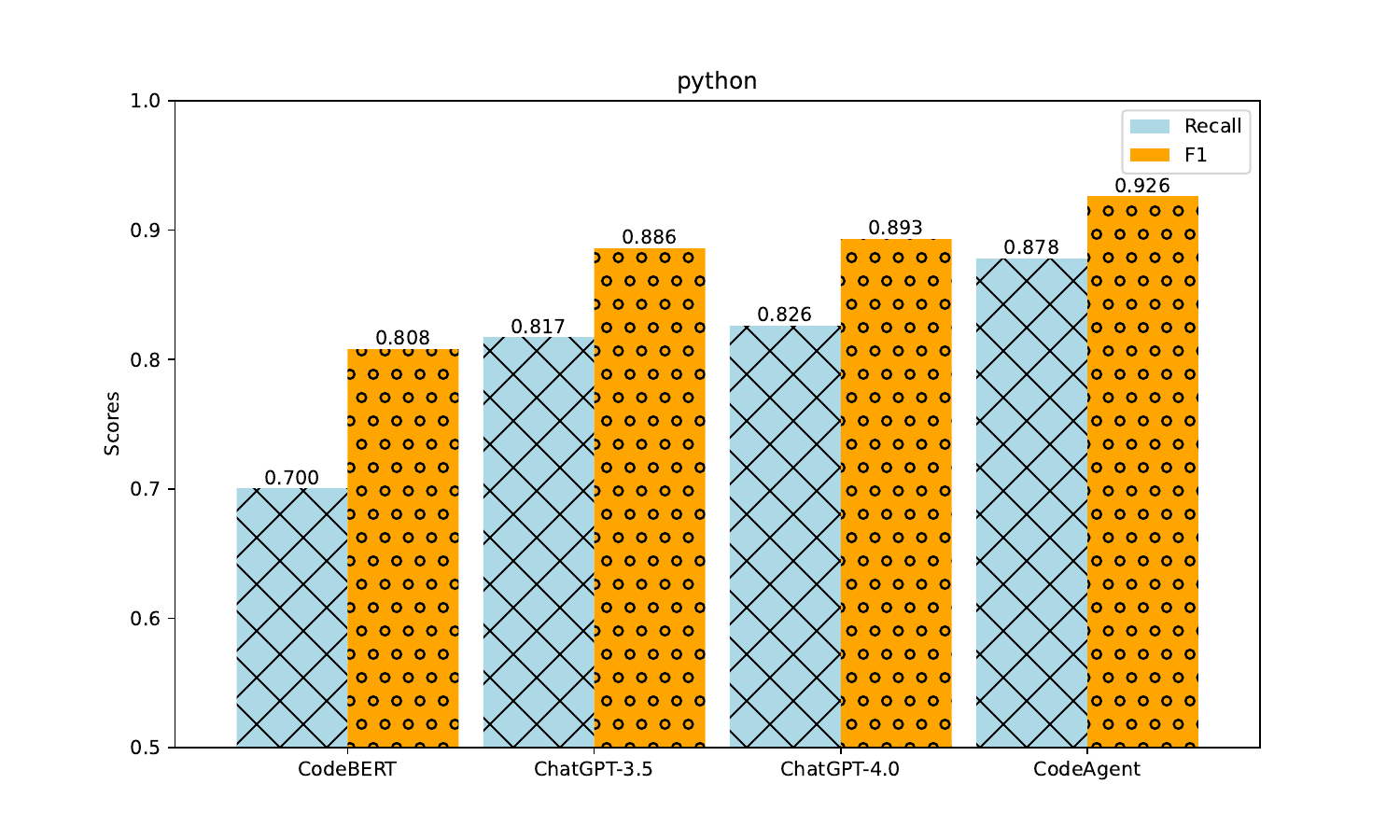}
        % \label{fig:mergedpython}
    }% 
    \subfigure{
        \includegraphics[width=.3\linewidth]{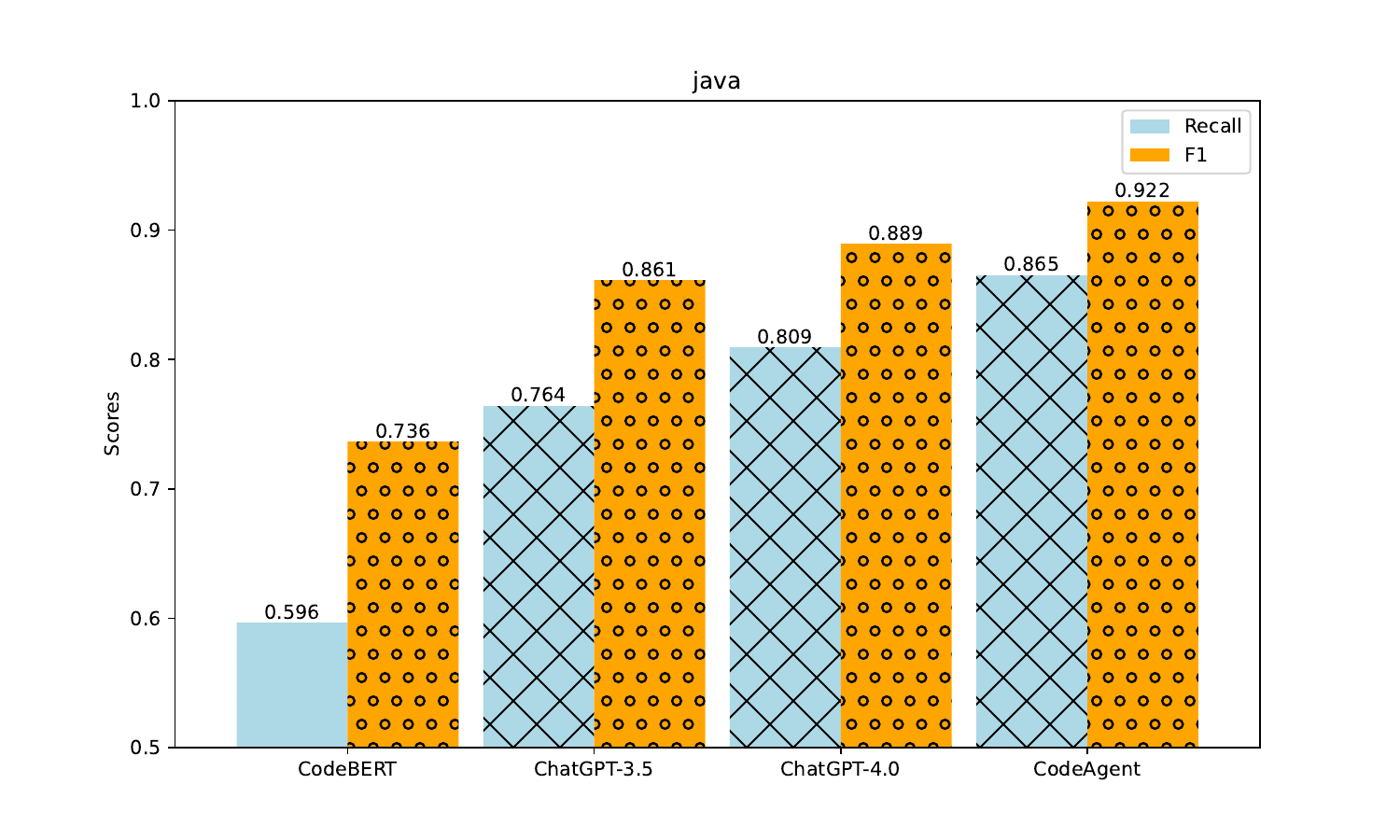}
        % \label{fig:mergedjava}
    }
    \subfigure{
        \includegraphics[width=.3\linewidth]{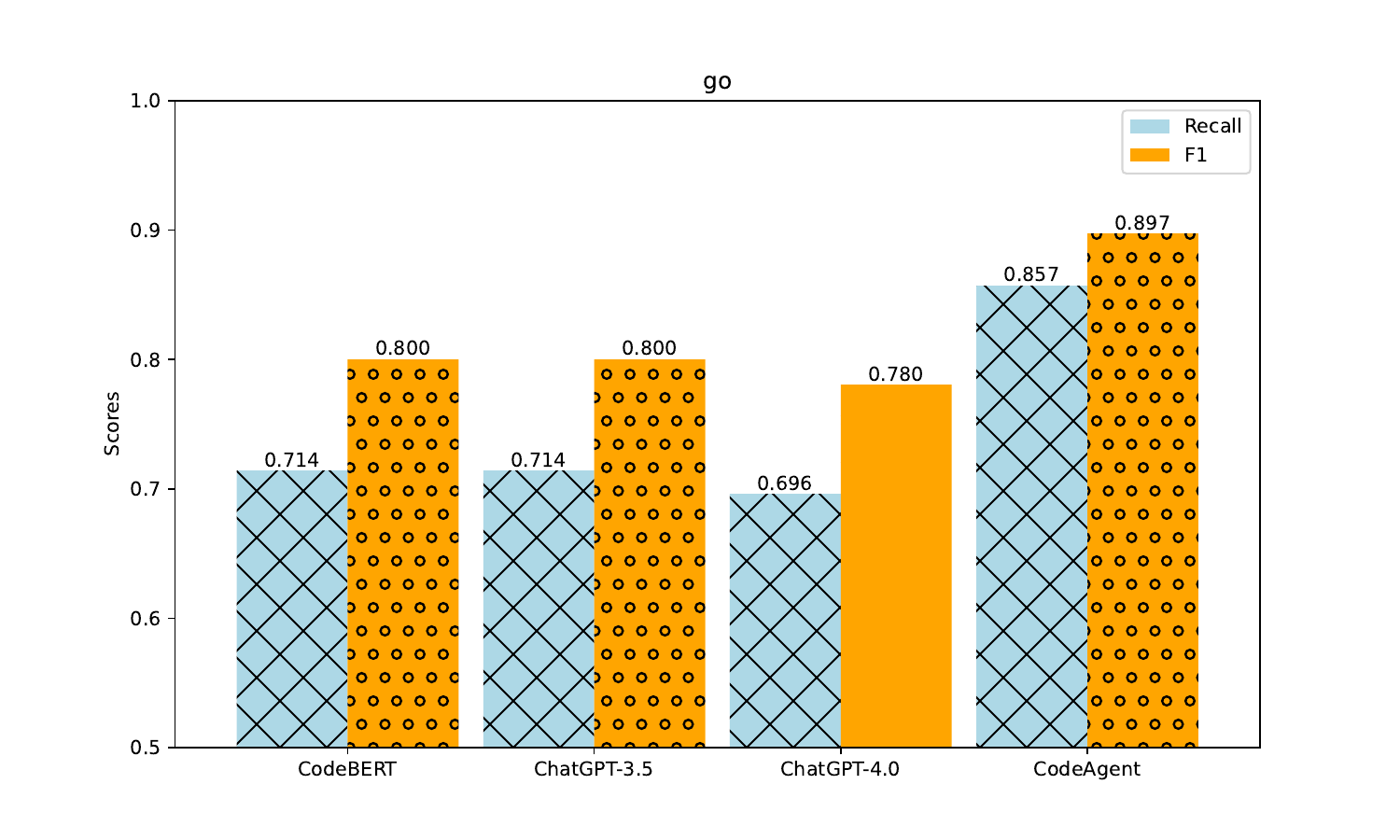}
        % \label{fig:mergedgo}
    }
    \subfigure{
        \includegraphics[width=.3\linewidth]{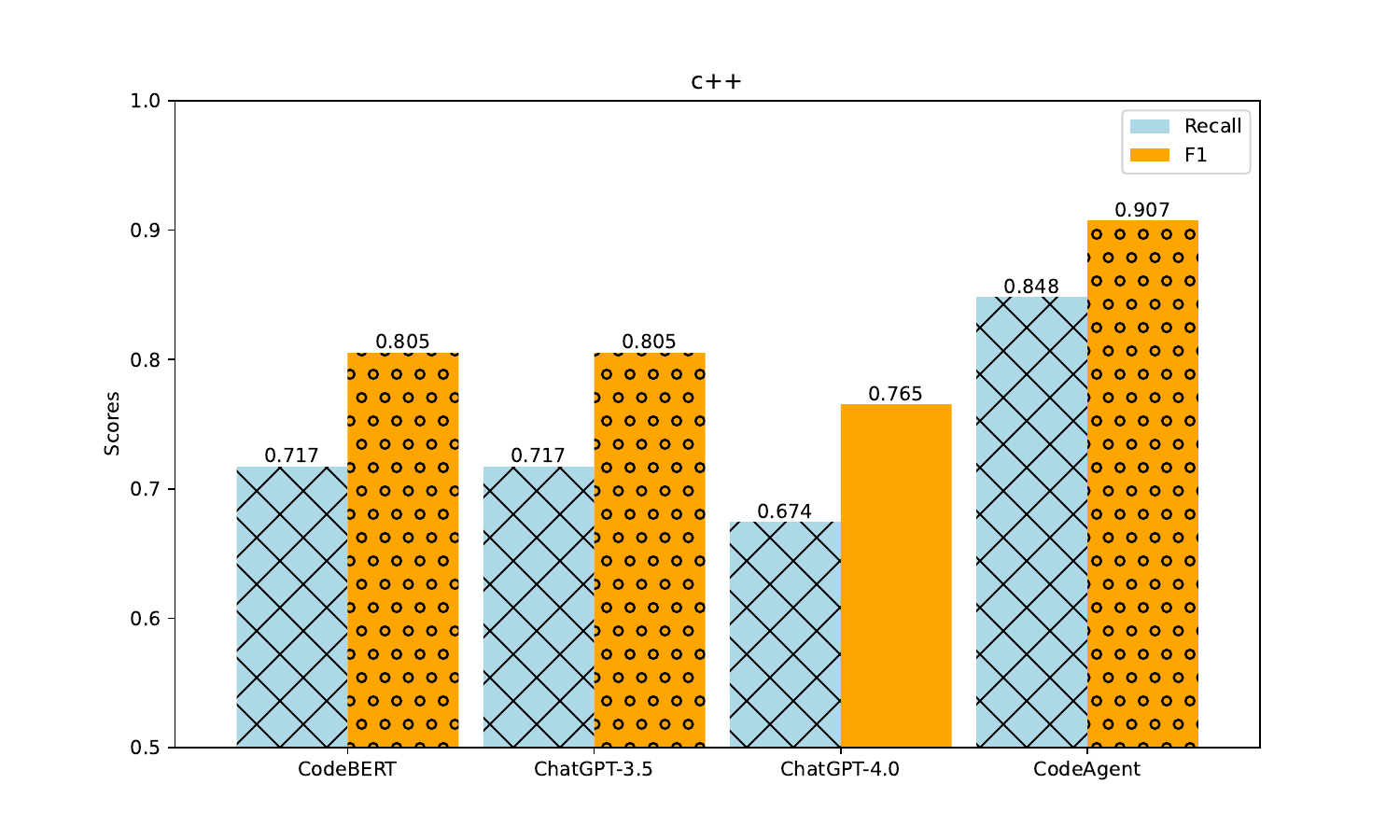}
        % \label{fig:mergedc++}
    }
    \subfigure{
        \includegraphics[width=.3\linewidth]{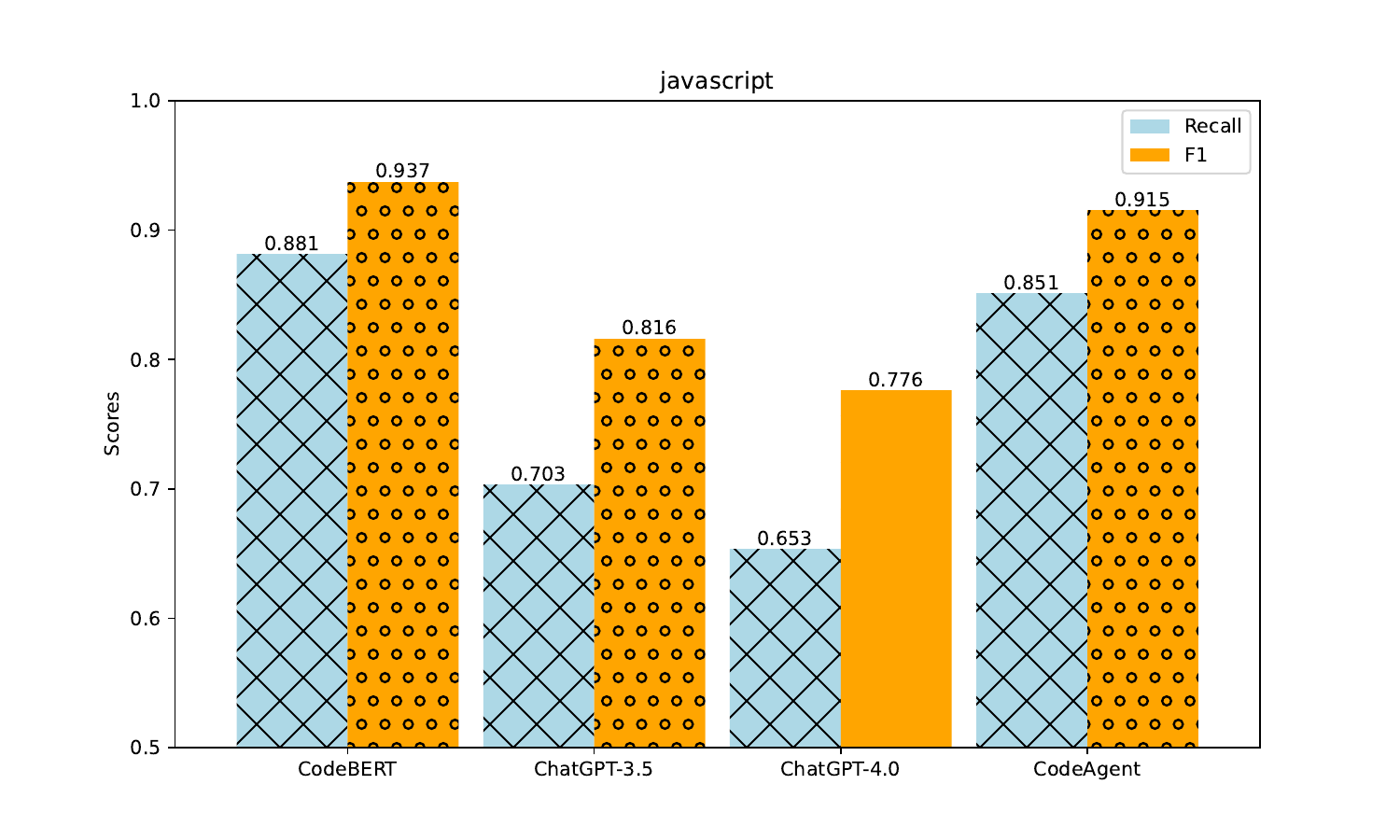}
        % \label{fig:mergedjavascript}
    }
    \subfigure{
        \includegraphics[width=.3\linewidth]{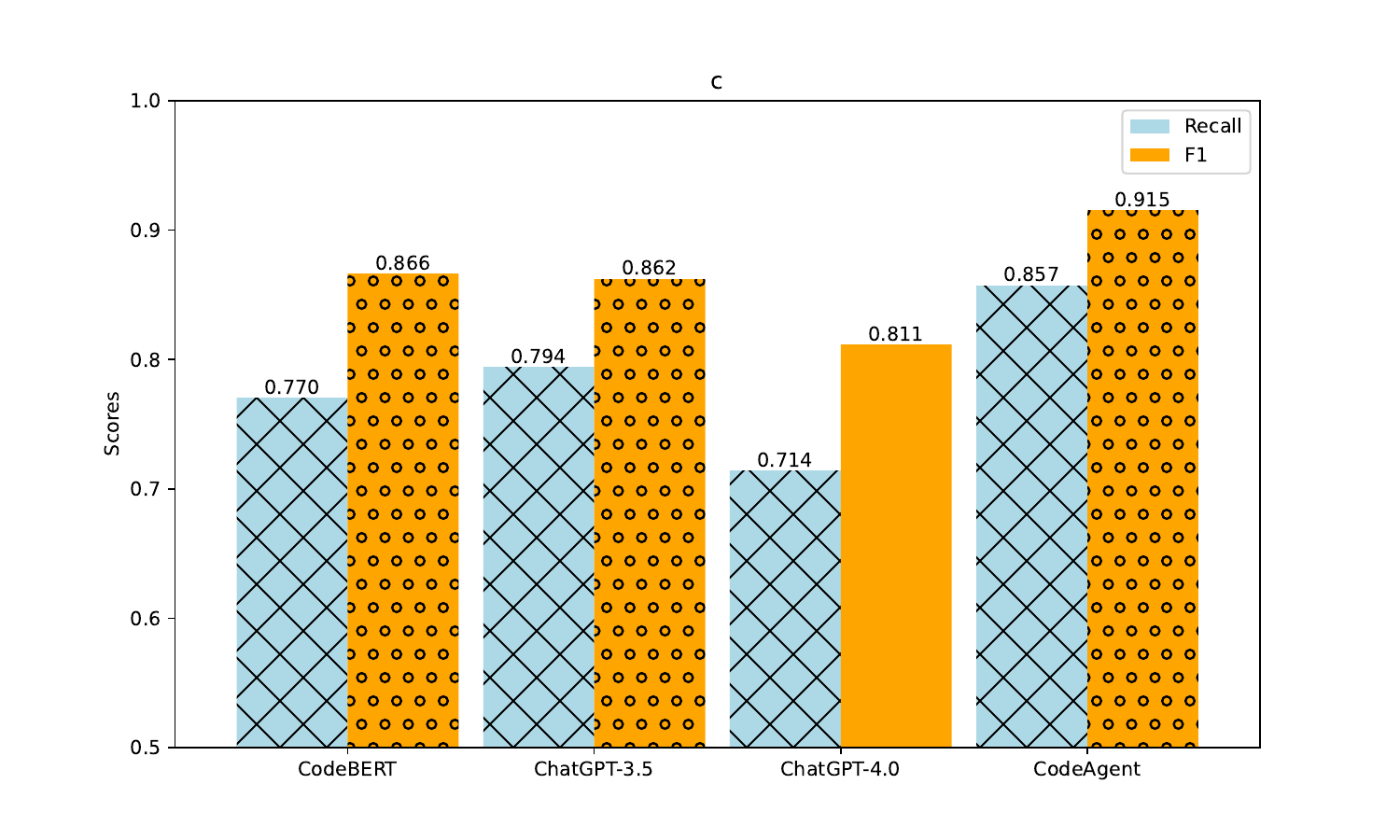}
        % \label{fig:mergedc}
    }
    \subfigure{
        \includegraphics[width=.3\linewidth]{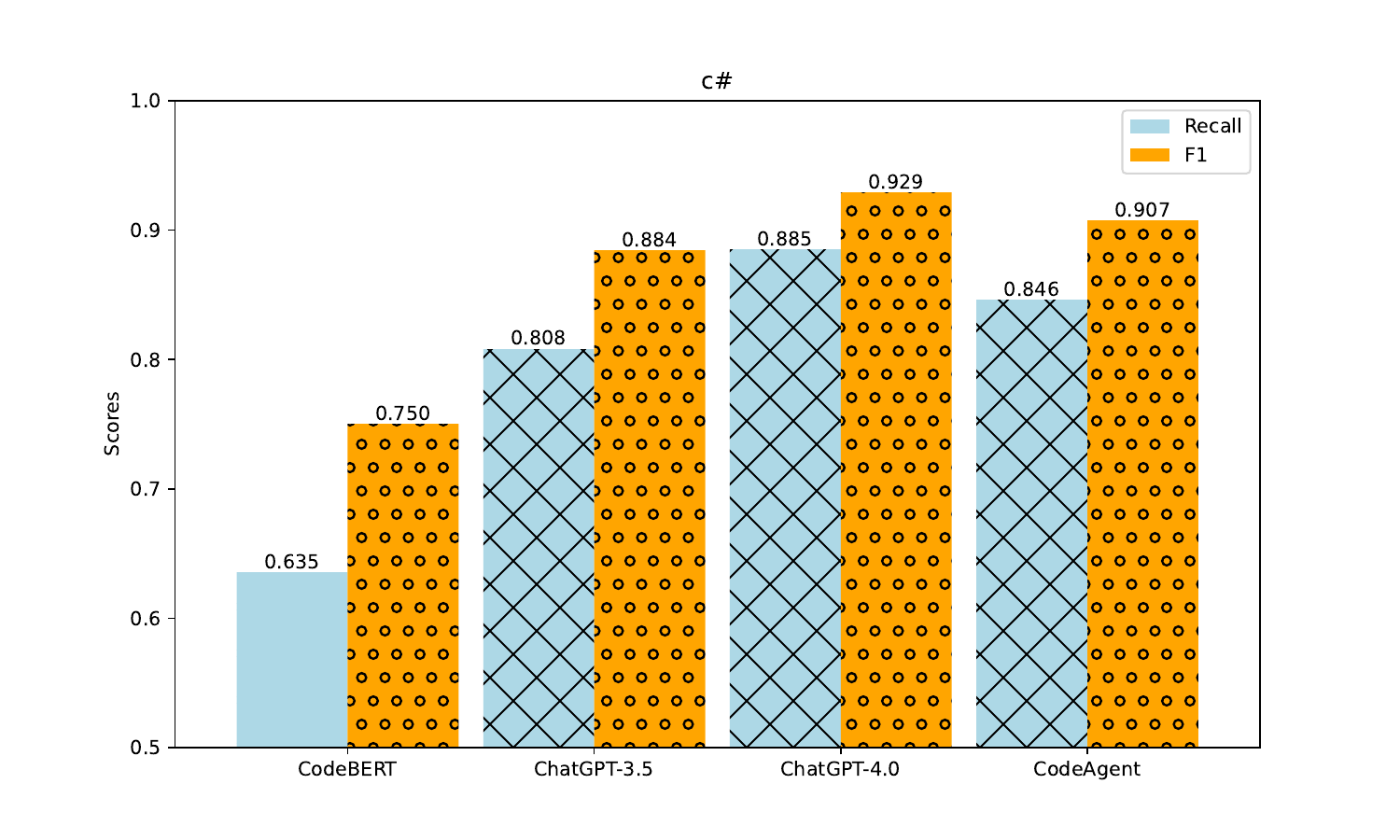}
        % \label{fig:mergedcsharp}
    }
    \subfigure{
        \includegraphics[width=.3\linewidth]{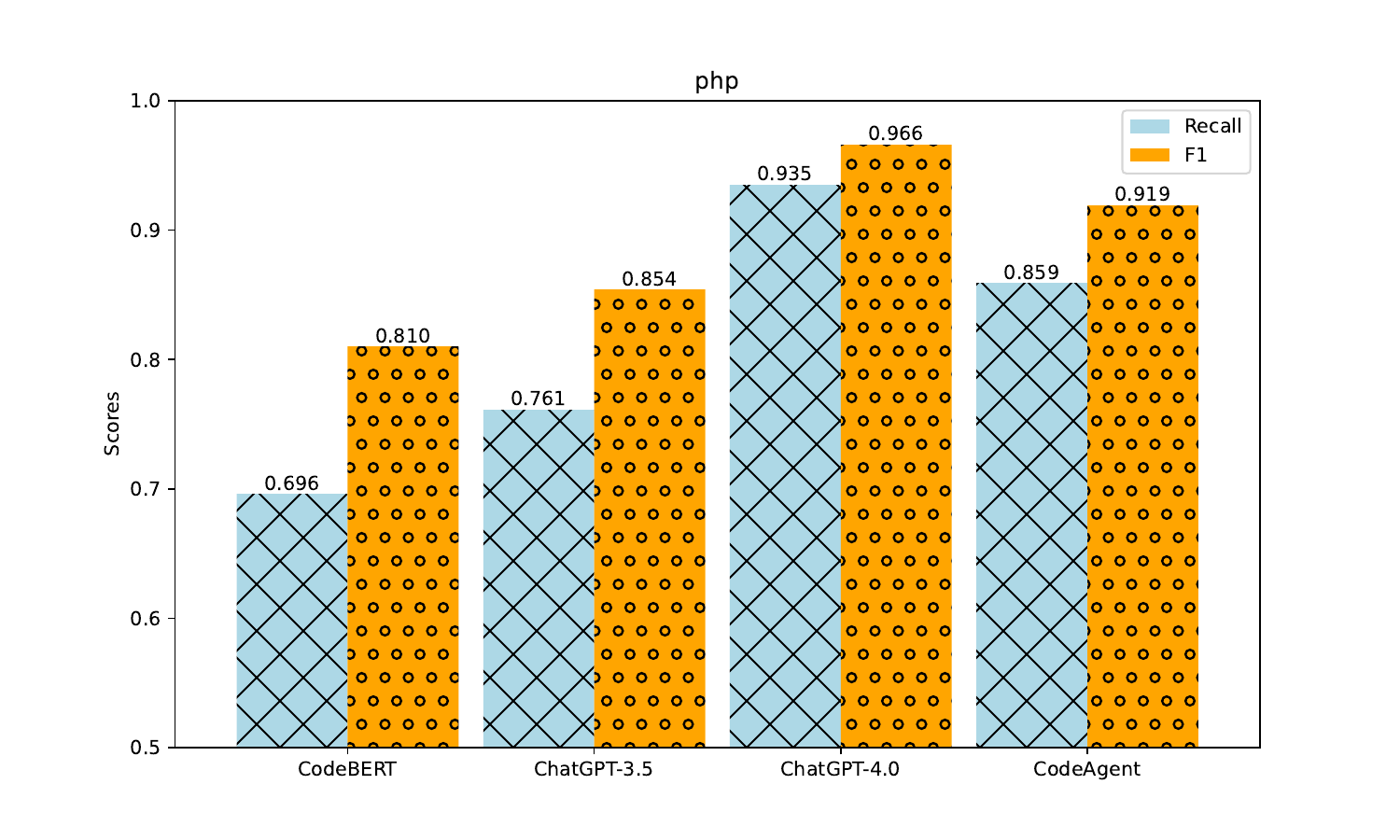}
        % \label{fig:mergedphp}
    }
    \subfigure{
        \includegraphics[width=.3\linewidth]{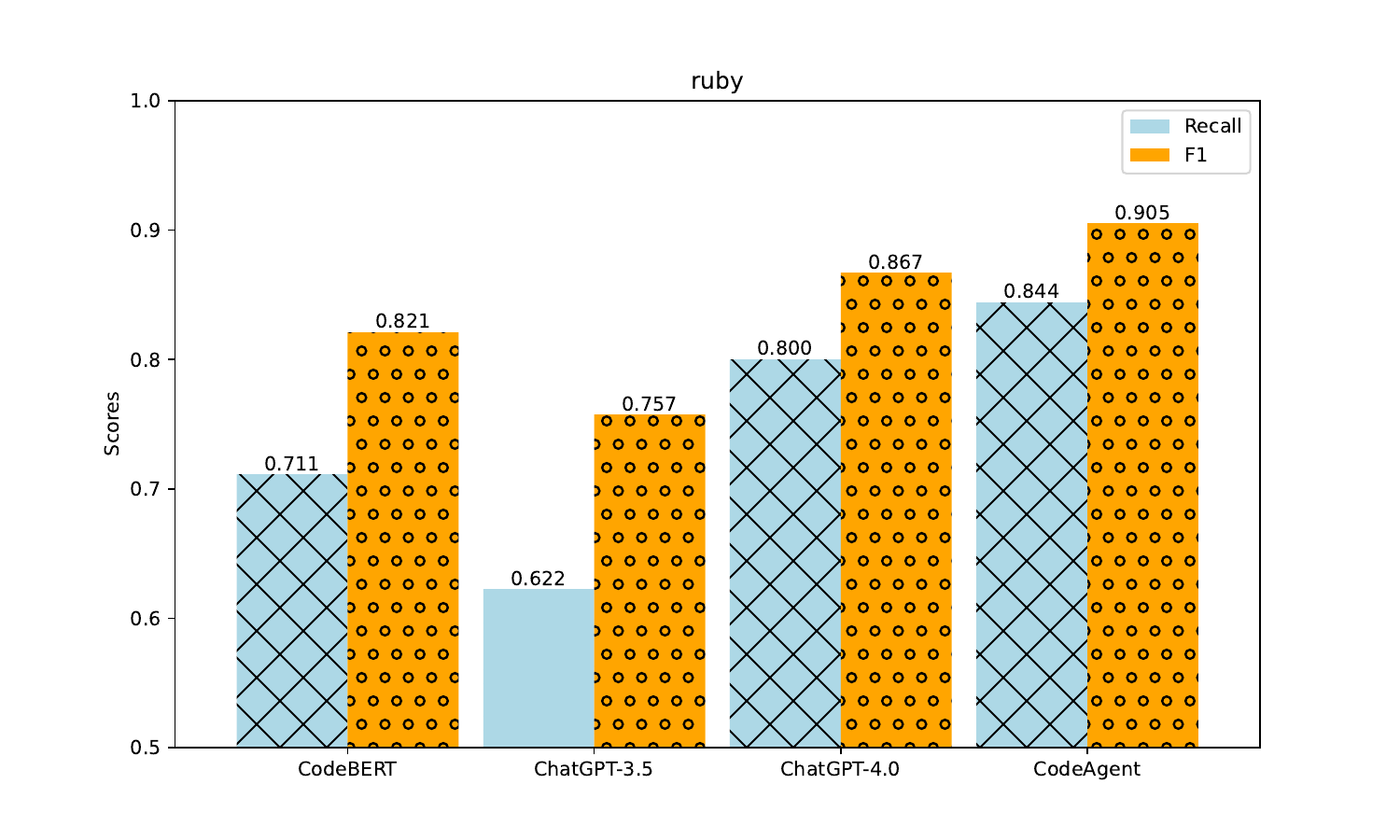}
        % \label{fig:mergedruby}
    }
    \caption{Comparison of models on the \textbf{closed} data across 9 languages on \textbf{CA task}.}
    \label{fig:closedmetricdetail}
\end{figure*}

\begin{figure*}[htbp]
    
    \centering
    \subfigure{
        \includegraphics[width=.3\linewidth]{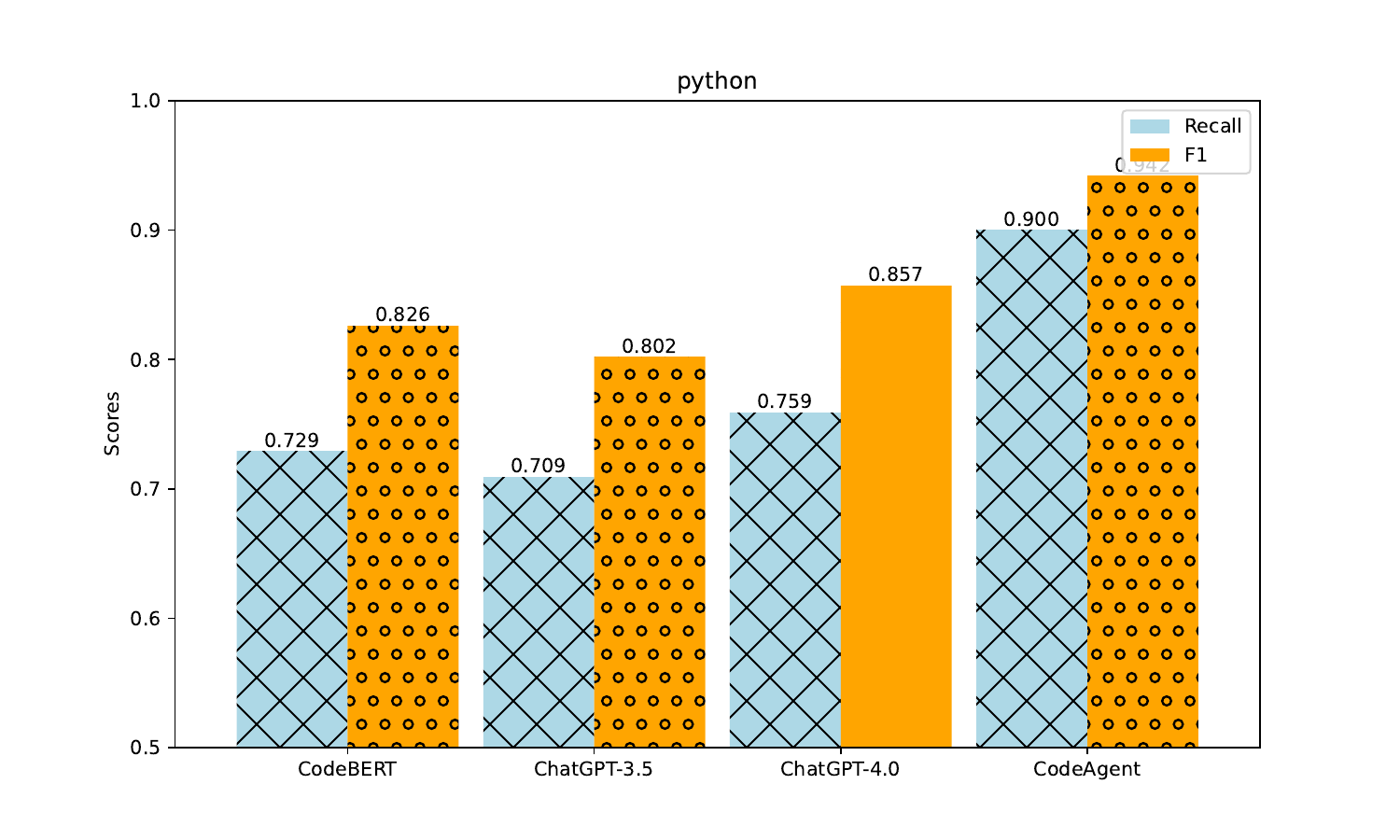}
        \label{fig:famergedpython}
    }% 
    \subfigure{
        \includegraphics[width=.3\linewidth]{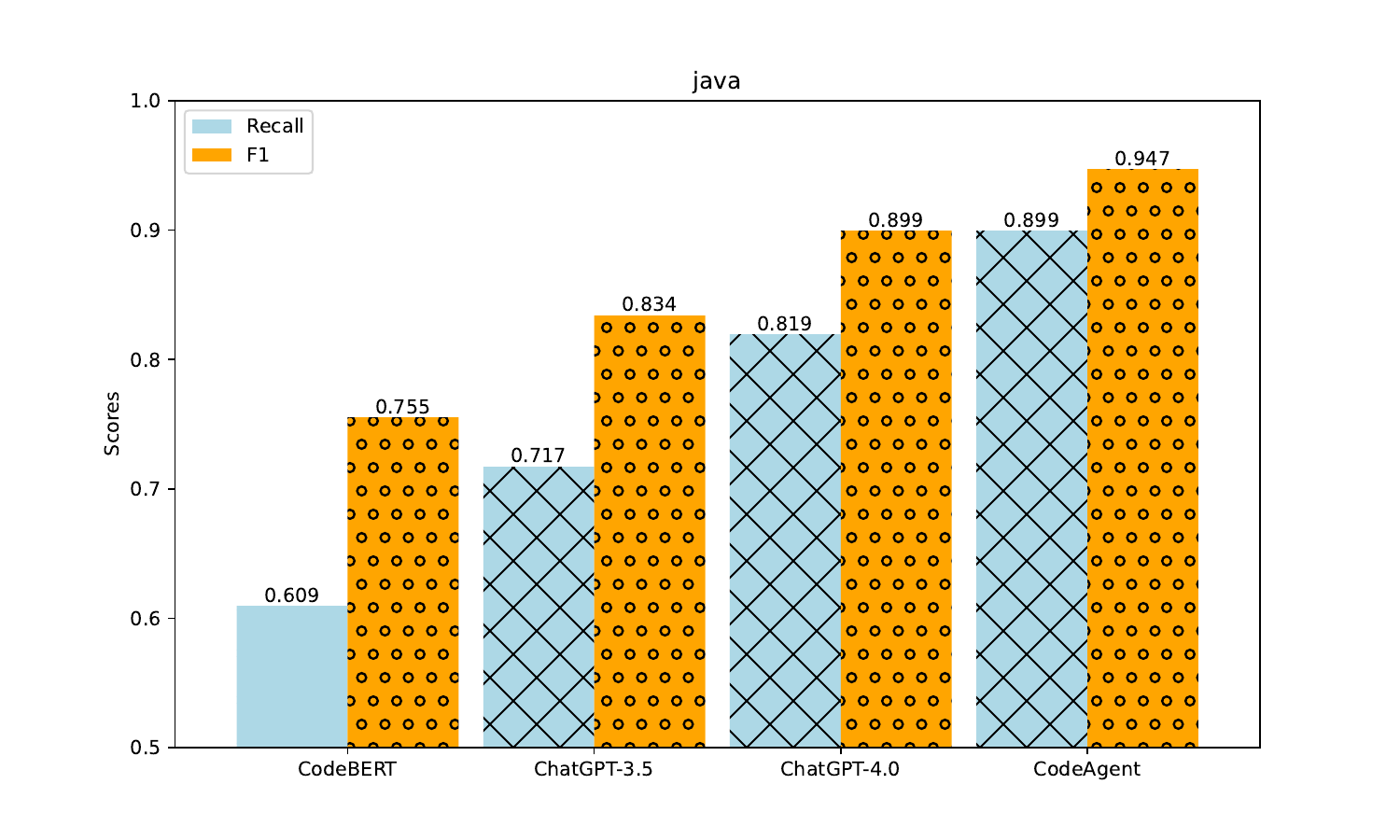}
        \label{fig:famergedjava}
    }
    \subfigure{
        \includegraphics[width=.3\linewidth]{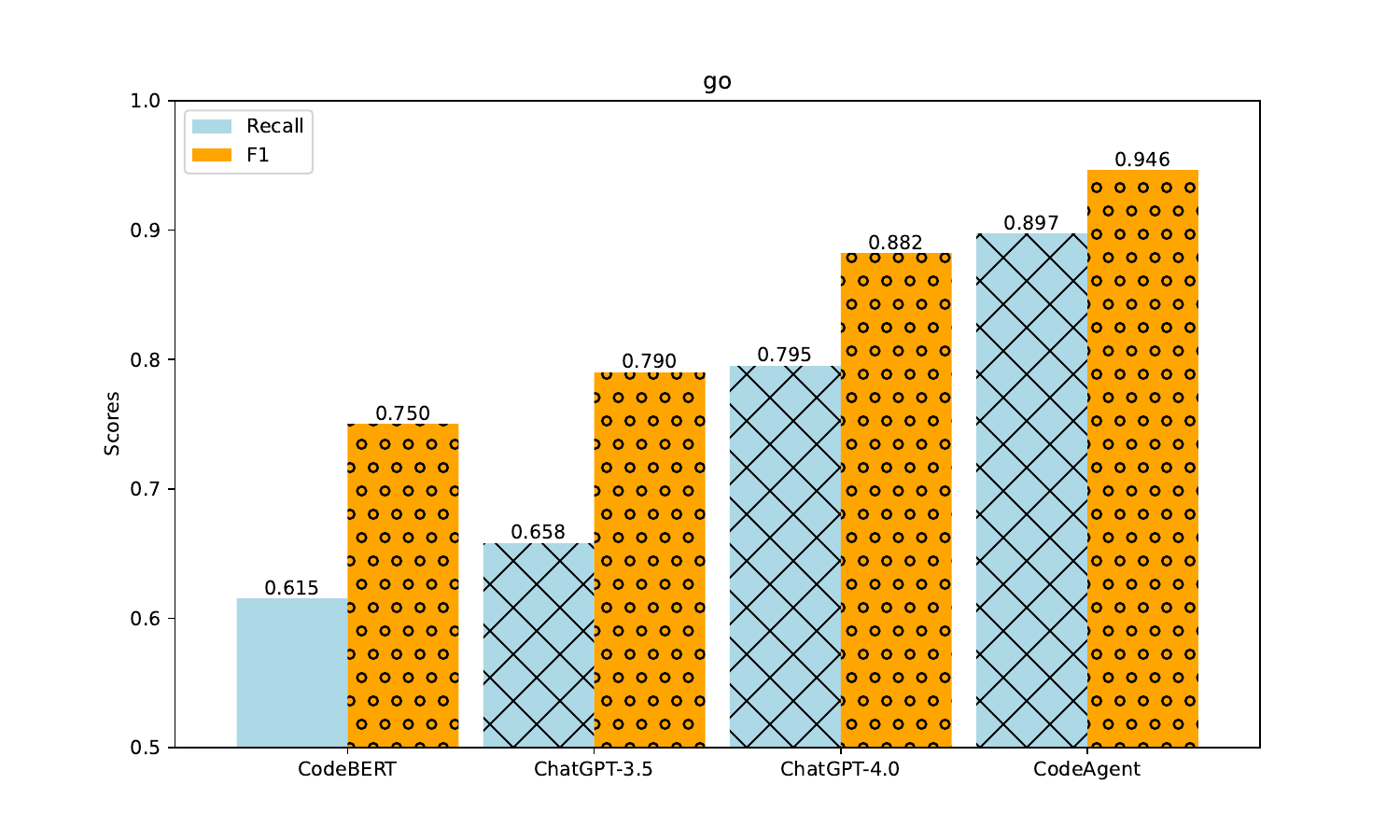}
        \label{fig:famergedgo}
    }
    \subfigure{
        \includegraphics[width=.3\linewidth]{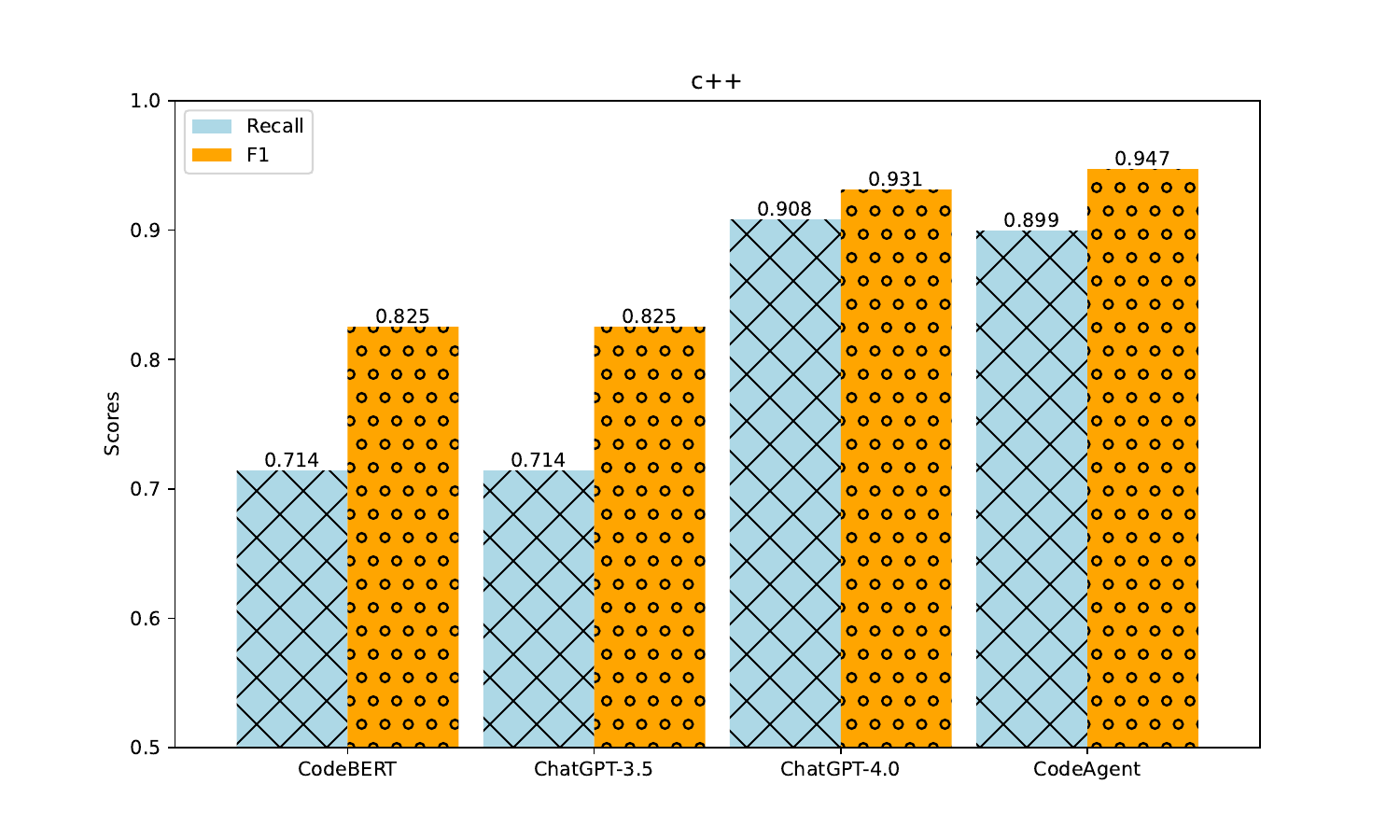}
        \label{fig:famergedc++}
    }
    \subfigure{
        \includegraphics[width=.3\linewidth]{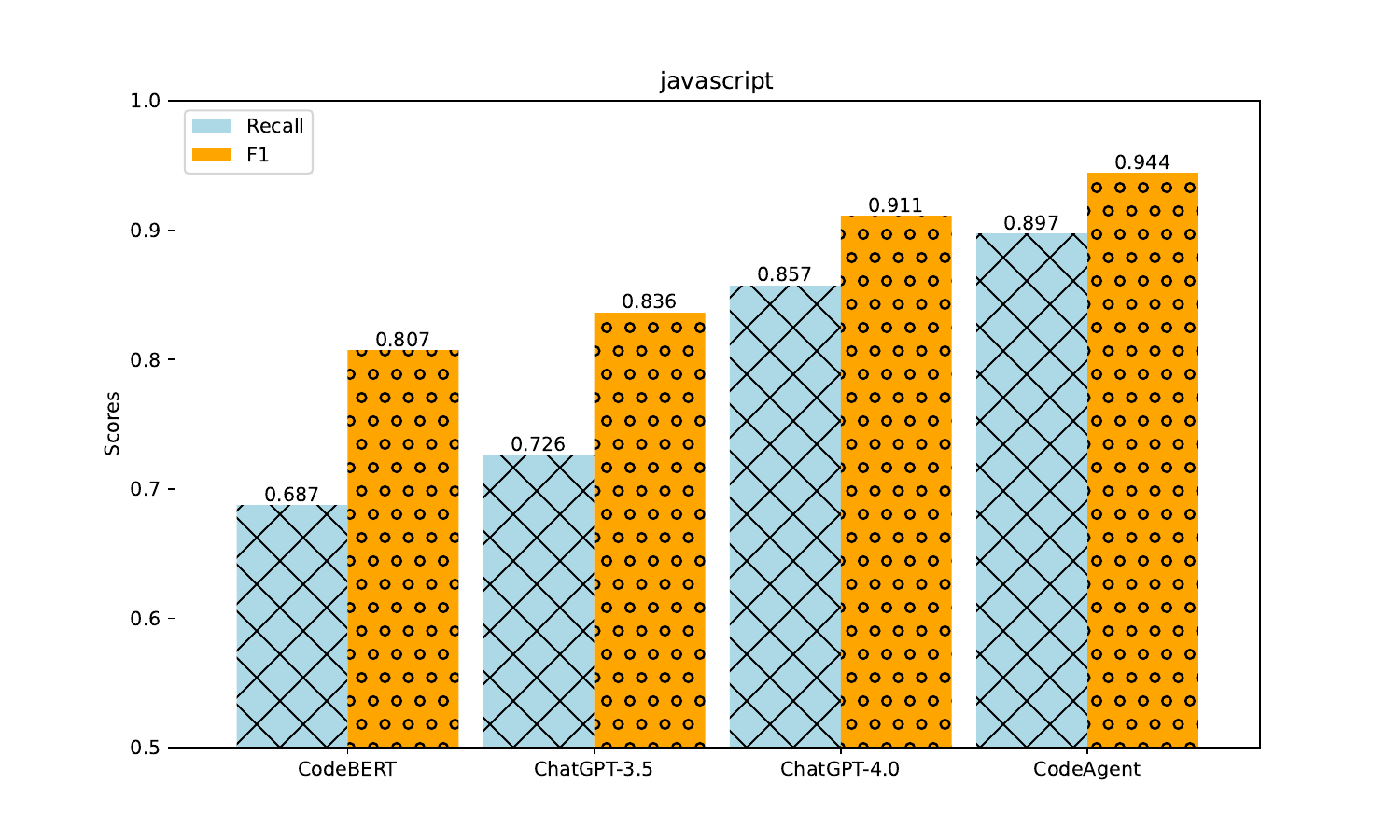}
        \label{fig:famergedjavascript}
    }
    \subfigure{
        \includegraphics[width=.3\linewidth]{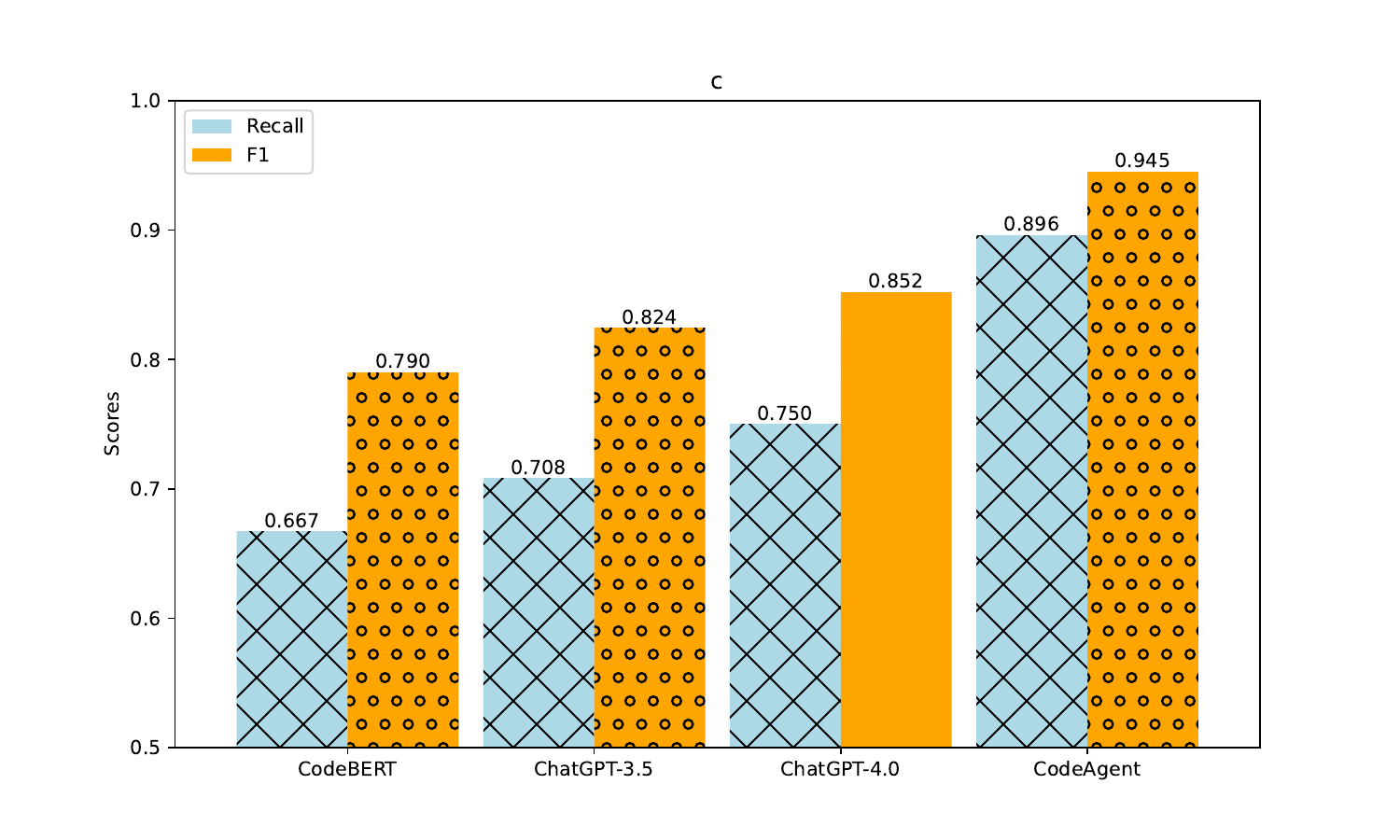}
        \label{fig:famergedc}
    }
    \subfigure{
        \includegraphics[width=.3\linewidth]{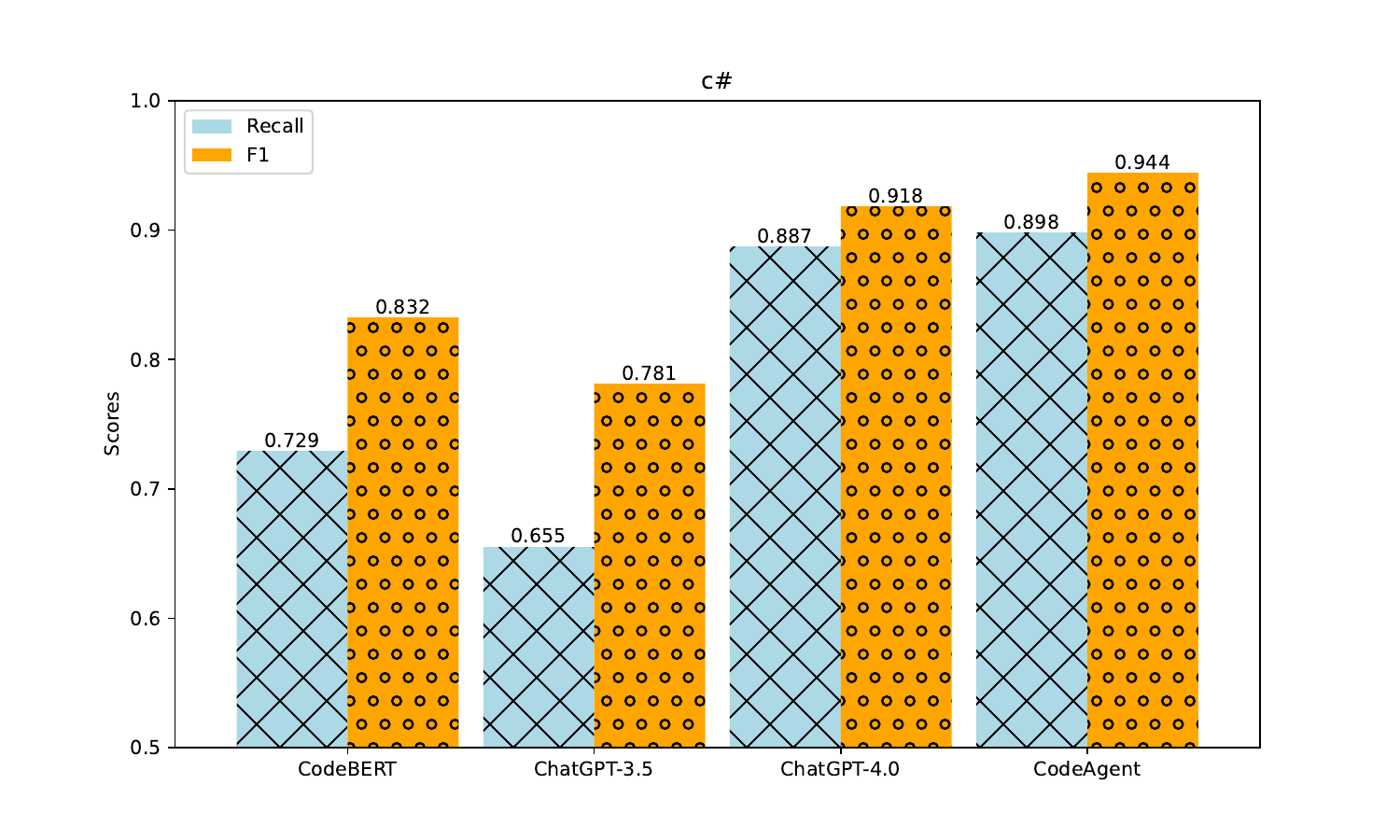}
        \label{fig:famergedcsharp}
    }
    \subfigure{
        \includegraphics[width=.3\linewidth]{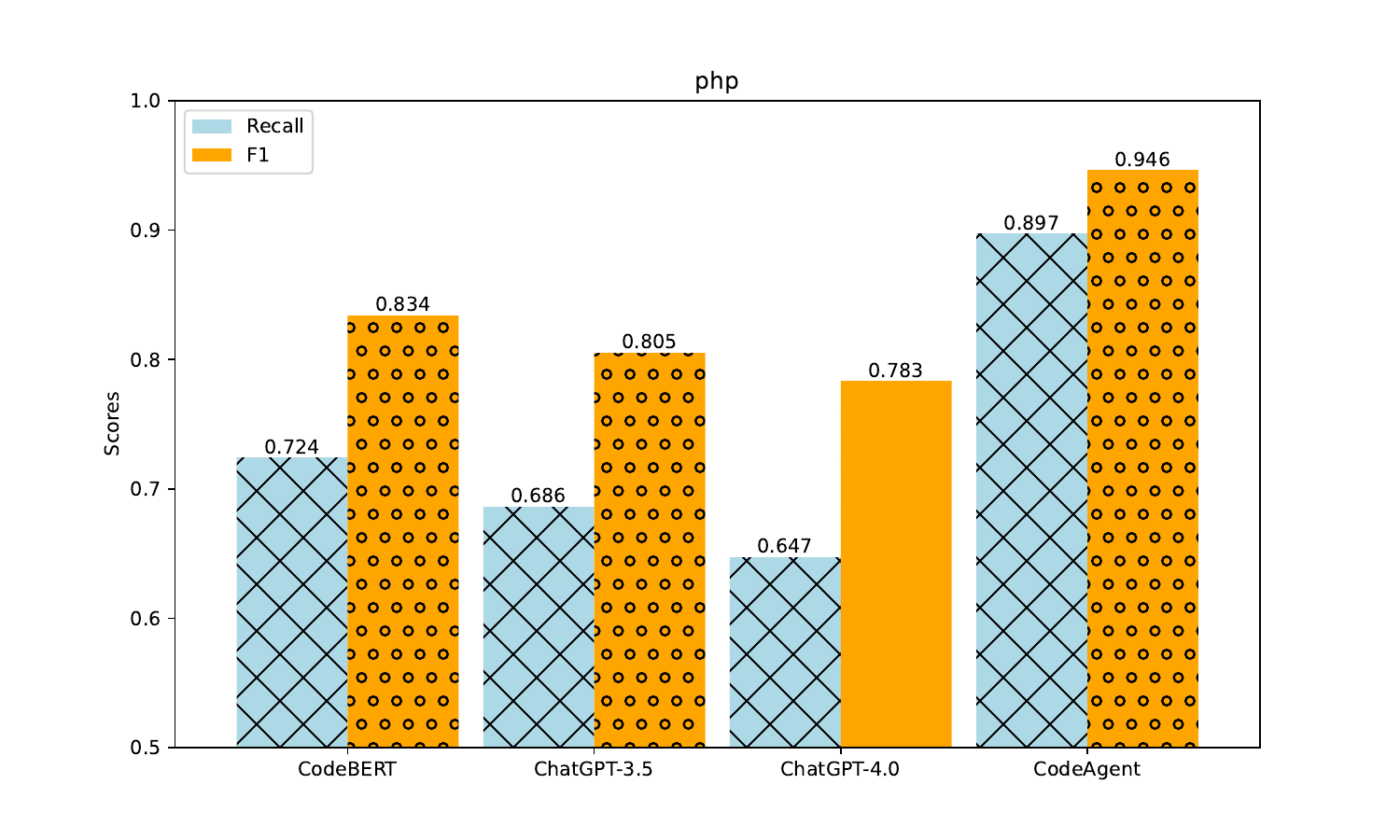}
        \label{fig:famergedphp}
    }
    \subfigure{
        \includegraphics[width=.3\linewidth]{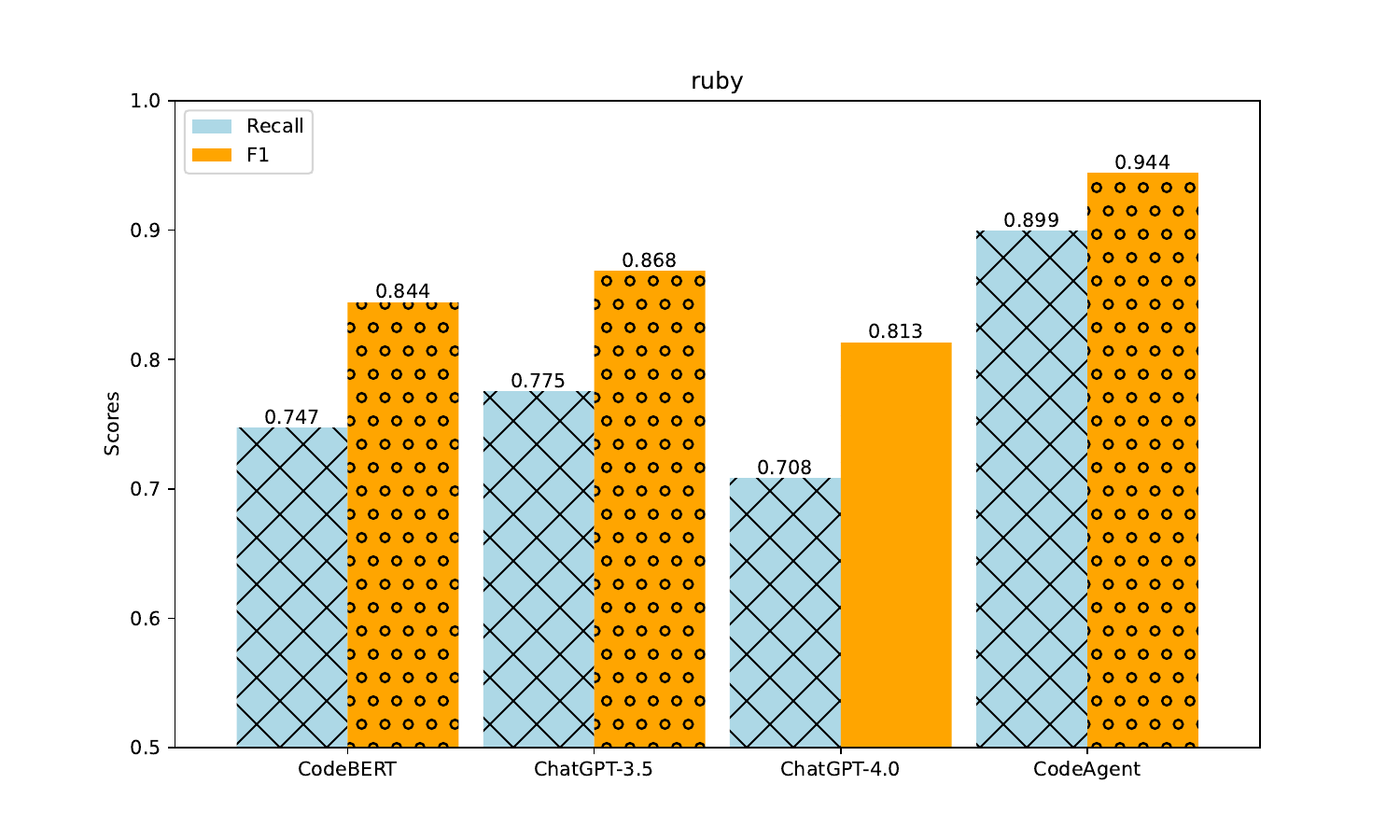}
        \label{fig:famergedruby}
    }
    \caption{Comparison of models on the \textbf{merged} data across 9 languages on \textbf{FA task}.}
    \label{famergedmetricdetail}
\end{figure*}

\begin{figure*}[htbp]
    \centering
    \subfigure{
        \includegraphics[width=.3\linewidth]{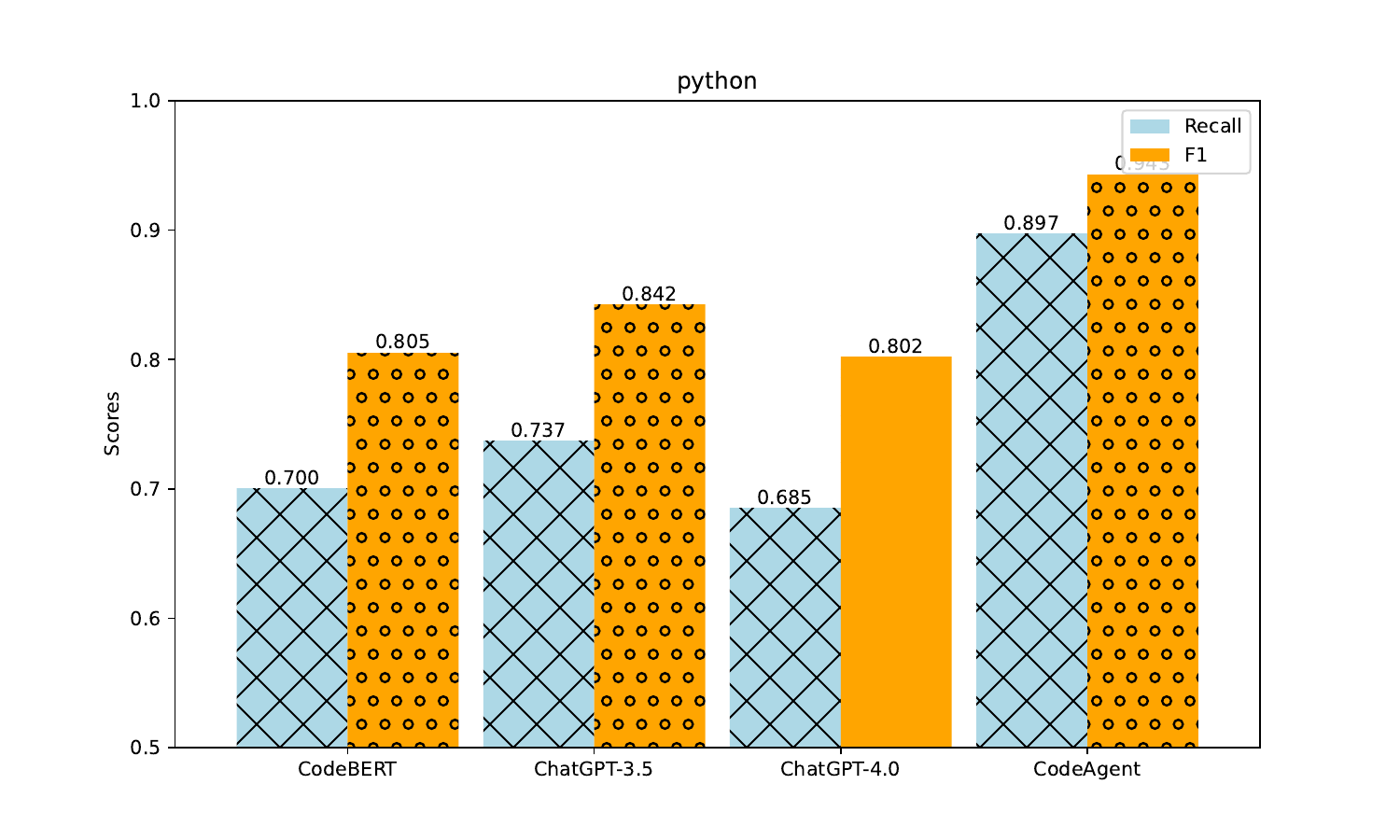}
        % \label{fig:mergedpython}
    }% 
    \subfigure{
        \includegraphics[width=.3\linewidth]{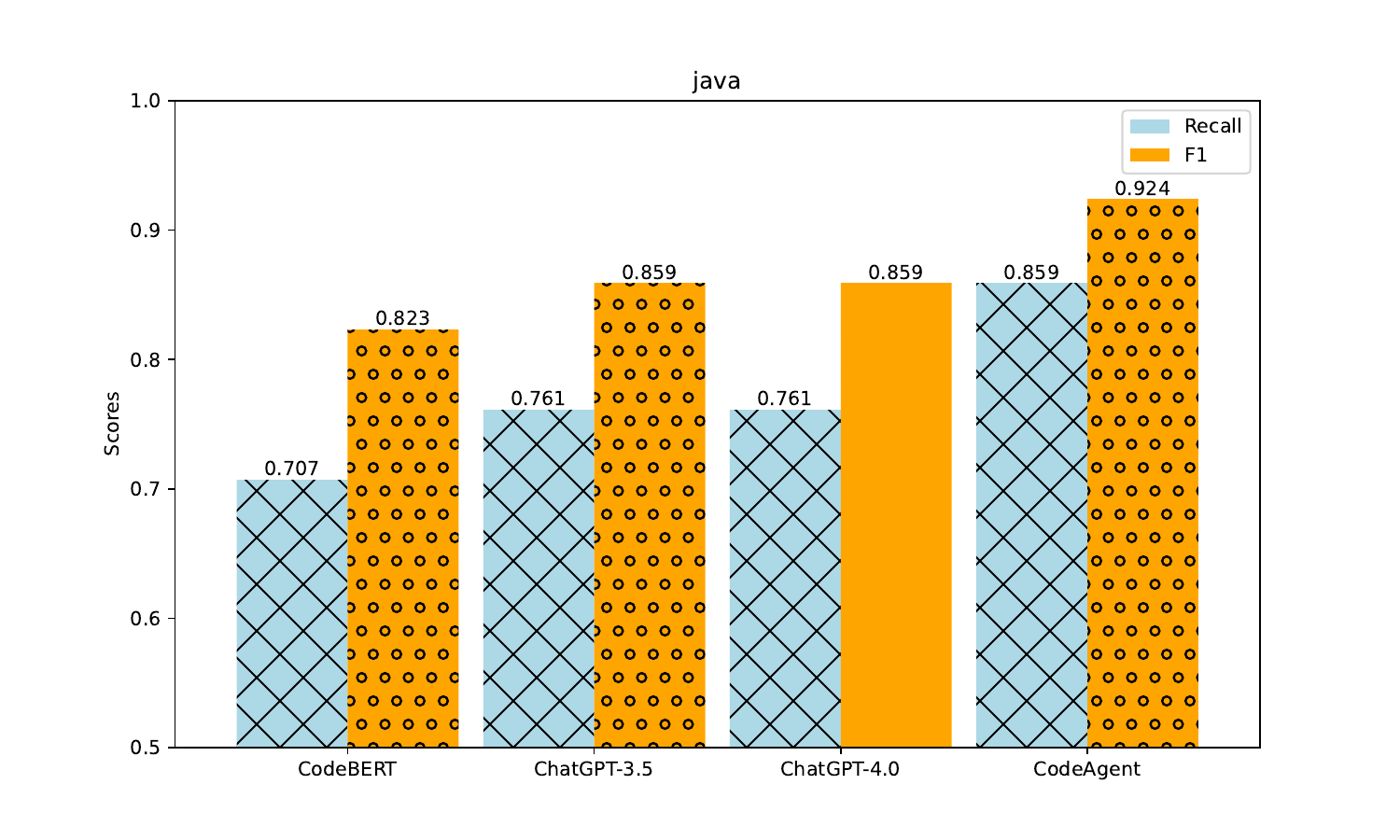}
        % \label{fig:mergedjava}
    }
    \subfigure{
        \includegraphics[width=.3\linewidth]{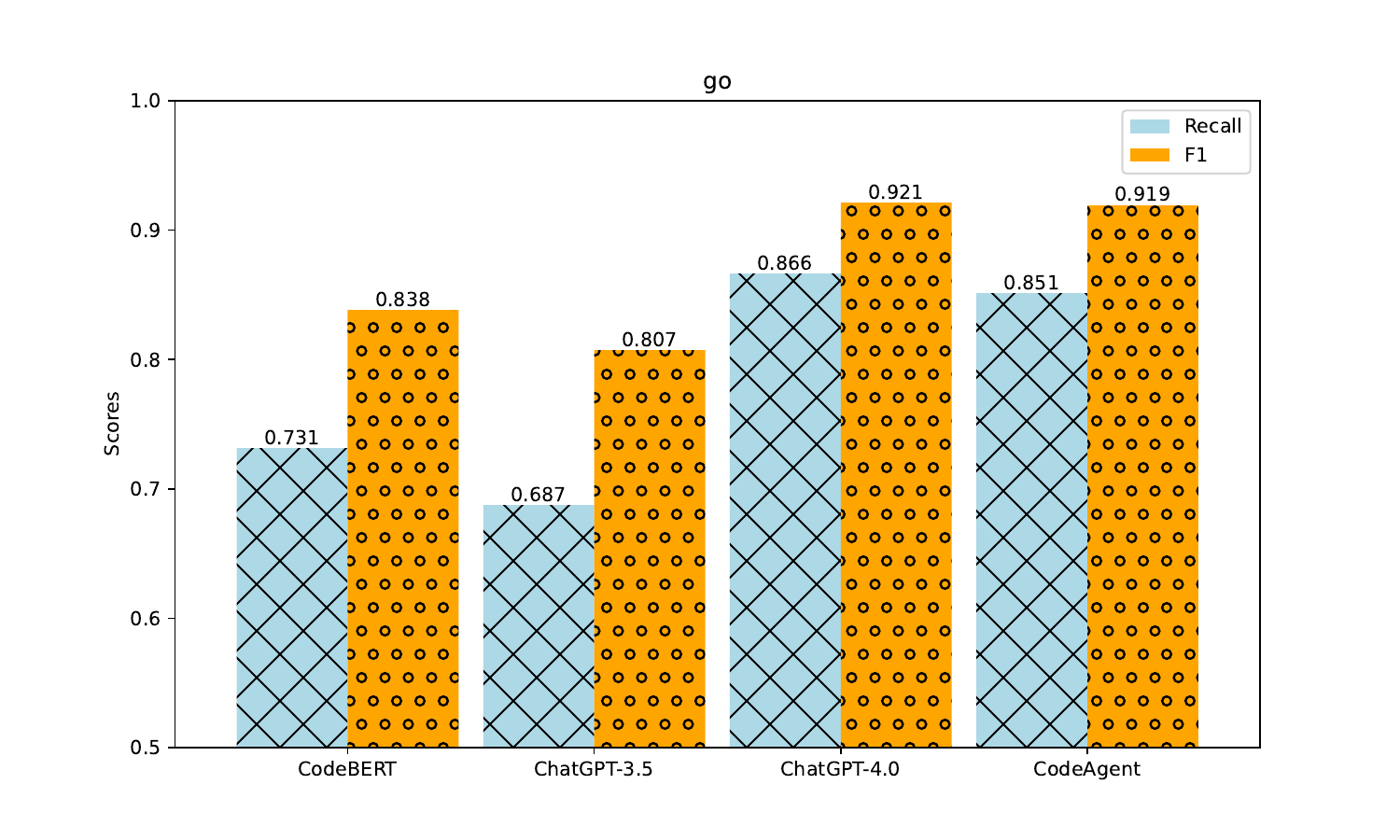}
        % \label{fig:mergedgo}
    }
    \subfigure{
        \includegraphics[width=.3\linewidth]{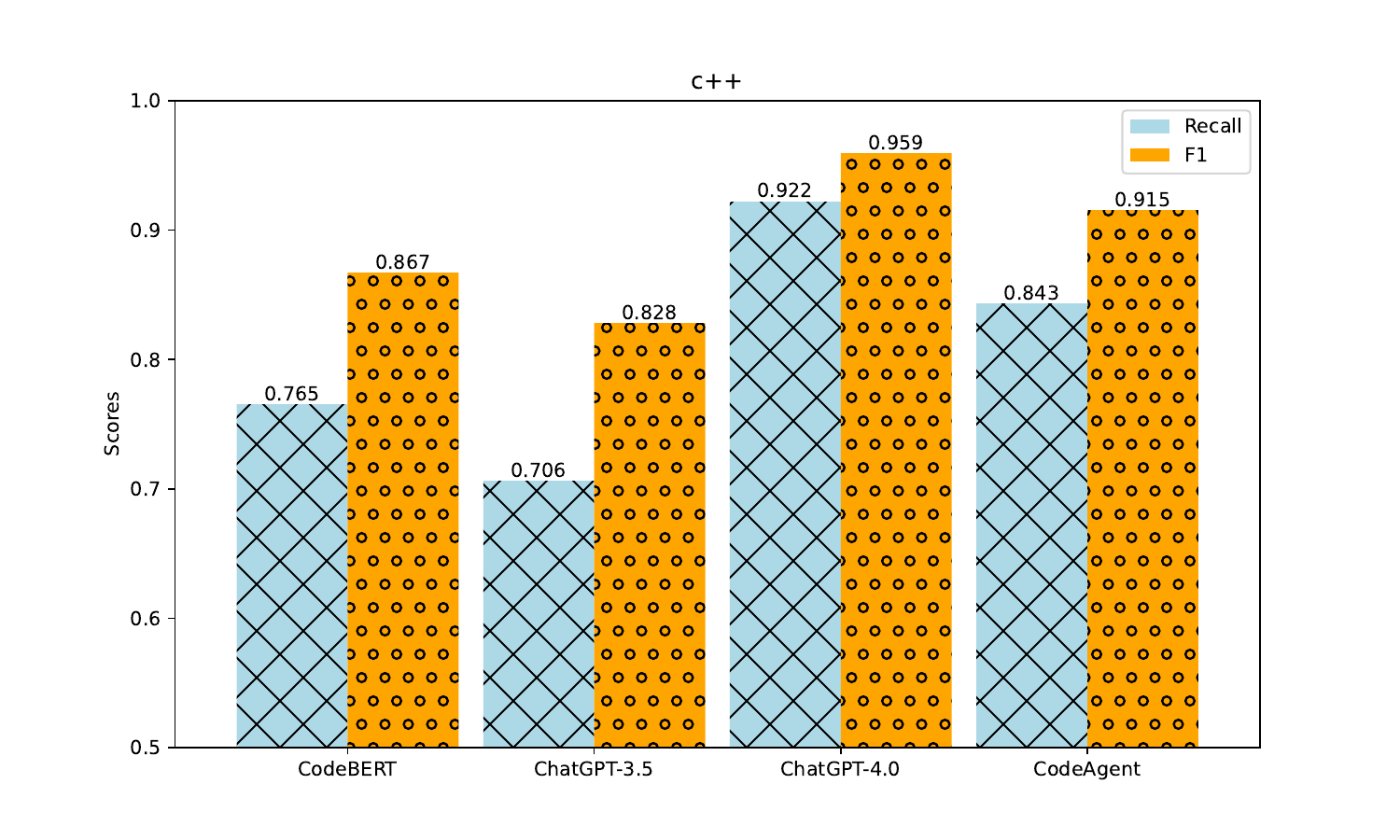}
        % \label{fig:mergedc++}
    }
    \subfigure{
        \includegraphics[width=.3\linewidth]{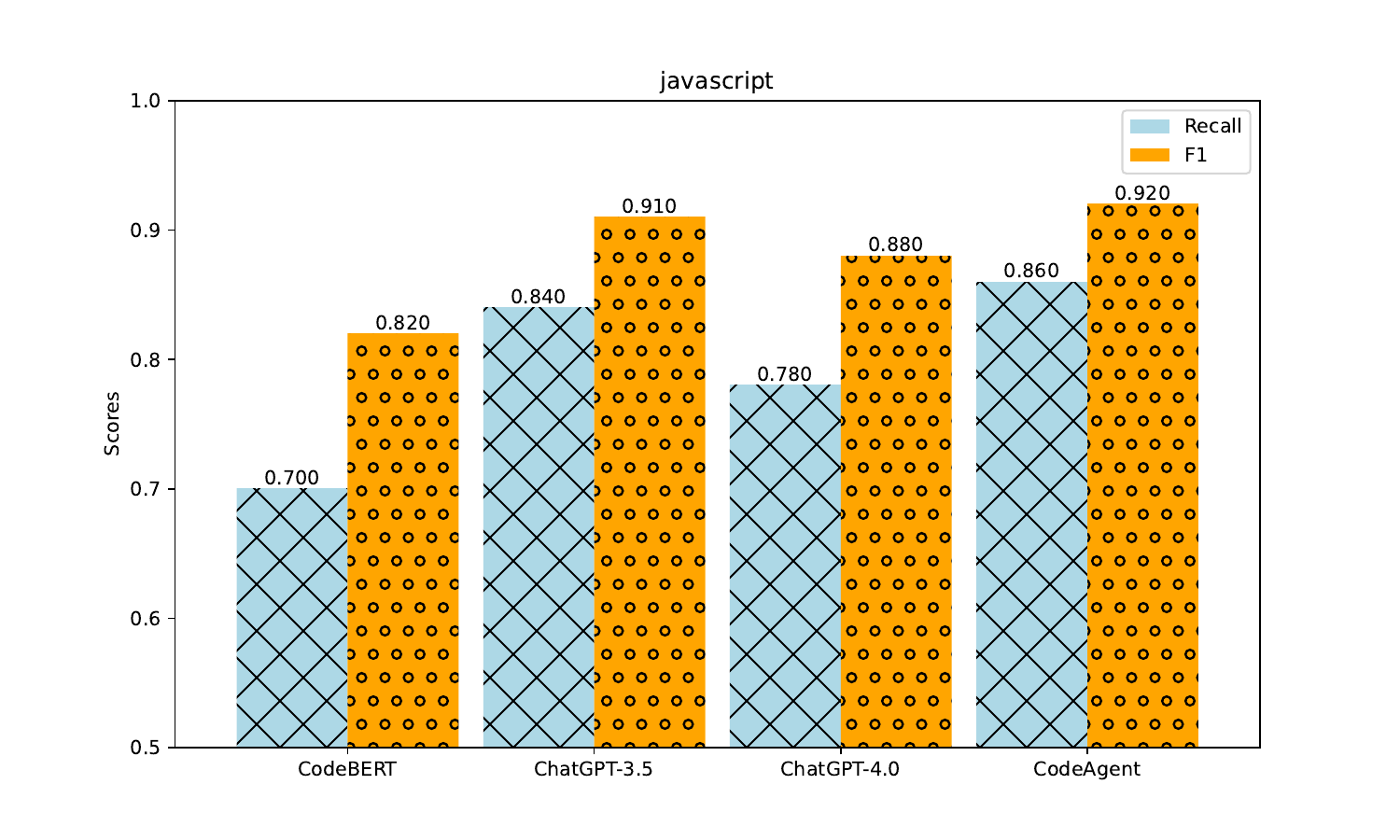}
        % \label{fig:mergedjavascript}
    }
    \subfigure{
        \includegraphics[width=.3\linewidth]{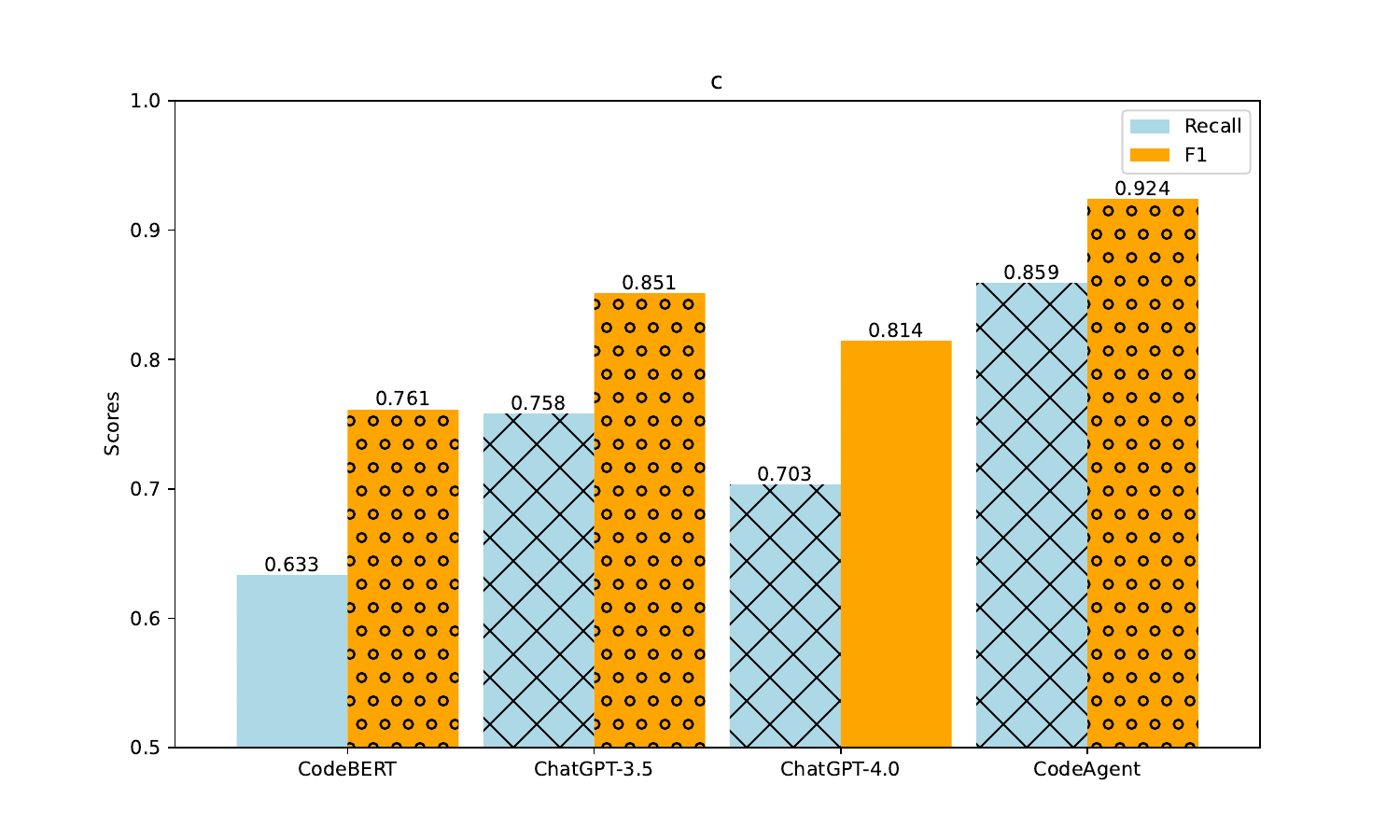}
        % \label{fig:mergedc}
    }
    \subfigure{
        \includegraphics[width=.3\linewidth]{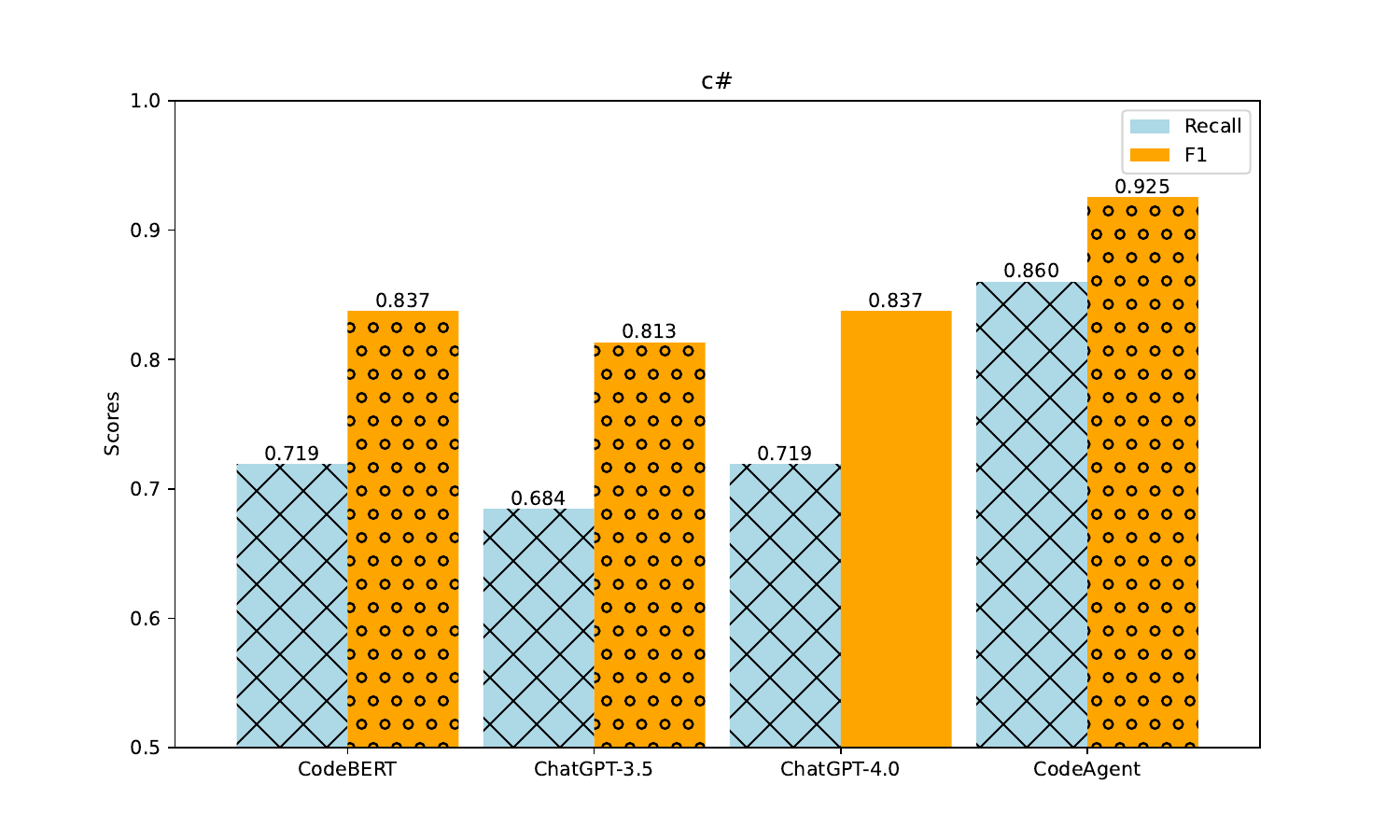}   
        % \label{fig:mergedcsharp}
    }
    \subfigure{
        \includegraphics[width=.3\linewidth]{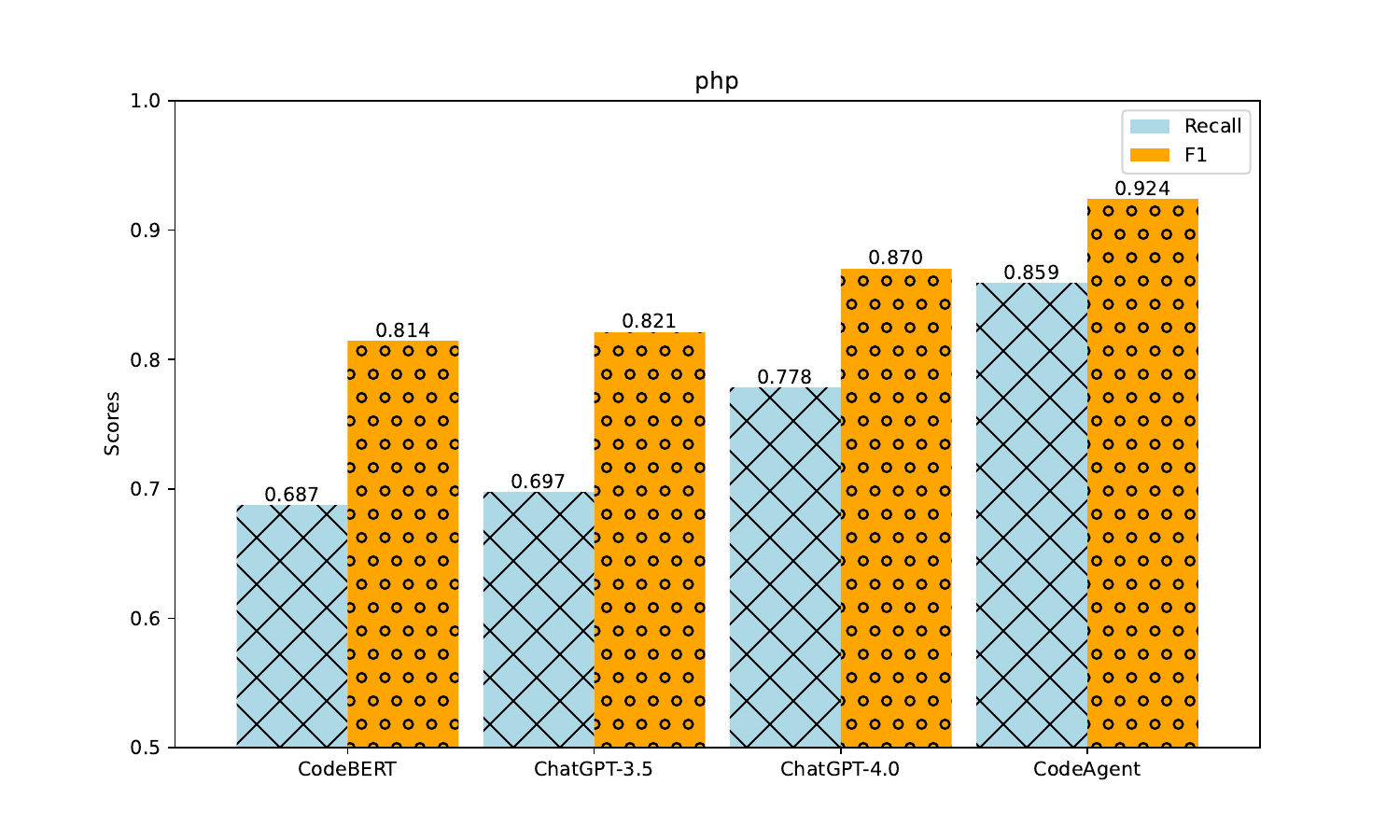}
        % \label{fig:mergedphp}
    }
    \subfigure{
        \includegraphics[width=.3\linewidth]{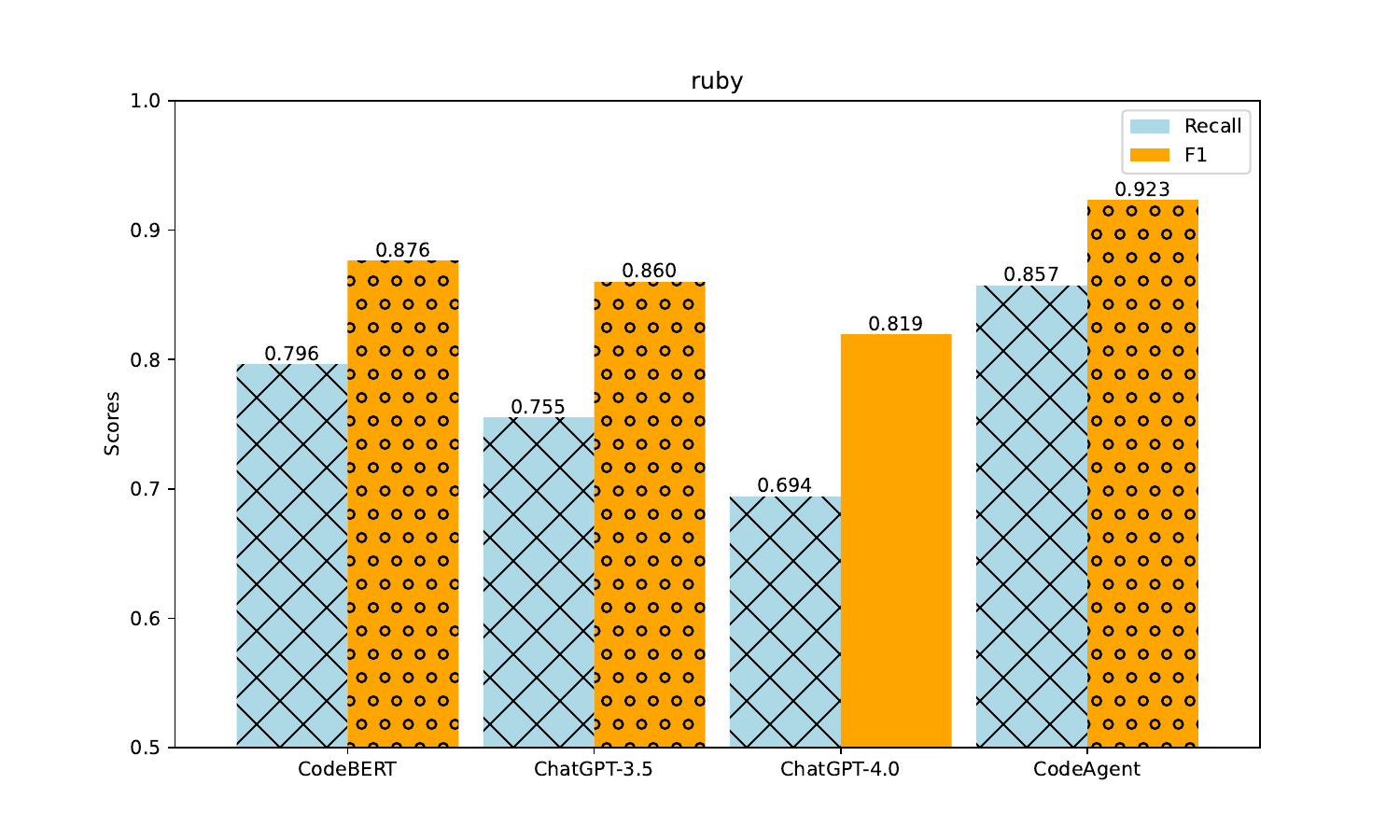}
        % \label{fig:mergedruby}
    }
    \caption{Comparison of models on the \textbf{closed} data across 9 languages on \textbf{FA task}.}
    \label{faclosedmetricdetail}
\end{figure*}

\section{Case Study}
As shown in Table~\ref{lab:correlation}, we can easily localize the figure numbers of case studies for specific programming languages.
\subsection{Performance on 9 languages}
\begin{table}[H]
\centering
\caption{Correlation Table between specific programming language and case study.}
\begin{tabular}{ll}
\hline
\begin{tabular}[c]{@{}l@{}}Programming\\ Language\end{tabular} & Figure No. \\ \hline
Python                                                         & \ref{fig:pythoncase}          \\
Java                                                           & \ref{fig:javacase}          \\
Go                                                             & \ref{fig:gocase}          \\
C++                                                            & \ref{fig:c++case}           \\
JavaScript                                                     & \ref{fig:javascriptcase}           \\
C                                                              & \ref{fig:ccase}           \\
C\#                                                            & \ref{fig:csharpcase}           \\
php                                                            & \ref{fig:phpcase}           \\
Ruby                                                           & \ref{fig:rubycase}           \\ \hline
\end{tabular}
\label{lab:correlation}
\end{table}

\begin{figure*}[htbp]
    \centering
    \includegraphics[width=0.9\linewidth]{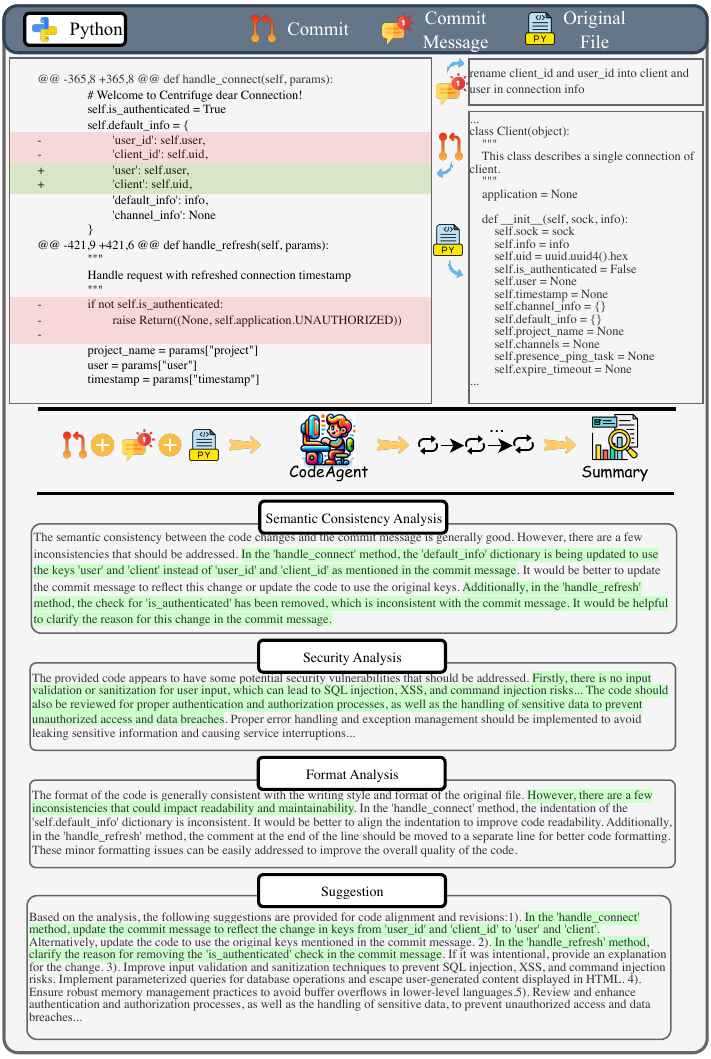}
    \caption{Example in Python project with sha value: ``0d3b94bbd7528ed8649cdab4370a62e508bad82c"}
    \label{fig:pythoncase}
\end{figure*}

\begin{figure*}[htbp]
    \centering
    \includegraphics[width=0.9\linewidth]{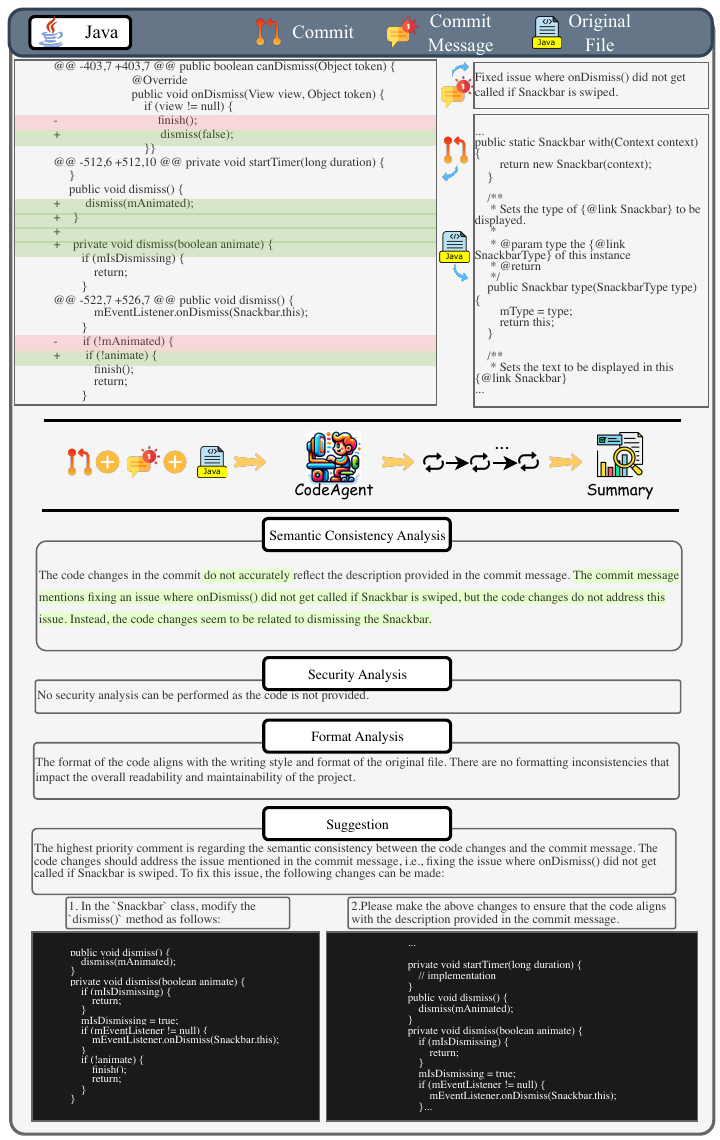}
    \caption{Example in Java project with sha value: ``6e3c6d17d943f5ac70b421653eb167e0c34b119f"}
    \label{fig:javacase}
\end{figure*}

\begin{figure*}[htbp]
    \centering
    \includegraphics[width=0.9\linewidth]{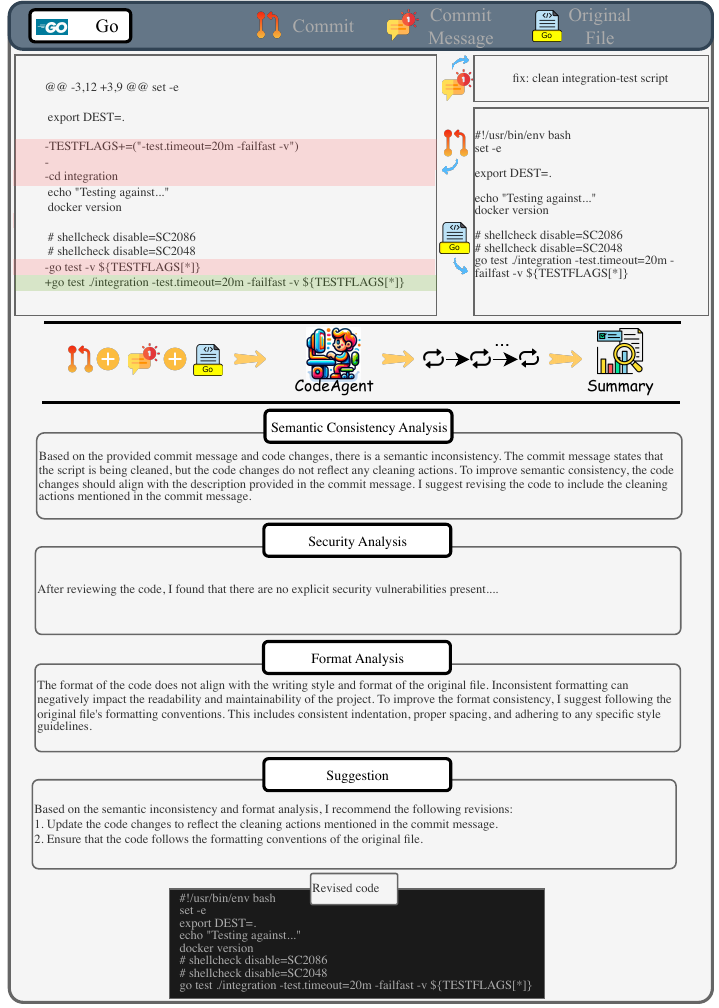}
    \caption{Example in Go project with sha value: ``a6b00b097b1a8140bd8b60177247dc830374c8ed"}
    \label{fig:gocase}
\end{figure*}
\begin{figure*}[htbp]
    \centering
    \includegraphics[width=0.9\linewidth]{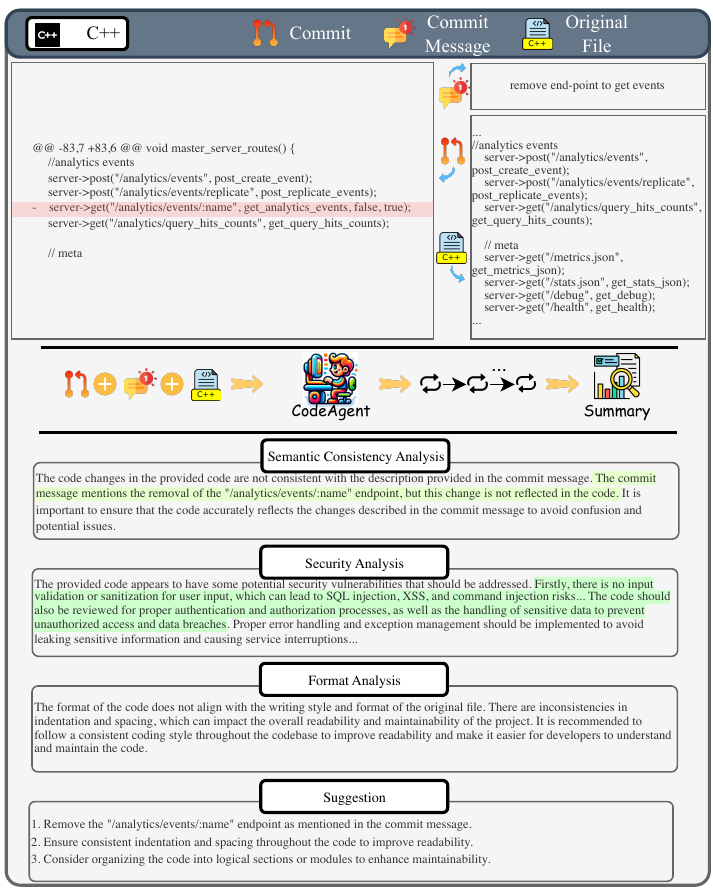}
    \caption{Example in C++ project with sha value: ``09e88c7c20347cdf513323fa8c616aba0a1b9ea7"}
    \label{fig:c++case}
\end{figure*}
\begin{figure*}[htbp]
    \centering
    \includegraphics[width=0.9\linewidth]{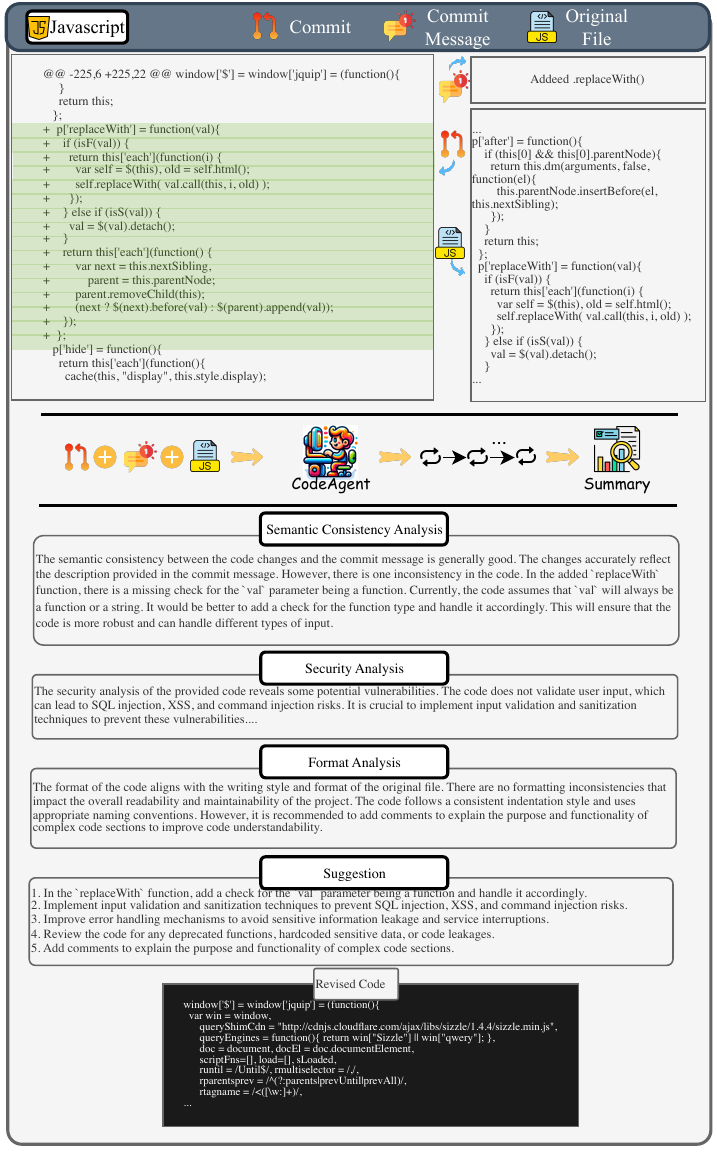}
    \caption{Example in JavaScript project with sha value: ``6e9a80d93b470aef4978677ed0b7c9996fddbe20"}
    \label{fig:javascriptcase}
\end{figure*}
\begin{figure*}[htbp]
    \centering
    \includegraphics[width=0.9\linewidth]{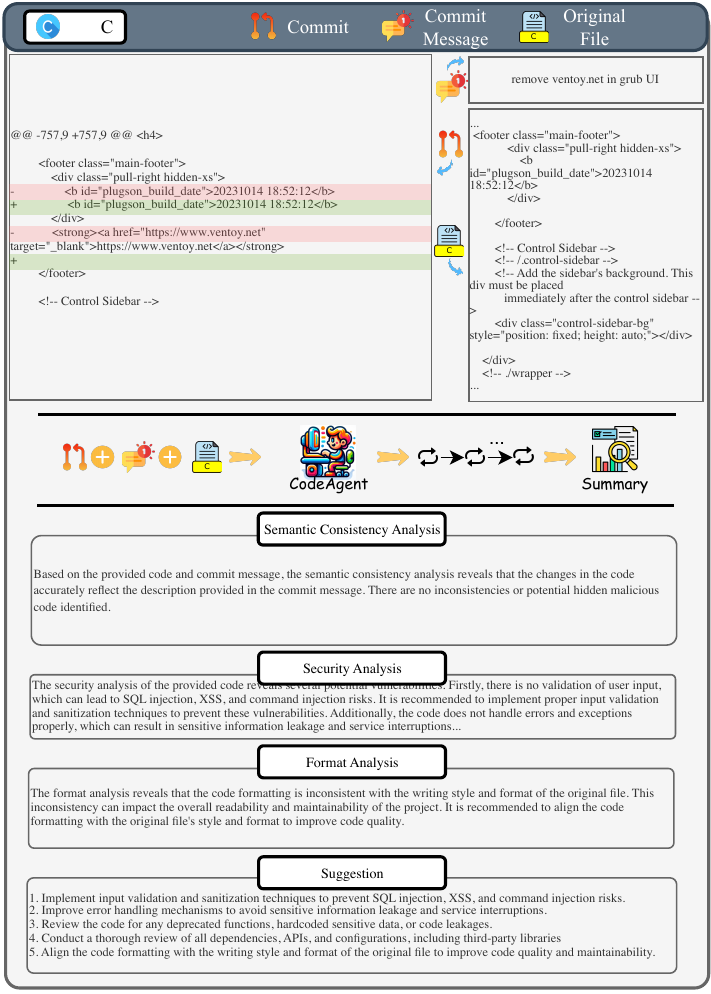}
    \caption{Example in C project with sha value: ``5f93cf43dd6f809d7927abb79884285ad77e8a58"}
    \label{fig:ccase}
\end{figure*}
\begin{figure*}[htbp]
    \centering
    \includegraphics[width=0.9\linewidth]{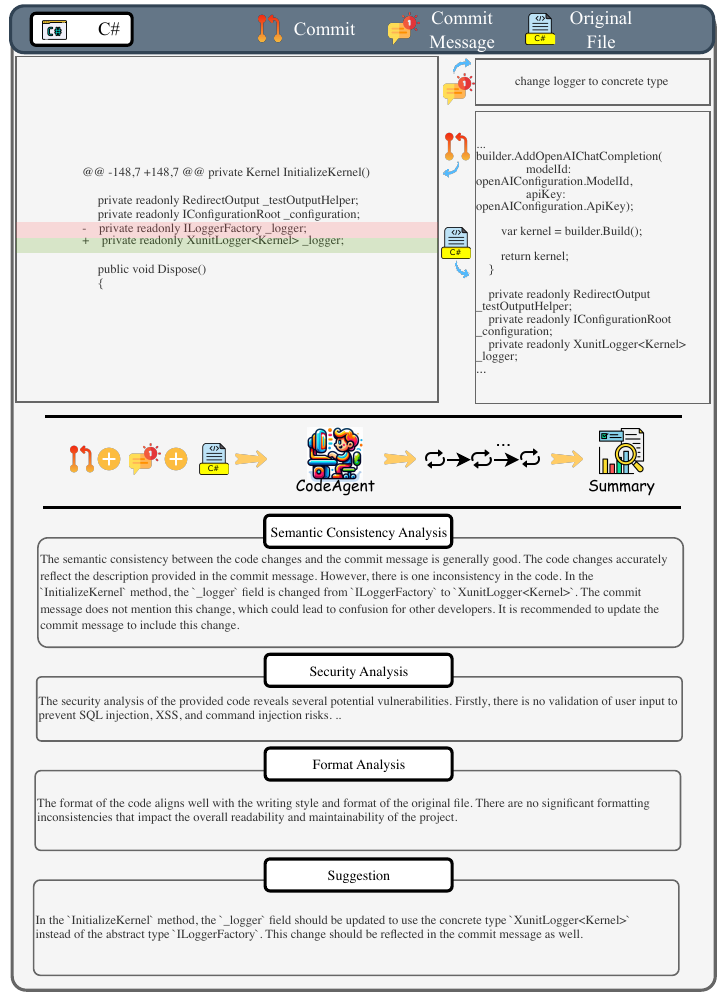}
    \caption{Example in C\# project with sha value: ``0e231c7a81b318e9eade972f7b877e66128ed67d"}
    \label{fig:csharpcase}
\end{figure*}

\begin{figure*}[htbp]
    \centering
    \includegraphics[width=0.9\linewidth]{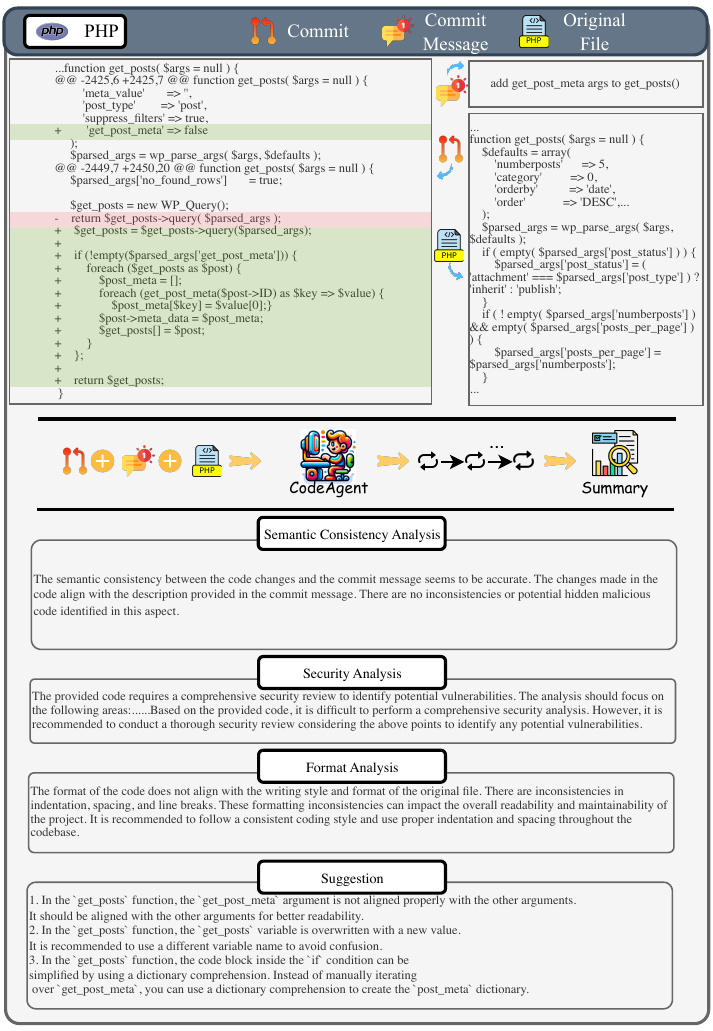}
    \caption{Example in PHP project with sha value: ``6679f059b9a0972a90df198471188da652f5c547"}
    \label{fig:phpcase}
\end{figure*}

\begin{figure*}[htbp]
    \centering
    \includegraphics[width=0.9\linewidth]{figures/casestudyfigs/php.pdf}
    \caption{Example in Ruby project with sha value: ``584f72e7f4c65066ccbd2183af76bf380b6eb974"}
    \label{fig:rubycase}
\end{figure*}

\subsection{Difference of \tool{}-3.5 and \tool{}-4.0} \label{sec:agentdiff}
\tool{}-3.5 and \tool{}-4.0 in this paper has no difference in general code review, however, \tool{}-4.0 is more powerful in processing long input sequences and logic reasoning. As shown in Figure~\ref{fig:javacodeagents}, we take one example of consistency detection between commit and commit message and find that \tool{}-4.0 diffs from \tool{}-3.5 in the detailed explanation. \tool{}-3.5 output a report with 15k lines while \tool{}-4.0 outputs a report with more than 17.7k lines. Detailed data is shown in \url{https://zenodo.org/records/10607925}.

\begin{figure*}
    \centering
    \includegraphics[width=0.9\linewidth]{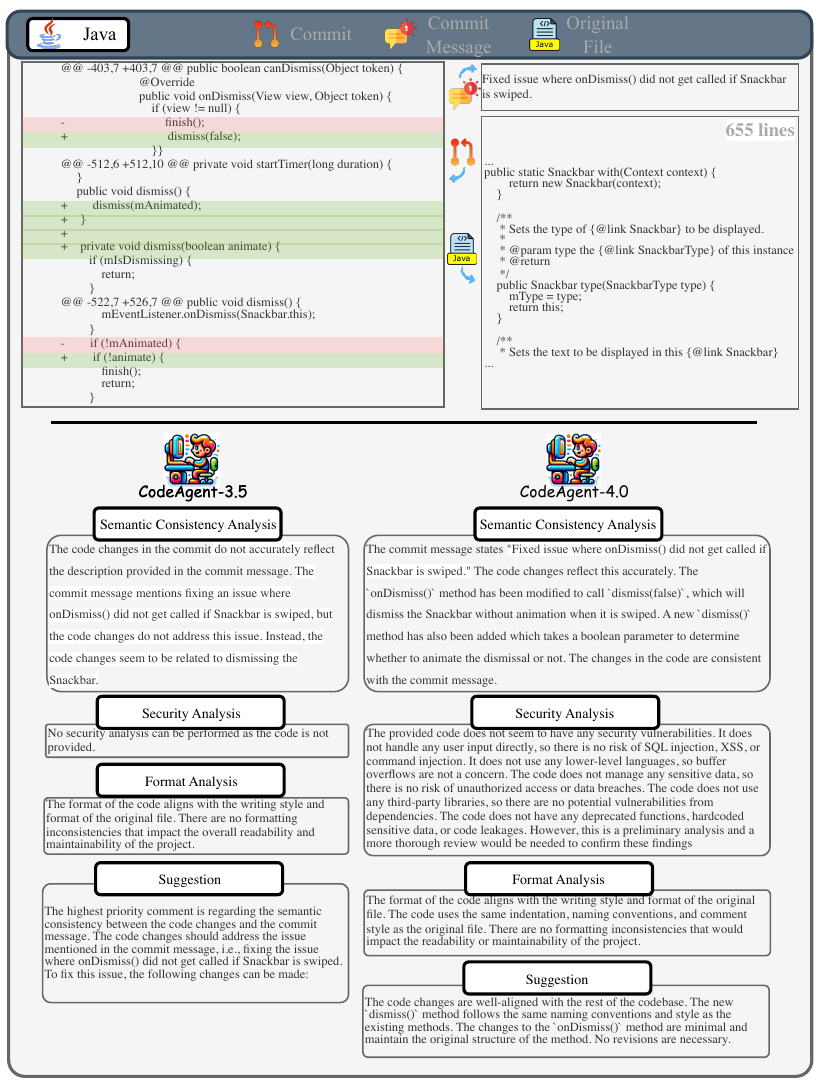}
    \caption{Comparison of code review between \tool{}-3.5 and \tool{}-4.0 on one Java project with sha value: ``6e3c6d17d943f5ac70b421653eb167e0c34b119f"}
    \label{fig:javacodeagents}
\end{figure*}

\section{Ablation study} \label{sec:abs}
In this section, we evaluate the performance of different parts in \tool{} in vulnerability analysis. \tool{} is based on chain-of-thought (COT) and large language model (a.k.a. GPT). As shown in Section~\ref{sec:va}, \tool{} outperforms baselines (a.k.a. CodeBERT, GPT-3.5, GPT-4.0) across 9 different languages. The performance mainly comes from the combination of COT and QA-Checker. Thus, we design an additional version called \tool{}$_{w/o}$, which means \tool{} without QA-Checker. Then, we use \tool{}$_{w/o}$ to do vulnerability analysis and compare with \tool{}. We first discuss about the result of \tool{}$_{w/o}$ and then discuss about comparison between \tool{} and \tool{}$_{w/o}$.

\paragraph{Overview of Vulnerabilities in \tool{}$_{w/o}$}
Table~\ref{tab:languagediffqachecker} presents the findings of \tool{}$_{w/o}$, a variant of the original \tool{}, in identifying vulnerabilities across different programming languages. The table showcases the number of `merged' and `closed' vulnerabilities in languages such as Python, Java, Go, C++, JavaScript, C, C\#, PHP, and Ruby. Notably, Python leads in the `merged' category with a total of 1,057 cases, of which 140 are confirmed, yielding a Rate$_{merge}$ of 13.25\%. In contrast, languages like Go and Ruby show lower vulnerability counts in both `merged' and `closed' categories. The table also includes Rate$_{close}$ and Rate$_{avg}$, providing insights into the effectiveness of vulnerability management across these languages.

\paragraph{Detailed Comparison between \tool{} and \tool{}$_{w/o}$}
Comparing the findings in Table~\ref{tab:languagediffqachecker} with those in Table~\ref{tab:languagediff}, we observe some notable differences in vulnerability detection by \tool{} and \tool{}$_{w/o}$. While the overall trend of higher `merged' vulnerabilities in Python and lower counts in Go and Ruby remains consistent, Table~\ref{tab:languagediffqachecker} shows a slight reduction in the Rate$_{merge}$ for most languages, suggesting a more conservative confirmation approach in \tool{}$_{w/o}$. Similarly, Rate$_{close}$ and Rate$_{avg}$ values in Table~\ref{tab:languagediffqachecker} generally indicate a lower proportion of confirmed vulnerabilities compared to Table~\ref{tab:languagediff}, reflecting potentially different criteria or efficacy in vulnerability assessment. These variations highlight the impact of QA-Checker in \tool{}.

\begin{table*}[htbp]
\centering
\caption{Vulnerable problems (\#) found by \tool{}$_{w/o}$ }
\label{tab:languagediffqachecker}
\resizebox{0.9\textwidth}{!}{%
\begin{tabular}{l|ccccccccc}
\hline

\hline

\hline

\hline
\tool{}&  Python & Java & Go & C++ & JavaScript & C & C\# & PHP & Ruby\\ 
\hline

\hline

merged (total\#) &  1,057 & 287 & 133 & 138 & 280 & 114 & 206 & 173 & 202\\

merged (confirmed\#) &  140 & 17 & 10 & 12 & 28 & 9 & 21 & 28 & 17\\

Rate$_{merge}$ &  13.25\% & 5.92\% & 7.52\% & 8.70\% & 10.00\% & 7.89\% & 10.19\% & \cellcolor{gray!30}16.18\% & 8.42\%\\

closed (total\#)&  248 & 97 & 74 & 56 & 112 & 146 &62 & 105 & 55\\ 

closed (confirmed\#) &  36 & 9 & 5 & 12 & 16 & 26 &7 & 15 & 5\\ 

Rate$_{close}$ &  14.52\% & 9.28\% & 6.76\% & \cellcolor{gray!30}21.43\% & 14.29\% & 17.81\% & 11.29\% & 14.29\% & 9.09\%\\

Total number (\#) &  1,305 & 384 & 207 & 194 & 392 & 260 &268 & 278 & 257\\ 
Total confirmed (\#) & 176 & 26 & 15& 24 &44 &35 & 28 & 43 & 22 \\ 
Rate$_{avg}$ & 13.49\% & 6.77\% & 7.25\% & 12.37\% & 11.22\% & 13.46\% & 10.45\% & \cellcolor{gray!30}15.47\% & 8.56\%\\

\hline

\hline

\hline

\hline
\end{tabular}%
 }
\end{table*}
\section{Cost statement}
As shown in Table \ref{tabcost}, \tool{}-4 has a higher query time and cost compared to \tool{}-3.5 due to its increased complexity and capabilities. We acknowledge that the integration of AI models and multi-agent systems may introduce complexity and require specialized knowledge for implementation.
\begin{table}[H]
\caption{Summarizes the average query time and cost per code review for \tool{}-3.5 and \tool{}-4.}
\label{tabcost}
\centering
\resizebox{0.9\columnwidth}{!}{
\begin{tabular}{lll}
\hline
Model                        & Query Time(min) & Cost in USD \\ \hline
\tool{}-3.5 & 3               & 0.017    \\
\tool{}-4   & 5               & 0.122    \\ \hline
\end{tabular}}
\end{table}

\section{Tool}
We develop a website for \tool{}, which is shown in Figure~\ref{fig:website}, and it is also accessable by visiting following link:

\begin{center}
    \url{https://code-agent-new.vercel.app/index.html}
\end{center}

\begin{figure}[htbp]
    \centering
    \includegraphics[width=1\columnwidth]{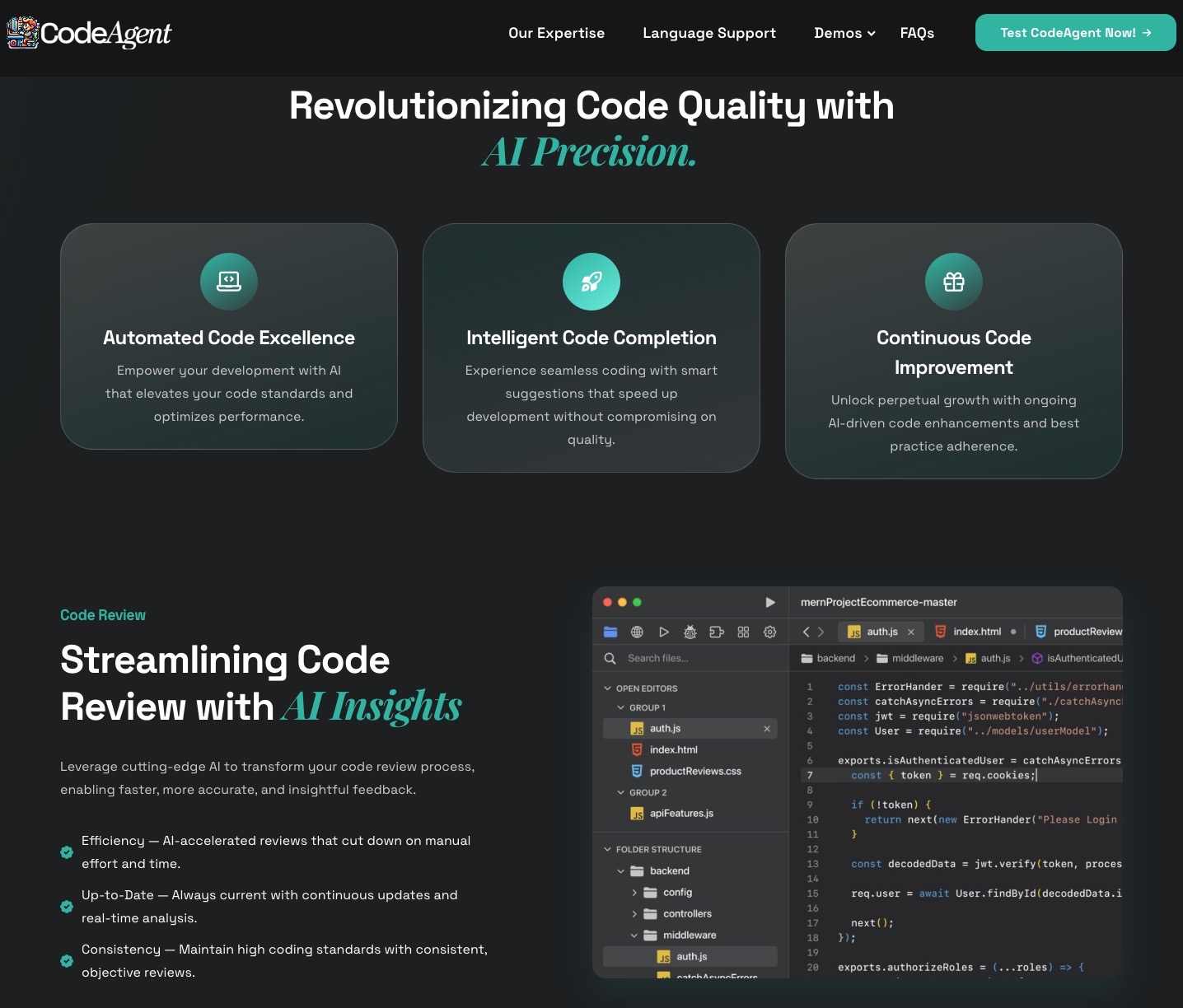}
    \caption{website of \tool{}}
    \label{fig:website}
\end{figure}

\end{document}